\def\R{\Omega}
\newcommand{\bm}[1]{\mbox{\boldmath $#1$}}
\def\k{\kappa_\xi}
\def\be{\begin{equation}}
\def\ee{\end{equation}}
\def\bea{\begin{eqnarray}}
\def\eea{\end{eqnarray}}
\def\bean{\begin{eqnarray*}}
\def\eean{\end{eqnarray*}}
\def\K{K}
\def\L{L}
\def\={\stackrel{\Sigma}{=}}
\newlength{\cellwidth}
\newcolumntype{P}[1]{>{\centering\arraybackslash}p{#1}}
\newcolumntype{M}[1]{>{\centering\arraybackslash}m{#1}}
\newcommand{\one}[1]{{}^{(1)}#1}
\newcommand{\two}[1]{{}^{(2)}#1}
\def\torsion{\varsigma}
\def\Hxi{\H_{\xi}}
\def\Hhat{\widehat{\H}}
\def\Kill{{\mathcal A}}
\def\H{{\mathcal H}}
\def\la{\langle}
\def\ra{\rangle}
\newtheorem{theorem}{Theorem}
\newtheorem{definition}{Definition}
\newtheorem{remark}{Remark}
\newtheorem{corollary}{Corollary}
\newtheorem{proposition}{Proposition}
\newtheorem{lemma}{Lemma}
\def\Kill{{\mathcal A}}
\def\H{{\mathcal H}}
\def\Hhat{\widehat{\H}}
\def\la{\langle}
\def\ra{\rangle}
\newcounter{mnotecount}[section]
\renewcommand{\themnotecount}{\thesection.\arabic{mnotecount}}
\newcommand{\mnote}[1]
{\protect{\stepcounter{mnotecount}}$^{\mbox{\footnotesize
$
\bullet$\themnotecount}}$ \marginpar{
\raggedright\tiny\em
$\!\!\!\!\!\!\,\bullet$\themnotecount: #1} }
\newcommand{\mnotex}[1]
{\protect{\stepcounter{mnotecount}}$^{\mbox{\footnotesize $\bullet$\themnotecount}}$
\marginpar{
\raggedright\tiny\em
$\!\!\!\!\!\!\,\bullet$\themnotecount: #1} }
\newcommand{\ol}[1]{\overline{#1}{}}
\newcommand{\eq}[1]{(\ref{#1})}
\newcommand{\mcL}{{\pounds}}
\begin{document}

\title{Multiple Killing Horizons: the initial value formulation for $\Lambda$-vacuum}
\author[1]{Marc Mars}
\author[2]{Tim-Torben Paetz}
\author[3]{Jos\'e M. M. Senovilla}
\affil[1]{Instituto de F\'isica Fundamental y Matem\'aticas, Universidad de Salamanca, Plaza de la Merced s/n, 37008 Salamanca, Spain}
\affil[2]{Gravitational Physics, University of Vienna, Boltzmanngasse 5, 1090 Vienna, Austria}
\affil[3]{Departamento de F\'isica Te\'orica e Historia de la Ciencia, Universidad del Pa\'is Vasco UPV/EHU, Apartado 644, 48080 Bilbao, Spain}

\maketitle

\vspace{-0.2em}

\begin{abstract}
In \cite{mps} we have introduced the notion of  ``Multiple Killing Horizon'' and analyzed some of its general properties. Multiple Killing Horizons are Killing horizons for two or more linearly independent Killing vectors simultaneously.
In this paper we  focus on the vacuum case, possibly with cosmological constant, and study the emergence of Multiple Killing Horizons
in terms of characteristic initial value problems for two transversally intersecting null hypersurfaces. As a relevant outcome, a more general definition of Near Horizon Geometry is put forward. This new definition avoids the use of Gaussian null coordinates associated to the corresponding degenerate Killing vector and thereby allows for inclusion of its fixed points.
\end{abstract}

\section{Introduction}

In \cite{mps}  we have introduced \emph{Multiple Killing Horizons (MKHs)}: these are null hypersurfaces (or portions thereof) which are
 simultaneously the Killing horizons of two or more independent Killing vectors.  The {\em order} of a MKH is the number of linearly independent such Killing vectors.
In \cite{mps} we focused on basic concepts and properties of MKHs. In particular we derived
an equation which we called \emph{master equation}, and which is satisfied by the proportionality function between
two  Killing vectors on their common horizon.
We showed that there are two distinct types of MKHs, fully degenerate or not, the former having all surface gravities vanishing.

This paper is devoted to an analysis of the occurrence of MKHs if field equations are imposed. Specifically, we shall impose the Einstein's vacuum field
equations, possibly with a cosmological constant $\Lambda$ of any sign.
As Killing horizons are null hypersurfaces, $\Lambda$-vacuum spacetimes with MKHs can be generated in terms of a characteristic initial value problem.
While the analysis in \cite{mps} led, in form of the master equation, to necessary conditions for the existence of MKHs, this approach permits the derivation of necessary and sufficient
conditions on the characteristic data to generate a vacuum spacetime with a MKH.

There is a strong relationship between MKHs and near-horizon geometries, as discussed in \cite{mps2}. To put things in perspective, in \cite{LRS} the relation between near-horizon geometries and stationary black-hole holographs has been studied.
In this context the master equation is established  as a necessary and sufficient condition on the bifurcation surface of a bifurcate horizon
to be a non-degenerate MKH. Our purpose is to study the characteristic initial-value problem systematically and to analyze in
detail the emergence of Killing vectors and MKHs, including fully degenerate ones, as well as their order.

The paper is organized as follows:
In Section~\ref{sec2} we recall the most important definitions and results from \cite{mps,mps2}.
In Section~\ref{sec4} we study how $\Lambda$-vacuum spacetimes with MKHs arise via a characteristic initial value problem.
More specifically, in Section~\ref{sec_initial} we recall the characteristic initial value problem for two transversally intersecting
null hypersurfaces \`a la Rendall \cite{rendall}.
In Section~\ref{sec_KIDs} we recall the Killing Initial Data (KID) equations for this type of initial value problem \cite{KIDs}.
The KID equations are analyzed in Section~\ref{sec_bifurcate} for characteristic data which generate a vacuum spacetime
which admits a bifurcate Killing horizon.

This provides the basis to analyze  in Section~\ref{sec_non-deg} the emergence of bifurcate horizons which contain a (necessarily non-fully degenerate) MKH.
It turns out that the master equation considered on
the bifurcation surface
provides a necessary and sufficient criterion in this setting.
The remaining of the section contains several particular situations of relevance where non-degenerate MKHs arise. In particular,
in Section~\ref{sec_non_deg3} we focus on the case where the non-degenerate MKH is at least of order 3. It turns out that
in that case the spacetime needs to have a couple of additional Killing vectors.
In Section~\ref{sec_vanish_torsion} we analyze the emergence of MKH in  the case where the torsion one-form vanishes on the bifurcation surface (see section \ref{sec4} for definitions).
In particular we end up with data for Minkowski and (Anti-)de Sitter spacetime, respectively, if, in addition, the birfurcation surface is maximally symmetric.

We will show in Section~\ref{sec_non_deg_3+1} that, in $3+1$-dimensions, vacuum spacetimes with a bifurcate horizon which admits a MKH of order 3 do not exist, while examples where it is of order 4  are provided by maximally symmetric spacetimes.
For $\Lambda \neq 0$ the (Anti-) Nariai spacetime provides an example where the bifurcate horizon is a MKH of order 2.
In Section~\ref{sec_order2MKH} we construct, in arbitrary dimensions, data which generate a $\Lambda=0$-vacuum spacetime with a bifurcate horizon which
contains a MKH of order 2.

Section~\ref{sec_NHG} is devoted to the study of near-horizon geometries. In \cite{mps2} we have shown that the near-horizon geometry of a MKH is {\em unique}: it does
not depend on the Killing vector used to compute it. Here we provide a simpler proof of this result in the $\Lambda$-vacuum case, and discuss how the near-horizon limit
is performed  from the viewpoint of a characteristic initial-value  problem. This allows for an improved definition of near-horizon geometries which can deal with fixed points of the associated Killing vector.

Finally, in Section~\ref{sec_fully_deg} we construct a family of  characteristic initial data which generate vacuum spacetimes with fully degenerate horizons.
In turns out that an analysis of the KID equations is more involved in this setting.
In Section~\ref{sec_fully_deg} we will focus on the results while we have shifted most of the calculations to an Appendix.

\subsection{Notation}
$(M,g)$ denotes a connected, oriented and time-oriented $(n+1)$-dimensional Lorentzian manifold with metric $g$ of signature $(-,+,\dots,+)$.
$\overline{A}$ is the topological closure of a set $A$.
For any vector (field) $v$ in $TM$, $\bm{v}$ denotes the metrically
related one-form. Both index-free and index notation are used. Lowercase Greek letters $\alpha,\beta,\dots $ run from $0$ to $n$, small Latin indices $i,j,\dots$ take values from $1$ to $n$, $a\in\{1,2\}$, and capital Latin indices $A,B,\dots$ are running from $2$ to $n$, unless otherwise stated.

\section{Multiple Killing horizons}
\label{sec2}

The purpose of this section is to recall the most relevant definitions and results for spacetimes
which admit MKHs.
For more details we refer the reader to  \cite{mps}.

\begin{definition}\label{defi}
\begin{enumerate}
\item[(a)]
A smooth null hypersurface $\H_{\xi}$ embedded  in  a spacetime $(M,g)$ is
a {\bf Killing horizon  of a Killing vector $\xi$} of $(M,g)$
if and only if $\xi$ is null on $\H_\xi$, nowhere zero on $\H_\xi$ and tangent to $\H_\xi$.
We require that the interior of the  closure of $\H_{\xi}$  is a smooth connected hypersurface.
\item[(b)]
Let $\xi$ be a Killing vector on $(M,g)$ which has a connected and spacelike co-dimension two submanifold $S$ of fixed points. Then, the
set of points along all null geodesics orthogonal to $S$ forms  a {\bf bifurcate Killing horizon} with respect to $\xi$.
\end{enumerate}
\end{definition}
These null geodesics orthogonal to $S$ generate two transversal null hypersurfaces $\H_1$ and $\H_2$ whose future portions $\H_1^+$ and $\H_2^+$ (as well as its past portions $\H_1^-$ and $\H_2^-$) are connected Killing horizons. Notice that $\H_1^+ \cup \H_1^- \subset \H_1$ and $\H_2^+ \cup \H_2^-$ are Killing horizons according
to our definition. The union $\H_1^+\cup \H_2^+\cup\H_1^-\cup \H_2^-\cup S = \H_1\cup \H_2$ is the bifurcate Killing horizon.

Recall that the {\bf surface gravity} of a Killing horizon $\H_\xi$ is the function $\k$ defined by $$\nabla_{\xi}\xi|_{\H_{\xi}}= \kappa_{\xi}\xi .$$

\begin{definition}[\cite{mps}]
\label{thm_kappa_const}
A null hypersurface $\H$ embedded  in a spacetime $(M,g)$ is
a {\bf multiple Killing horizon (MKH)} if
 $(M,g)$ admits
Killing horizons  $\H_{\xi_i}$, $i\in\{1,\dots ,m\}$ with $m\geq 2$,
associated to linearly independent Killing vectors
$\xi_i$ satisfying
\begin{align*}
\overline{\H} = \overline{\H}_{\xi_1} = \dots = \overline{\H}_{\xi_m}.
\end{align*}
\end{definition}

\begin{theorem}[\cite{mps}]
All surface gravities of a multiple Killing horizon $\H$ are necessarily
constant on the entire $\H$.
\end{theorem}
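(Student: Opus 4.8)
The plan is to handle the two intrinsic directions on the null hypersurface $\H$ — along its null generators and transverse to them — by separate arguments, and to run the argument on an arbitrary member $\xi$ of the family $\{\xi_i\}$, so that constancy of every $\kappa_{\xi_i}$ follows. First I would record the interplay of the Killing vectors on $\H$. Since any two of them, say $\xi$ and $\eta$, are null and tangent to the null hypersurface $\H$, at each point they are both proportional to the null generator; hence $\eta=f\xi$ on $\H$ for a nowhere-vanishing function $f$. Evaluating $\nabla_\eta\eta$ from this relation and inserting $\nabla_\xi\xi=\kappa_\xi\xi$ gives at once $\kappa_\eta=\xi(f)+f\kappa_\xi$ on $\H$, the bridge between the surface gravities; it is compatible with the dichotomy recalled above, the fully degenerate case being the trivial one in which all $\kappa_{\xi_i}$ vanish. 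So I may assume $\kappa_\xi\neq0$ somewhere.

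For the direction along the generators I would use the Killing identity $\nabla_\alpha\nabla_\beta\xi_\gamma=R_{\gamma\beta\alpha}{}^\delta\xi_\delta$. Differentiating $\xi^\alpha\nabla_\alpha\xi_\beta=\kappa_\xi\xi_\beta$ along $\xi$ and contracting with $\xi^\beta$, the curvature term vanishes by the antisymmetry of $R$ in its last index pair, leaving $\xi(\kappa_\xi)=0$: each surface gravity is constant along the null generators. Equivalently, writing $\xi=\phi\,\ell$ with $\ell$ an affinely parametrised generator field, $\phi$ is affine-linear in the affine parameter with slope $\kappa_\xi$.

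The hard part, which I expect to be the main obstacle, is the transverse direction. It cannot come from the isometries themselves: on $\H$ both $\xi$ and $\eta$ point along $\ell$, so their flows preserve each generator and yield no rigidity across generators, and combining the transverse-variation formula for $\kappa_\xi$ with that for $\kappa_\eta$ through $\kappa_\eta=\xi(f)+f\kappa_\xi$ only reproduces an identity, leaving the transverse derivative of $\kappa_\xi$ undetermined by local data (the obstructing term is genuine Weyl curvature). Instead I would exploit non-degeneracy. Since $\phi$ is affine-linear with nonzero slope on some generator it has a zero there, so that generator limits in $\overline{\H}$ to a point where $\xi$ vanishes; these fixed points form a bifurcation surface $S$. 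On $S$ one has $\nabla_c(\kappa_\xi^2)=-(\nabla^\alpha\xi^\beta)\nabla_c\nabla_\alpha\xi_\beta=-(\nabla^\alpha\xi^\beta)R_{\beta\alpha c}{}^\delta\xi_\delta=0$ because $\xi|_S=0$, so $\kappa_\xi$ is constant on $S$; combined with constancy along the generators, which all meet $S$, this forces $\kappa_\xi$ constant on all of $\H$. Applying the same argument to each $\xi_i$ then completes the proof.

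The delicate points I must secure are precisely those feeding the last step: that the definition of a MKH (equality of closures, together with the smoothness and connectedness requirements) really places the fixed-point set $S$ in $\overline{\H}$, that every generator limits to it, and that $S$ is connected so the constant value is global — invoking, if needed, the standard extension of a non-degenerate Killing horizon to a bifurcate one. Treating the several Killing vectors simultaneously adds a wrinkle, since the zeros of the different $\xi_i$ occur at different affine parameters, the offset being governed by the Wronskian-type combinations $c_i\kappa_{\xi_j}-\kappa_{\xi_i}c_j$, and the proportionality functions degenerate near $S$; I would check that the curvature argument survives for each $\xi_i$ separately and then read off the remaining surface gravities from the relation $\kappa_{\xi_j}=\xi_i(f)+f\kappa_{\xi_i}$.
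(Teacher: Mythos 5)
Your generator-direction argument and the relation $\kappa_\eta=\xi(f)+f\kappa_\xi$ are correct and standard, but the transverse step has a genuine gap that cannot be repaired within your strategy. From $\kappa_\xi\neq 0$ on a generator you infer that the generator ``limits in $\overline{\H}$ to a point where $\xi$ vanishes'' and that these fixed points form a bifurcation surface $S$. Nothing in Definitions~\ref{defi} and \ref{thm_kappa_const} guarantees this: the zero of $\phi$ occurs at a finite value of the affine parameter, but the generator, as a geodesic of $(M,g)$, need not extend that far --- the spacetime may simply end first. Concretely, take Minkowski space restricted to the open region $M=\{t+z>0\}$: the null hypersurface $\{t=z>0\}$ is a MKH for the boost $\xi=z\partial_t+t\partial_z$ (with $\kappa_\xi=1$) and the null translation $\eta=\partial_t+\partial_z$, yet its closure in $M$ contains no fixed point of $\xi$ and no bifurcation surface exists in $M$. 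Your named fallback, the extension theorem of R\'acz--Wald \cite{re}, assumes \emph{constant} surface gravity as a hypothesis (as recalled in Section~\ref{sec_bifurcate}), so invoking it here is circular. There is also a structural warning sign: your transverse argument never uses the second Killing vector, so if it worked it would prove the zeroth law for \emph{every} Killing horizon admitting a bifurcation surface and nothing about the ones that do not; but for a single Killing horizon, with no field equations and no bifurcation surface, $\kappa$ need not be constant (e.g.\ $g=2\,\mathrm{d}u\,\mathrm{d}v-2v\kappa(x)\,\mathrm{d}u^2+\mathrm{d}x^2+\mathrm{d}y^2$ has $\{v=0\}$ as Killing horizon of $\partial_u$ with surface gravity $\kappa(x)$). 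Hence the multiplicity must enter the transverse step in an essential way, and in your proposal it never does.

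Ironically, the route you dismissed --- ``combining the transverse-variation formula for $\kappa_\xi$ with that for $\kappa_\eta$ \dots only reproduces an identity'' --- is exactly the proof, and it is the one of \cite{mps}: it works entirely on $\H$, with no bifurcation surface, no extension and no field equations. For any Killing vector with a Killing horizon one has, for $X$ tangent to $\H$, the identity $X(\kappa_\xi)\,\xi_\lambda=-R_{\nu\lambda\mu\delta}\,\xi^\nu X^\mu\xi^\delta$, i.e.\ the curvature obstruction is \emph{quadratic} in the Killing vector. Writing the same identity for $\eta=f\xi$ multiplies the right-hand side by $f^2$ but the left-hand side only by $f$, so the curvature terms cancel between the two Killing vectors and one is left with the non-trivial relation $X(\kappa_\eta)=f\,X(\kappa_\xi)$. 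Combining this with the tangential derivative of $\kappa_\eta=\xi(f)+f\kappa_\xi$, and commuting with $\xi$ (using $\xi(\kappa_\xi)=\xi(\kappa_\eta)=0$ from your step along the generators, and that $[\xi,X]$ is again tangent), yields $\xi(f)\,X(\kappa_\xi)=0$ on the common horizon. On any open set where $\xi(f)=0$ one gets in turn $\kappa_\xi\,X(f)=0$, and $f$ cannot be locally constant there: otherwise $\eta-f_0\xi$ would be a Killing vector vanishing on an open piece of a hypersurface, whose gradient vanishes there as well (antisymmetry of $\nabla_\mu(\eta-f_0\xi)_\nu$ plus tangency of all its contractions force it to be proportional to $\bm{\xi}\wedge\bm{\xi}=0$), hence $\eta-f_0\xi\equiv 0$, contradicting linear independence. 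Putting the pieces together forces $d\kappa_\xi=0$ tangentially everywhere, so $\kappa_\xi$ is constant on the connected $\H$, and then $X(\kappa_\eta)=fX(\kappa_\xi)=0$ gives constancy of $\kappa_\eta$ too.
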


Let  $\H$ be a MKH and define
$\Kill_\H$
as the  union of the trivial Killing
vector  and the collection of Killing vectors $\xi$ which admit
a Killing horizon $\H_{\xi}$ satisfying $\overline{\H} = \overline{\H_{\xi}}$.

\begin{theorem}[\cite{mps}]
\label{thm_MKH2}
Let $\H$ be a MKH  in an $(n+1)$-dimensional  spacetime
$(M,g)$ of  dimension at least two. Then:
\begin{enumerate}
\item[(i)] $\Kill_{\H}$ is a Lie sub-algebra of the Lie algebra of all Killing vectors  of $(M,g)$.
Its dimension $m\geq 2$ defines the {\bf order of the MKH}.
\item[(ii)]
 $\Kill_\H$  contains an Abelian sub-algebra $\Kill_\H^{deg}$ of dimension at least $m-1$ whose elements have vanishing surface gravities. If  $\Kill_\H^{deg}$ has dimension $m-1$,  any element $\xi\in \Kill_\H\setminus \Kill_\H^{deg}$ has $\k\neq 0$ and satisfies
$\left[\xi,\eta \right] =-\k \eta$, $\forall \eta \in \Kill_\H^{deg}$.
\item[(iii)] The maximum possible dimension of $\Kill_\H^{deg}$ is $n=$ {\em dim}$(M)-1$.
\end{enumerate}
\end{theorem}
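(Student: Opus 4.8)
The plan is to exploit that on a null hypersurface the null tangent direction is one-dimensional, so that \emph{every} element of $\Kill_\H$ is proportional on $\H$ to a single fixed null generator. First I would fix a smooth nonvanishing null generator $\ell$ of $\H$, affinely parametrised ($\nabla_\ell\ell=0$) by $\lambda$, together with a connected cross-section $S=\{\lambda=0\}$. Any $\xi\in\Kill_\H$ is null and tangent on $\H$, hence $\xi|_{\H}=f_\xi\,\ell$ for some function $f_\xi$. From $\nabla_\xi\xi=\kappa_\xi\xi$ and $\nabla_\ell\ell=0$ one gets $\ell(f_\xi)=\kappa_\xi$, and since the surface gravity is constant on $\H$ by the preceding Theorem, $f_\xi=\kappa_\xi\lambda+c_\xi$ with $c_\xi:=f_\xi|_{S}$ constant along each generator. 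For $a,b\in\mathbb{R}$ the combination $(a\xi+b\eta)|_{\H}=(af_\xi+bf_\eta)\,\ell$ is again null and tangent, with a proportionality function affine in $\lambda$; its zero set is therefore at most a cross-section (or a union of generators) of $\H$, and removing such a set leaves $\overline{\H}$ unchanged. Hence $a\xi+b\eta$ is either trivial or possesses a Killing horizon with closure $\overline{\H}$, so $\Kill_\H$ is a vector space, of dimension $m\ge 2$ by the definition of a MKH.

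The technical backbone for the remaining closure statements is the elementary lemma that \emph{a Killing vector $\zeta$ vanishing on all of $\H$ vanishes identically}: if $\zeta|_{\H}=0$ then $\nabla_X\zeta|_{\H}=0$ for every $X$ tangent to $\H$, and antisymmetry of $\nabla\zeta$ upgrades this to $\nabla\zeta|_{\H}=0$, so the Killing data $(\zeta,\nabla\zeta)$ vanish at a point and $\zeta\equiv0$ by uniqueness of Killing fields on the connected $M$. Since the bracket of two Killing fields is Killing and $\xi,\eta$ are tangent to $\H$, so is $[\xi,\eta]$, and restricting to $\H$ gives
\[
[\xi,\eta]|_{\H}=\bigl(f_\xi\,\ell(f_\eta)-f_\eta\,\ell(f_\xi)\bigr)\,\ell=(c_\xi\kappa_\eta-c_\eta\kappa_\xi)\,\ell ,
\]
again proportional to $\ell$, with proportionality function constant along the generators. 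As above its zero set does not affect $\overline{\H}$, so $[\xi,\eta]\in\Kill_\H$ and $\Kill_\H$ is a Lie subalgebra, which is (i).

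For (ii) I would note that $\xi\mapsto\kappa_\xi=\ell(f_\xi)$ is $\mathbb{R}$-linear, so $\Kill_\H^{deg}:=\ker\kappa$ is a subspace of codimension $0$ or $1$, whence $\dim\Kill_\H^{deg}\ge m-1$. If $\xi,\eta\in\Kill_\H^{deg}$ the displayed bracket vanishes on $\H$, so $[\xi,\eta]\equiv0$ by the lemma: $\Kill_\H^{deg}$ is Abelian and all its elements have vanishing surface gravity. If $\dim\Kill_\H^{deg}=m-1$ then $\kappa\not\equiv0$, any $\xi\notin\Kill_\H^{deg}$ has $\kappa_\xi\neq0$ by definition of the kernel, and for $\eta\in\Kill_\H^{deg}$ (so $\eta|_{\H}=c_\eta\ell$) the bracket reduces to $[\xi,\eta]|_{\H}=-\kappa_\xi c_\eta\,\ell=-\kappa_\xi\,\eta|_{\H}$; applying the lemma to $[\xi,\eta]+\kappa_\xi\eta$ then gives $[\xi,\eta]=-\kappa_\xi\eta$ on all of $M$.

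Finally, for (iii) the map $\eta\mapsto c_\eta$ sends $\Kill_\H^{deg}$ injectively into $C^\infty(S)$, since $c_\eta=0$ forces $\eta|_{\H}=0$ and hence $\eta\equiv0$. To bound the image I would invoke the master equation of \cite{mps}: specialised to a degenerate generator it expresses all second intrinsic derivatives of $c_\eta$ on $S$ through $c_\eta$ and its first derivatives, so a solution is determined by its $1$-jet $(c_\eta(p),Dc_\eta(p))$ at any $p\in S$; propagating this $1$-jet along geodesics of the connected cross-section $S$ shows that $\eta\mapsto(c_\eta(p),Dc_\eta(p))\in\mathbb{R}\oplus T_p^{*}S\cong\mathbb{R}^{n}$ is injective, giving $\dim\Kill_\H^{deg}\le n$. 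I expect the one genuinely substantial step to be this last one: establishing that the degenerate master equation is of prolongable second order, so that the $1$-jet at a single point pins down the solution on connected $S$, requires its explicit form rather than the soft arguments used for (i)--(ii), all of which reduce to the one-dimensionality of the null tangent direction, the constancy of the surface gravities, and the vanishing lemma.
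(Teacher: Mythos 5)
Your proposal follows, in substance, the same route as the source of this theorem: the present paper does not reprove the statement (it is imported from \cite{mps}), but the machinery it recalls in Section~\ref{sec2} --- the proportionality of all elements of $\Kill_\H$ along the single null direction of $\H$ (Lemma~\ref{lemmamps}), the constancy of all surface gravities, and the fact that the master equation is a second-order PDE in normal form, so that a solution is pinned down by $f|_p$ and $df|_p$ and the solution space has dimension at most $n$ --- is exactly what you use: parts (i)--(ii) via the affine proportionality functions $f_\xi=\kappa_\xi\lambda+c_\xi$ together with the Killing-vanishing lemma, and part (iii) via the jet bound for the master equation.

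One step needs tightening. In part (i) you assert that the zero set of $af_\xi+bf_\eta$ (and likewise of the bracket coefficient $c_\xi\kappa_\eta-c_\eta\kappa_\xi$) is ``at most a cross-section or a union of generators'' and that ``removing such a set leaves $\overline{\H}$ unchanged''. The second clause is not automatic: when $a\kappa_\xi+b\kappa_\eta=0$ the zero set is the union of generators over $\{\,ac_\xi+bc_\eta=0\,\}\subset S$, and this subset of $S$ may have nonempty interior, in which case removing it \emph{does} change the closure. The patch is already in your toolkit, because your vanishing lemma is local: its proof only needs $\zeta$ to vanish on an open piece of a hypersurface to conclude that the antisymmetric tensor $\nabla\zeta$ annihilates a hyperplane of directions at a point, hence has rank at most one, hence vanishes, so $\zeta\equiv 0$ on the connected $M$. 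Therefore either the combination (or bracket) vanishes on an open subset of $\H$ and is the trivial element of $\Kill_\H$, which is admitted by definition, or its zero set is nowhere dense and your closure argument goes through; with this case distinction made explicit, (i) and (ii) are complete. Finally, note that in (iii) you establish only the upper bound $\dim \Kill_\H^{deg}\leq n$; that $n$ is actually the \emph{maximum possible} is supplied by examples (e.g.\ Minkowski or plane-wave spacetimes) in \cite{mps}, not by your argument, though the upper bound is the substantive content.
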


\begin{definition}[\cite{mps}]
A MKH $\H$ is said to be {\bf fully degenerate} if $\Kill_\H = \Kill_\H^{deg}$, otherwise
it is called non-fully degenerate, or in short {\bf non-degenerate}.
\end{definition}
By Theorem~\ref{thm_MKH2} the maximum possible order of a MKH $\H$ is
\begin{enumerate}
\item[(i)] $m=n$ for  fully degenerate MKHs,
\item[(ii)] $m=n+1$ for non-degenerate MKHs.
\end{enumerate}

\begin{lemma}[\cite{mps}]
  \label{lemmamps}
Let $\H$ be a MKH, and let further  $\eta\in \Kill^{deg}_\H$ and $\xi\in \Kill_\H$, so that $\kappa_\eta =0$ and $\k$ may be zero or not.
Then  on $\Hxi$
there exist smooth functions $\tau, f_{\eta}: \Hxi \mapsto \mathbb{R}$ with $\xi(\tau)=1$ and $\xi(f_{\eta})=0$
such that  the following relation holds
\begin{equation}
\eta |_{\Hxi} = f_\eta e^{-\k \tau} \xi
\,.
\label{dfn_eqn_f}
\end{equation}

\end{lemma}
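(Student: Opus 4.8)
The plan is to show that on the common horizon $\xi$ and $\eta$ are proportional, to turn the defining relation $\nabla_\eta\eta=\kappa_\eta\eta$ into a linear first-order equation for the proportionality factor along the flow of $\xi$, and then to integrate it using that $\k$ is constant. To begin, on the interior of $\H=\overline{\H_\xi}\cap\overline{\H_\eta}$ both $\xi$ and $\eta$ are null, nowhere zero and tangent to the null hypersurface; since a null hypersurface carries a unique null tangent direction (its generator), the two vectors are proportional, so that $\eta|_{\Hxi}=\lambda\,\xi$ for a function $\lambda$. This $\lambda$ is smooth (one may write $\lambda=g(\eta,k)/g(\xi,k)$ for any transverse null field $k$) and nowhere vanishing on $\H$, because $\eta$ does not vanish on $\H_\eta$ by Definition~\ref{defi}(a). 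It remains to prove that $\lambda=f_\eta\,e^{-\k\tau}$.

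Next I would derive the evolution of $\lambda$ along $\xi$. Because $\eta$ is tangent to $\H$, the covariant derivative $\nabla_\eta\eta$ can be computed from the restriction $\eta|_\H=\lambda\xi$ alone: at a point $\nabla_X X$ involves only $X$ and the derivative of its components along the direction $X$, which here is tangent to $\H$. Using $\nabla_\xi\xi=\k\,\xi$ on $\Hxi$ this gives
\begin{equation}
\nabla_\eta\eta=\nabla_{\lambda\xi}(\lambda\xi)=\lambda\,\xi(\lambda)\,\xi+\lambda^2\,\k\,\xi .
\end{equation}
On the other hand $\eta\in\Kill_\H^{deg}$ means $\kappa_\eta=0$, so $\nabla_\eta\eta=\kappa_\eta\eta=0$. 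Equating the two expressions and dividing by $\lambda\neq0$ yields the linear ODE $\xi(\lambda)=-\k\,\lambda$ on $\Hxi$.

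Finally I would integrate this relation. Recall that every surface gravity of a MKH is constant on the whole of $\H$, so $\k$ is a constant. Introduce a function $\tau$ with $\xi(\tau)=1$, realized as the flow parameter of $\xi$: fix a cross-section of the generators on which $\tau=0$ and drag it along $\xi$. Then the function $f_\eta:=\lambda\,e^{\k\tau}$ satisfies $\xi(f_\eta)=\big(\xi(\lambda)+\k\,\lambda\big)e^{\k\tau}=0$, whence $\eta|_{\Hxi}=\lambda\,\xi=f_\eta\,e^{-\k\tau}\,\xi$ with $\xi(\tau)=1$ and $\xi(f_\eta)=0$, as claimed.

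I expect the genuine difficulty to lie not in the algebra but in the global construction of $\tau$: the condition $\xi(\tau)=1$ fixes $\tau$ along each generator only up to the choice of origin, so one needs the generators of $\H$ to be non-periodic and to admit a smooth global cross-section transverse to $\xi$ in order to patch these choices into a single smooth function (wherever such a global $\tau$ is unavailable, the statement is still obtained locally along each generator). A secondary point requiring care is the justification used above, that $\nabla_\eta\eta$ may be evaluated from $\eta|_\H=\lambda\xi$ even though $\lambda$ is defined only on $\H$; this is legitimate precisely because $\eta$ is tangent to $\H$, so all directional derivatives entering $\nabla_\eta\eta$ are tangential and hence determined by the restriction.
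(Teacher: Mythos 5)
This lemma is quoted by the paper from \cite{mps} and is not proved in the present text, so there is no in-paper proof to compare with line by line; your argument is, in essence, the natural (and presumably the original) one behind \eq{dfn_eqn_f}: proportionality $\eta=\lambda\xi$ along the unique null tangent direction of $\Hxi$, the transport equation $\xi(\lambda)=-\k\lambda$ obtained by confronting $\nabla_\eta\eta=\kappa_\eta\eta=0$ with $\nabla_\xi\xi=\k\xi$, and integration using that $\k$ is constant on the whole MKH. Your remarks on why $\nabla_\eta\eta$ is computable from the restriction of $\eta$ to the horizon, and on the global subtleties in constructing $\tau$, are both pertinent.

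One step, however, is genuinely wrong as written: the assertion that $\lambda$ is nowhere vanishing on $\Hxi$ ``because $\eta$ does not vanish on $\H_{\eta}$''. The lemma lives on $\Hxi$, and nothing forces $\Hxi\subset\H_{\eta}$: a MKH only requires $\overline{\H}_{\xi}=\overline{\H}_{\eta}$, so $\eta$ may perfectly well vanish at points of $\Hxi$. This is not a fringe case but exactly the situation this paper must handle: in Corollary~\ref{cor_multiple_horizons} the solutions $f$ are only required to be non-identically zero, Lemma~\ref{density} shows they are non-zero merely on a \emph{dense} subset, and the discussion of generalized near-horizon geometries is devoted precisely to degenerate Killing vectors with fixed points on the horizon. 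Hence you cannot ``divide by $\lambda\neq0$'' everywhere. The gap is easily bridged: if $\eta$ vanished on an open subset of $\Hxi$, then at a point $p$ of that subset one has $\eta(p)=0$ and also $\nabla\eta(p)=0$ (all tangential derivatives vanish, and antisymmetry of $\nabla_\mu\eta_\nu$ then kills the transverse components), which forces the Killing vector $\eta$ to vanish identically; so the set $\{\lambda\neq 0\}$ is dense in $\Hxi$, your pointwise identity $\lambda\left(\xi(\lambda)+\k\lambda\right)=0$ gives $\xi(\lambda)+\k\lambda=0$ on that dense set, and by continuity on all of $\Hxi$. The integration then goes through verbatim, with $f_\eta=\lambda e^{\k\tau}$ now allowed to have zeros --- which is consistent with the statement, since the lemma never claims $f_\eta\neq0$.
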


The level sets of the function $\tau$ define a foliation $\{S_{\tau}\}$ of
$\Hxi$ by spacelike co-dimension 2 surfaces.
Of course, there is a freedom, namely to  apply shifts $\tau \rightarrow \tau+\tau_0$ with $\xi(\tau_0)=0$, which induces a change as $f_\eta \rightarrow f_\eta e^{\k \tau_0}$. This freedom amounts to a change of the chosen foliation. However, for a Killing horizon its inherited first fundamental form
$\gamma$ satisfies $\pounds_\xi \gamma =0$, and thus all possible spacelike cross sections are isometric to each other with positive-definite metric $\gamma_{AB}$ (we keep the same name for simplicity).

Pick up any particular cross section $S_0\subset \Hhat$, not necessarily belonging to the chosen foliation $\{S_{\tau}\}$. Let $D$ be the canonical covariant derivative on $(S_0,\gamma)$ and $\{e_A\}$ a basis of vector fields on $S_0$. Due to the fact that Killing horizons are totally geodesic null hypersurfaces, the following relation holds
\be
D_{e_A}e_B -\nabla_{e_A} e_B = - {\cal K}_{AB}\,  \xi \label{2ndff}
\ee
which defines the {\em unique} non-vanishing second fundamental form ${\cal K}_{AB}$ of $(S_0,\gamma)$. Then, the {\bf torsion one-form} $\bm{s}$ on $(S_0,\gamma)$ relative to the chosen Killing $\xi$ is defined in the given basis by
\be
s_A \xi^\mu = -e^\rho_A\nabla_\rho \xi^\mu .\label{torsion}
\ee

It is clear from (\ref{dfn_eqn_f}) that the function $f: S_0 \rightarrow
\mathbb{R}$  defined by
$$f := f_\eta e^{-\k\tau} |_{S_0}$$
provides the proportionality between $\eta$ and $\xi$ on $S_0$: $\eta \stackrel{S_0}{=} f \xi$. Then we have (we correct an unfortunate typo in
\cite[expression (60)]{mps})
%

\begin{proposition} [\cite{mps}]
Let $\H$ be a MKH and  $\H_{\xi}$, $\H_{\eta}$
be Killing horizons satisfying $\overline{\H}_{\xi} =
\overline{\H}_{\eta} = \overline{\H}$.
Then the  following {\bf master equation} holds on any spacelike cut $S_{0}$ of $\Hxi$ 
%
\be
D_A D_B f - 2 s_{(A} D_{B)} f  +  \left ( \frac{1}{2} R_{AB}
-  \frac{1}{2}  {}^\gamma R_{AB} + s_A s_B - D_{(A} s_{B)} \right ) f =0\label{Eq}
\ee
where $R_{AB}:=R_{\mu\nu}|_{S_0} e^\mu_A e^\nu_B$ is the pull-back of the Ricci tensor to $S_0$, and
${}^\gamma R_{AB}$ is the Ricci tensor of the cut $(S_0,\gamma)$ itself.
\end{proposition}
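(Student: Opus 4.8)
\emph{Proof strategy.} The starting point is the pointwise proportionality $\eta\stackrel{S_0}{=}f\xi$, which by Lemma~\ref{lemmamps} extends along the generators as $\eta|_{\Hxi}=h\,\xi$ with $h:=f_\eta e^{-\k\tau}$ and $h|_{S_0}=f$, together with the fact that both $\xi$ and $\eta$ are Killing. The plan is to compute the symmetric second tangential derivative of $\eta$ along $S_0$ in two independent ways and to match the two results after contracting with a transversal null field $k$ normalised by $k^\mu\xi_\mu=1$. First I would record the first tangential derivative: since $e_A$ is tangent to $S_0$ and $h|_{S_0}=f$, the Leibniz rule together with the torsion definition \eqref{torsion} in its lowered form $e_A^\rho\nabla_\rho\xi_\mu=-s_A\xi_\mu$ gives, on $S_0$,
\be
 e_A^\rho\nabla_\rho\eta_\mu = (D_A f - f\, s_A)\,\xi_\mu ,
\ee
so that the first derivative is again longitudinal.

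Differentiating once more tangentially and symmetrising is the second main computation. Applying $e_B^\sigma\nabla_\sigma$ to the relation above and using \eqref{2ndff} to split $\nabla_{e_B}e_A=D_{e_B}e_A+{\cal K}_{AB}\xi$, the only potentially new ingredient is ${\cal K}_{AB}\,\xi^\sigma\nabla_\sigma\eta_\mu={\cal K}_{AB}\nabla_\xi\eta_\mu$. A short computation shows this vanishes: from $\eta=h\,\xi$, $\xi(h)=-\k h$ (using $\xi(f_\eta)=0$, $\xi(\tau)=1$) and $\nabla_\xi\xi=\k\xi$, one obtains $\nabla_\xi\eta=0$ on $\Hxi$. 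Hence the second fundamental form drops out entirely and the symmetric second tangential derivative is once more purely longitudinal,
\be
 e_{(A}^\rho e_{B)}^\sigma\nabla_\rho\nabla_\sigma\eta_\mu = \big(D_A D_B f - 2 s_{(A}D_{B)}f + s_A s_B\, f - f\, D_{(A}s_{B)}\big)\,\xi_\mu .
\ee

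Independently, since $\eta$ is Killing the standard second-order identity $\nabla_\rho\nabla_\sigma\eta_\mu=R_{\mu\sigma\rho}{}^\lambda\eta_\lambda$ holds, so the same object equals $f\, e_{(A}^\rho e_{B)}^\sigma\,\xi^\lambda R_{\mu\sigma\rho\lambda}$. Equating the two expressions and contracting with $k^\mu$ removes the overall factor $\xi_\mu$ and leaves
\be
 D_A D_B f - 2 s_{(A}D_{B)}f + \big(s_A s_B - D_{(A}s_{B)}\big)f = f\, k^\mu e_{(A}^\rho e_{B)}^\sigma\,\xi^\lambda R_{\mu\sigma\rho\lambda} ,
\ee
so the whole problem reduces to identifying the mixed Riemann contraction on the right.

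The main work—and the step I expect to be the real obstacle—is to show $k^\mu e_{(A}^\rho e_{B)}^\sigma\,\xi^\lambda R_{\mu\sigma\rho\lambda}=\tfrac12\big({}^\gamma R_{AB}-R_{AB}\big)$. For this I would insert the completeness relation at points of $S_0$, namely $g^{\mu\lambda}=\gamma^{CD}e_C^\mu e_D^\lambda+\xi^\mu k^\lambda+k^\mu\xi^\lambda$, into the pull-back $R_{AB}=e_A^\sigma e_B^\rho\,g^{\mu\lambda}R_{\mu\sigma\lambda\rho}$. The fully tangential piece is governed by the Gauss equation of $S_0\subset(M,g)$; crucially, because $\Hxi$ is totally geodesic the second fundamental form of $S_0$ points along the \emph{null} direction $\xi$ (this is exactly the content of \eqref{2ndff}), so every quadratic second-fundamental-form term in the Gauss equation carries a factor $\langle\xi,\xi\rangle=0$ and disappears, leaving the tangential part equal to the intrinsic Ricci tensor ${}^\gamma R_{AB}$. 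The two mixed $\xi$–$k$ pieces then recombine, via the pair and antisymmetry symmetries of the Riemann tensor, into $-2\,k^\mu e_{(A}^\rho e_{B)}^\sigma\,\xi^\lambda R_{\mu\sigma\rho\lambda}$, which yields the claimed identity; substituting it back reproduces \eqref{Eq} verbatim. The only genuinely delicate points are fixing the sign and index orderings in the Killing identity and in the Riemann symmetries, and verifying that the null second fundamental form indeed annihilates the Gauss correction; everything else is routine bookkeeping.
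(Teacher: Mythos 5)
Your proposal is correct, and every step checks out: the relation $\nabla_{e_A}\eta=(D_Af-fs_A)\xi$ on $S_0$, the vanishing of $\nabla_\xi\eta$ on $\H_\xi$ (so that the ${\cal K}_{AB}$ term produced by \eqref{2ndff} drops out), the Killing integrability identity $\nabla_\rho\nabla_\sigma\eta_\mu=R_{\mu\sigma\rho}{}^{\lambda}\eta_\lambda$ (valid with consistent signs in the standard convention, and insensitive to the ordering of $\rho,\sigma$ after $(AB)$-symmetrization), and the frame decomposition $g^{\mu\lambda}=\gamma^{CD}e_C^\mu e_D^\lambda+\xi^\mu k^\lambda+k^\mu\xi^\lambda$ combined with the Gauss equation of the codimension-two cut, whose quadratic terms are annihilated precisely because \eqref{2ndff} forces the second fundamental form of $S_0$ to point along the null direction $\xi$. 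This assembles exactly into \eqref{Eq}. Note, however, that the present paper does not actually prove this proposition: it is quoted from \cite{mps}, and the only proof-related indication given here is the remark that \eqref{Eq} is equivalent, ``from the Gauss equation for the cut $(S_0,\gamma)$ and general identities for Killing horizons,'' to the alternative form \eqref{Eq1}, which involves the surface gravity and the second fundamental form ${\cal K}_{AB}$ rather than the ambient Ricci tensor. So the route suggested by the paper runs through \eqref{Eq1} first and then trades $\k\,{\cal K}_{AB}$ for the curvature combination $\frac12 R_{AB}-\frac12\,{}^{\gamma}R_{AB}$ via Gauss, whereas you derive \eqref{Eq} in one pass from the Killing identity, using Gauss only at the very end to identify the mixed Riemann contraction $k^\mu e^\rho_{(A}e^\sigma_{B)}\xi^\lambda R_{\mu\sigma\rho\lambda}=\frac12({}^{\gamma}R_{AB}-R_{AB})$. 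Both arguments hinge on the same two geometric facts (total geodesy of the Killing horizon and the purely longitudinal second fundamental form of the cut); your version is more direct and self-contained, while the paper's route has the advantage of producing the intermediate equation \eqref{Eq1}, which is the form actually used later (e.g.\ to see that $f=\mathrm{const}$ solves the equation when $\k=0$).
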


This linear homogeneous PDE  is of second order and written in normal
form, i.e. with all second derivatives expressed in terms of lower order terms.
Its  integrability conditions were briefly considered in \cite{mps} and will be further analyzed in a forthcoming paper.
It follows immediately from its  structure that one can freely prescribe at most  $f |_p$ and $df|_p$ at $p \in S_{0}$.
Since the dimension of $S_{0}$ is $n-1$, the maximal number of linearly independent
solutions is $n$ which indeed is the maximal number of degenerate Killing generators. When the original $\xi$ is non-degenerate, they all
add up to the maximum possible dimension of $\Kill_\H$.  When $\xi$ is degenerate, consistency of the construction demands that $f= \mbox{const}$ is a solution
of (\ref{Eq}) so that, in particular, adding $\xi$ to the set of Killing generators
arising  from (\ref{Eq}) does not increase the dimension. This can be checked explicitly because, from the Gauss equation for the cut $(S_0,\gamma)$ and general identities for Killing horizons, one can show \cite{mps} that
(\ref{Eq}) is fully equivalent to
\begin{equation}
D_AD_B f - s_{A}D_{B}f -s_B D_A f +\k K_{AB} =0.
\label{Eq1}
\end{equation}
Thus, if $\H$ is fully degenerate, then in particular $\k=0$ and $f=$const.\ is a solution of the master equation (\ref{Eq1}) leading to the original Killing $\xi$. Hence, there are at most $n$ independent Killings in $\Kill_\H$ in this degenerate case, as it should.

\section{Construction of spacetimes with MKHs via characteristic Cauchy problems}
\label{sec4}

Let us now study the construction of $\Lambda$-vacuum spacetimes with MKHs in terms of characteristic initial value problems.
For this we first recall some useful results.  The \emph{null second fundamental form} of a null hypersurface $N$, relative to a field
$K$ of null tangents to $N$, is defined as
$$
{}^N\!K(X,Y) := g(X, \nabla_Y K)
\,, \quad \text{where} \quad X,Y\in T N \, .
$$
${}^N\!K$ is a symmetric tensor field on $N$ and shares the same degeneracy as the first fundamental form $\gamma$ of $N$: $\gamma(K,\cdot)={}^N\!K(K,\cdot)=0$.
The trace-free part of ${}^N\!K$ gives the \emph{shear} ${}^N\!\pi:=({}^N\!K)_{\mathrm{tf}}$ of $N$ relative to $K$, and its trace ${}^N\!\theta:= \mathrm{tr}({}^N\!K)$ is called  \emph{expansion}.

The spacelike cross sections of $N$ will usually be referred to as {\em cuts} in this paper. Given any cut $S\subset N$, we consider its torsion one-form $\bm{s}$
relative to a null normal frame $\{k,\ell\}$:
let $k$ and $\ell$ be two null normals to the codimension-2 spacelike cut $S$ normalized by $g(k, \ell)=-1$.
Then the \emph{torsion one-form} $\bm{s}$ of $S$ with respect to
$\{k,\ell\}$ is given by the formula
\begin{equation}
{\bm  s}(Z):=g(\nabla_Zk, {\ell})\,, \quad \text{where} \quad Z\in TS
\,.
\end{equation}
For later use we recall that under a boost of the null basis
$\{ k' = \beta k, \ell' = \beta^{-1} \ell\}$, $\beta$ a smooth function on $S$, the torsion one form transforms
as
\begin{align}
  \bm{s'} = \bm{s} + \beta^{-1} d\beta.
  \label{transformation}
\end{align}

\subsection{Characteristic initial value problem}
\label{sec_initial}

We recall a fundamental result by Rendall \cite{rendall}, presented here in a slightly more geometrical version, along the lines of the discussion in section 4 in \cite{manyways}.
\begin{theorem}[\cite{rendall}]
\label{thm_Cauchy_problem}
Let $M_a$ be two smooth $n$-dimensional manifolds intersecting on a smooth $(n-1)$-dimensional manifold $S \equiv M_1\cap M_2$. Let $g^\circ$ be a degenerate semi-positive symmetric 
tensor on $M_1\cup M_2$ with one-dimensional radical and continuous and non-degenerate at $S$. (Equivalently, with signature $(0,+,\dots ,+)$). Let also $\bm{\torsion}$ be a given smooth one-form on $S$, and $\nu>0$, $\omega >0$, $\omega_u$ and $\omega_v$ be given smooth functions on $S$. Then, there exists an
$(n+1)$-dimensional Lorentzian manifold with boundary $(U,g)$ whose smooth metric $g$ is a solution of the $\Lambda$-vacuum Einstein field equations
$$
R_{\mu\nu} -\frac{1}{2} R g_{\mu\nu} +\Lambda g_{\mu\nu} =0
$$
 and a unique non-vanishing function $\Omega$ on $U$, such that
\begin{itemize}
\item the boundary is formed by two intersecting null hypersurfaces $N_1$ and $N_2$ to the past of $(U,g)$ which are isometric to some open neighborhood of $(M_1,\Omega^2 g^\circ)$ and $(M_2,\Omega^2 g^\circ)$ around $S$, respectively. We keep the notation $S$ for the intersection of $N_1$ and $N_2$ in $U$.
\item there exist representatives $\K$ and $\L$ of the
null generators of $N_1$ and $N_2$, respectively, with $g(\K,\L)|_S =\nu $
\item $\bm{\torsion}$ is the torsion one-form relative to $k=\K/\sqrt{\nu}$ and
  $\ell = -\L/\sqrt{\nu}$ on $S$
\item On $S$, one has $\Omega|_S =\omega$, $\K(\Omega) |_S = \omega_u$ and $\L (\Omega)|_S = \omega_v.$
\end{itemize}
Moreover, any two such Lorentzian manifolds are isometric on some neighbourhood of $S$.
\end{theorem}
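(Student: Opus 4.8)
The plan is to deduce this geometric statement from Rendall's original coordinate-based theorem \cite{rendall}, along the lines of \cite{manyways}, by establishing a dictionary between the geometric data $(g^\circ,\bm{\torsion},\nu,\omega,\omega_u,\omega_v)$ and the free data of the coordinate formulation, and by identifying the conformal factor $\Omega$ as the quantity fixed by the vacuum constraint propagating along the null generators. First I would introduce coordinates $(u,v,x^A)$ adapted to the configuration: $v=0$ on $M_1$, $u=0$ on $M_2$, $\{u=v=0\}=S$, with the $x^A$ transverse coordinates on $S$ extended so that $\partial_u$ generates $N_1$ and $\partial_v$ generates $N_2$. In such a gauge the degenerate tensor reads $g^\circ=g^\circ_{AB}(u,x)\,dx^Adx^B$ on $M_1$, with radical spanned by $\partial_u$, and symmetrically on $M_2$, so that the induced metric of the sought spacetime on each null hypersurface is $\gamma_{AB}=\Omega^2 g^\circ_{AB}$. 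The geometrically invariant content of $g^\circ$ is therefore its conformal class along each generator, which is precisely the free metric datum of the characteristic problem; the factor $\Omega$ is not free but constrained.

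Next I would pin down $\Omega$. On $N_1$ the relevant component of the $\Lambda$-vacuum equations is $R_{\K\K}=\tfrac{2\Lambda}{n-1}g(\K,\K)=0$, since $\K$ is null, so the Raychaudhuri (focusing) equation for the expansion ${}^{N_1}\!\theta$ relative to $\K$ carries no cosmological term and reduces to a linear second-order ODE for $\Omega$ along each generator, of the schematic form $\partial_u^2\Omega+P\,\Omega=0$, where the coefficient $P$ is built solely from the shear ${}^{N_1}\!\pi$, i.e. from the conformal class $[g^\circ_{AB}]$. Integrating this ODE with the initial conditions $\Omega|_S=\omega$ and $\K(\Omega)|_S=\omega_u$ determines $\Omega$ uniquely on $N_1$, and symmetrically on $N_2$ with $\omega,\omega_v$. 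Since $\omega>0$, continuity yields $\Omega>0$, hence nonvanishing, on a neighbourhood of $S$. This is the source of both the existence and the uniqueness of $\Omega$.

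The remaining data fix the transverse gauge and the integration constants at the corner $S$. The cross-normalization $\nu=g(\K,\L)|_S$ sets the null-null metric component at $S$ and fixes the scale of $k=\K/\sqrt{\nu}$ and $\ell=-\L/\sqrt{\nu}$, for which one checks $g(k,\ell)|_S=-1$, as required for a torsion frame. The one-form $\bm{\torsion}$ is then prescribed as the torsion one-form of $S$ relative to $\{k,\ell\}$, which in the adapted gauge corresponds to the mixed components $g_{uA}|_S$, equivalently the connection coefficient $g(\nabla_{e_A}k,\ell)$; the boost behaviour recorded in the transformation rule for $\bm{s}$ guarantees consistency of this prescription with the chosen normalization. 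Feeding the conformal classes on $N_1,N_2$ (agreeing on $S$), together with $\nu$, $\bm{\torsion}$, and $\Omega$ with its derivatives at $S$, into the coordinate free data, Rendall's theorem delivers a unique $\Lambda$-vacuum metric $g$ on a neighbourhood $U$ of $S$ whose past boundary is $N_1\cup N_2$, and unwinding the gauge shows that $g$ realizes the prescribed induced metrics $\Omega^2 g^\circ$, cross-normalization, and torsion.

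The main obstacle is the bookkeeping of this dictionary: verifying that the listed geometric quantities are in exact bijection with Rendall's free data, with no redundancy and no missing datum; that the characteristic constraints other than Raychaudhuri are automatically propagated off $S$ once the evolution equations hold, so that only corner data at $S$ need be checked; and that the resulting metric reproduces exactly the prescribed torsion and conformal-factor derivatives rather than merely gauge-equivalent data. Uniqueness up to isometry near $S$ then follows because the whole translation is diffeomorphism-covariant: any two solutions with the same geometric data map to solutions with the same coordinate free data, to which Rendall's uniqueness applies, and the residual freedom in adapting the coordinates to $S$ is exactly the isometry freedom asserted in the statement.
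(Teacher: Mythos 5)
The paper itself contains no proof of this theorem: it is recalled from Rendall \cite{rendall} in the geometrized formulation of \cite[Section~4]{manyways}, and your proposal reconstructs precisely that standard translation — the conformal class of $g^\circ$ as Rendall's free metric datum, $\Omega$ fixed along each generator by the linear second-order Raychaudhuri ODE (whose $\Lambda$-term drops since $g(\K,\K)=0$, and whose coefficients depend only on the conformal class) with corner data $\omega,\omega_u,\omega_v$, the pair $(\nu,\bm{\torsion})$ supplying the remaining corner data, and geometric uniqueness following from Rendall's coordinate uniqueness plus diffeomorphism covariance. Your only slip is immaterial: in adapted null coordinates $g_{uA}|_S=0$, so the torsion one-form is \emph{not} ``the mixed components $g_{uA}|_S$'' but the connection-coefficient expression (\ref{exp:varsigma}), i.e.\ the invariant $g(\nabla_{e_A}k,\ell)$ that you correctly state immediately afterwards.
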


\begin{remark}
{\rm
  In fact, the solution can be extended so that its boundary includes the
 whole initial hypersurfaces 
 provided the Raychaudhuri equation for the initial-data expansions
 admits a global solution \cite{luk, CCTW}.
}
\end{remark}

We introduce (cf.\ \cite{rendall}) an \emph{adapted null coordinate system} $(x^{\mu})=(u,v, x^A)$, $A=3,\dots,n+1$, where
$N_1=\{v=0\}$ and $N_2=\{u=0\}$. In particular, $S=\{u=0, v=0\}$. Moreover, we assume that $u$ (resp.\ $v$) parameterizes the null geodesic
generators of  $N_1$ (resp.\ $N_2$) , while the $x^A$'s are local coordinates on the $\{v=0, u=\mathrm{const.}\}$- and
$\{u=0, v=\mathrm{const.}\}$-level sets, that is, the respective adapted cuts.

The induced metric on each cut
$\one{S}_u := \{ u = \mbox{const} \} \subset N_1$ and
$\two{S}_v:= \{ v = \mbox{const} \} \subset N_2$ is denoted by
$\widehat{g}$ and similarly the associated covariant derivative,
connection coefficients, Ricci tensor etc.\ carry a hat.
The induced Riemannian metric on the intersection surface $S$ will still be denoted by $\gamma$ 
and, correspondingly, connection coefficients, Ricci tensor etc.\ will be decorated with $\gamma$. Its covariant derivative will be denoted by $D$ as before.

In adapted null coordinates we take $\K:= \partial_u$ on $N_1$ and
$ \L:= \partial_v$ on $N_2$. Then the non-vanishing components of the respective null second fundamental forms are given by
\begin{equation*}
{}^{(1)}K_{AB} = \frac{1}{2}\partial_u g_{AB}
\,,
\quad
{}^{(2)}K_{AB} =\frac{1}{2}\partial_v g_{AB}\,.
\end{equation*}
Note that in these adapted coordinates, the metric component $g_{uv}$ on $S$ reads $g_{uv}|_{S} = \nu$ and the torsion one-form $\bm{\torsion}$ relative to $k$ and
$\ell$ of Theorem \ref{thm_Cauchy_problem} is \cite{manyways}
%
\begin{align}
  \torsion_A\overset{S}{=} \frac{1}{2}(\Gamma^v_{vA}-\Gamma^u_{uA})\,.
  \label{exp:varsigma}
\end{align}

%
%

\begin{remark}
{\rm
For our purposes it is more convenient to  prescribe the shears ${}^{(a)}\pi_{AB}$ of $N_a$, $a=1,2$, together
with $\gamma_{AB} := g_{AB}|_S$ and the expansions of $N_1$ and $N_2$ at $S$, ${}^{(a)}\theta|_S$, rather than
the family $ g^\circ_{AB}$ together with $\Omega$, $\partial_u\Omega$ and $\partial_v\Omega$ at $S$.

Note for this that the shear is determined from $ g^\circ_{AB}$ by a first-order ODE. The remaining freedom to prescribe
$ g^\circ_{AB}|_S$ together with that to choose $\Omega|_S$ can be combined into the prescription of  $\gamma_{AB}$,
i.e.\ the induced Riemannian metric on $S$.
The freedom to prescribe   $\partial_u\Omega$ and $\partial_v\Omega$ at $S$ can be identified with that to choose
the expansions ${}^{(a)}\theta$ of $N_a$ at $S$.
}
\end{remark}

Some remarks concerning the gauge freedom are in order. Usually when solving the evolution problem
one imposes a (generalized) wave-map gauge condition \cite{f1, CCM2}.
In this paper, though, it is irrelevant how the coordinates are extended off the initial hypersurface.
We will nevertheless exploit some of the gauge freedom which arises from the freedom to prescribe the adapted null coordinates
on the initial hypersurface: There remains the gauge freedom to reparametrize the null geodesic generators
of $N_1$ and $N_2$, which can be used to prescribe the connection coefficients $\Gamma^u_{uu}|_{N_1}$ and $\Gamma^v_{vv}|_{N_2}$ \cite{CCM2}. They will vanish if and only if $u$ and $v$ are affine parameters on $N_1$ and
$N_2$, respectively. The remaining reparametrization freedom can be used to prescribe $g_{uv}|_S > 0$ and add
gradients to the torsion one-form $\varsigma$ \cite{KIDs}.

The $\Lambda$-vacuum equations imply transport equations of certain fields along
the null geodesic generators of $N_a$. Here we provide the relevant equations which will be used later on.
It is convenient to set
$$
{}^{(1)}\Xi_{AB}\overset{N_1}{:=}2\Gamma^{u}_{AB}+{}^{(1)}K_{AB}g^{uu}
\,, \quad
{}^{(2)}\Xi_{AB}\overset{N_2}{:=}2\Gamma^{v}_{AB}+{}^{(2)}K_{AB}g^{vv}.
$$
%
%
In geometrical terms, $\frac{1}{2} \one{\Xi}_{AB}$ is the null second fundamental
form of the sections $\{ u=\mbox{const}\}$ along the null normal $\underline{\K}$
satisfying $\la \K,\underline{\K} \ra = -1$, and similarly
for $\two{\Xi}_{AB}$.

The following equations hold, which we present for reasons of definiteness on $N_1$, cf.\ \cite{CCM2, manyways, rendall},
\begin{eqnarray}
\Big(\partial_u +\frac{{}^{(1)}\theta}{n-1}- \Gamma^u_{uu}\Big){}^{(1)}\theta+|{}^{(1)}\pi|^2  &\overset{N_1}{=}& 0
\,,
\label{constr1}
\\
(\partial_u + {}^{(1)}\theta )\Gamma^u_{uA} - \frac{n-2}{n-1} \partial_A{}^{(1)}\theta -\partial_A\Gamma^u_{uu}+\widehat\nabla_B{}^{(1)}\pi_A{}^B  &\overset{N_1}{=}&0
\,,
\label{constr2}
\\
(\partial_u+ {}^{(1)}\theta + \Gamma^u_{uu})\mathrm{tr}({}^{(1)}\Xi)- 2g^{AB}(\widehat \nabla_A+ \Gamma^u_{uA})\Gamma^u_{uB}+\widehat R &\overset{N_1}{=}& 2\Lambda
\,.
\label{constr3}
\end{eqnarray}
As mentioned above, $\widehat \nabla$ and $\widehat R $ refer to the Levi-Civita covariant derivative and Ricci scalar of the Riemannian family
 $\widehat g=  g_{AB}\mathrm{d}x^A\mathrm{d}x^B|_{N_1\cup N_2}$, and $|{}^{(1)}\pi|^2 :={}^{(1)}\pi_A{}^B{}^{(1)}\pi_B{}^A$.
 Moreover, from $(R_{AB})_{\mathrm{tf}}=0$ we find
%
\begin{eqnarray}
\Big(\partial_{u}+\frac{n-5}{2(n-1)}{}^{(1)}\theta + \Gamma^u_{uu}\Big)({}^{(1)}\Xi_{AB})_{\mathrm{tr}}
-2({}^{(1)}\pi_{(A}{}^C({}^{(1)}\Xi_{B)C})_{\mathrm{tf}})
&&
\nonumber
\\
- \frac{1}{2}
  \mathrm{tr}({}^{(1)}\Xi) {}^{(1)}\pi_{AB}
-2(\widehat\nabla_{(A}\Gamma^u_{B)u})_{\mathrm{tf}}
 -2(\Gamma^{u}_{Au}\Gamma^{u}_{Bu})_{\mathrm{tf}}+( \widehat R_{AB})_{\mathrm{tf}}&\overset{N_1}{=}&0
\,.\phantom{xx}
\label{constr4}
\end{eqnarray}
The initial data for \eq{constr1} are part of the free data.
In contrast, the initial data for \eq{constr2} are determined by the torsion one-form and $g_{uv}|_S$, while those for \eq{constr3} and \eq{constr4}
are determined by expansion and shear at $S$, respectively, of the other null hypersurface,
\begin{equation}
{}^{(1)}\Xi_{AB} \overset{S}{=}-\frac{2}{\nu}  {}^{(2)}K_{AB} \,, \quad
{}^{(2)}\Xi_{AB}\overset{S}{=} - \frac{2}{\nu} {}^{(1)}K_{AB}
\,,
\end{equation}
which follows immediately from the fact that, on $S$,
$\underline{\K} = - \frac{1}{\nu} \L$ and
$\underline{\L} = - \frac{1}{\nu} \K$.

We also need the following fact concerning torsion one-forms on the cuts
$\one{S}_u$ and $\two{S}_v$. Denoting by $\one{s}_A$  (resp.
$\two{s}_A$)
the torsion
one-form with respect to $\{ \K,\underline{\K} \}$
(resp. $\{ \L,\underline{\L} \}$) one easily checks that
  \begin{align}
    \one{s}_A \stackrel{\one{S}_u}{=} - \Gamma^u_{uA}, \quad
    \quad
    \two{s}_A \stackrel{\two{S}_v}{=} - \Gamma^v_{vA}.
    \label{ss}
  \end{align}
    In particular, this means
  that these Christoffel symbols are in fact tensorial on each cut  so
  $\widehat{\nabla}$-covariant derivatives (as in e.g. (\ref{constr3}) or
  (\ref{constr4})) make sense.  The following identities on $S$
follow directly from the definitions  and from (\ref{exp:varsigma}), respectively
\begin{align}
  - \frac{1}{\nu} \partial_A \nu \stackrel{S}{=} \one{s}_A + \two{s}_A,
  \quad \quad \quad
  \torsion_A \stackrel{S}{=} \frac{1}{2} \left ( \one{s}_A - \two{s}_A \right ).
  \label{torsion_forms_S}
 \end{align}

\subsection{KID equations}
\label{sec_KIDs}

We are interested in generating $\Lambda$-vacuum spacetimes which admit Killing vectors $\zeta$ in terms
of an  initial value problem.  The spacetime Killing equation $\mcL_{\zeta} g=0$ is then replaced by the
\emph{Killing initial data (KID) equation}.  If and only if it admits $m$ independent solutions,
the emerging  vacuum spacetime admits $m$ independent Killing vectors.
For a characteristic initial value problem they have been derived in \cite{KIDs}:

\begin{theorem}[\cite{KIDs}]
Consider two smooth hypersurfaces $N_a$, $a=1,2$, in an $(n+1)$-dimensional manifold, $n\geq 3$, with transverse intersection along a smooth $(n-1)$-dimensional submanifold $S$ in adapted null coordinates.
Let $ \zeta$
 be a continuous vector field defined on
$N_1\cup N_2$ such that $\zeta|_{N_1}$ and $\zeta|_{N_2}$ are smooth. Then $\zeta$ extends to a smooth vector field  on $D^+(N_1\cup N_2)$ satisfying
the Killing equation and coinciding with $ \zeta$ on $N_1\cup N_2$  if and only if the KID equations hold by which we mean that on $N_1$
(note that the equations on  $N_1$ and $N_2$ do not involve transverse derivatives of $ \zeta$)
\begin{eqnarray}
\nabla_u\zeta_u  &\overset{N_1}{=}& 0
\,,
\label{KID_gen1}
\\
\nabla_{(u} \zeta_{A)}  &\overset{N_1}{=}& 0
\,,
\label{KID_gen2}
\\
(\nabla_{(A}\zeta_{B)} )_{\mathrm{tf}} &\overset{N_1}{=}& 0
\,,
\label{KID_gen3}
\\
\nabla_u\nabla_u \zeta^u &\overset{N_1}{=}& R_{\mu uu}{}^u \zeta^{\mu}
\,,
\label{KID_gen4}
\end{eqnarray}
with identical corresponding conditions on $N_2$, while on $S$ one further needs to assume 
\begin{eqnarray}
\nabla_{(u}\zeta_{v)} &\overset{S}{=}& 0
\,,
\label{KID_gen5}
\\
g^{AB}\nabla_A\zeta_B &\overset{S}{=}& 0
\,,
\label{KID_gen6}
\\
\partial_u(g^{AB}\nabla_A\zeta_B) &\overset{S}{=}& 0
\,,
\label{KID_gen7}
\\
\partial_v(g^{AB}\nabla_A\zeta_B) &\overset{S}{=}& 0
\,,
\label{KID_gen8}
\\
  \nabla_A\nabla_{[u}\zeta_{v]}
                      &\overset{S}{=}& R_{vuA}{}^{\mu} \zeta_{\mu}
\,.
\label{KID_gen9}
\end{eqnarray}
\end{theorem}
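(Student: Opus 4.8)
The plan is to establish the two implications by quite different means: the ``only if'' part is a direct restriction computation, while the ``if'' part rests on extending $\zeta$ off the data by a covariant wave equation and then propagating the Killing property by a uniqueness argument for the characteristic Cauchy problem. For necessity, I would assume $\zeta$ is a Killing vector of $(U,g)$ and invoke the two universal identities for a Killing field, $\nabla_{(\mu}\zeta_{\nu)}=0$ and $\nabla_\alpha\nabla_\beta\zeta_\gamma = R_{\gamma\beta\alpha}{}^{\mu}\zeta_\mu$. Contracting with the adapted frame $(\partial_u,\partial_v,\partial_A)$ and restricting to $N_1$, $N_2$ and $S$ reproduces \eqref{KID_gen1}--\eqref{KID_gen9} one by one: \eqref{KID_gen1}--\eqref{KID_gen3} are the $uu$, $uA$ and trace-free $AB$ parts of the Killing equation on $N_1$; \eqref{KID_gen4} is the $uu$ component of the second-derivative identity with the free index raised; and \eqref{KID_gen5}--\eqref{KID_gen9} are the analogous mixed components evaluated at the corner $S$. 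No field equations enter this direction.

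For sufficiency---the substantial direction---I would first extend $\zeta$ off $N_1\cup N_2$ as the unique solution of the covariant wave equation $\Box\zeta_\mu + R_\mu{}^{\nu}\zeta_\nu=0$, which in $\Lambda$-vacuum reads $\Box\zeta_\mu + \tfrac{2\Lambda}{n-1}\,\zeta_\mu=0$, posed as a characteristic Cauchy problem with data $\zeta|_{N_1\cup N_2}$. Existence and uniqueness of a smooth solution on $D^+(N_1\cup N_2)$ follow from the same linear characteristic theory that underlies Theorem~\ref{thm_Cauchy_problem}. I would then introduce the deviation tensor $A_{\mu\nu}:=(\mcL_\zeta g)_{\mu\nu}=\nabla_\mu\zeta_\nu+\nabla_\nu\zeta_\mu$ and show, using the wave equation for $\zeta$, the $\Lambda$-vacuum equations and the contracted Bianchi identity, that $A_{\mu\nu}$ obeys a \emph{homogeneous} second-order hyperbolic system of the schematic form $\Box A_{\mu\nu}= \mathcal{L}(A,\nabla A)$ with coefficients built from the curvature. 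By uniqueness for the characteristic Cauchy problem associated with this system, it then suffices to prove that $A_{\mu\nu}$ vanishes identically on the initial hypersurfaces $N_1\cup N_2$.

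The core of the argument, and where I expect the real work to lie, is this last reduction. On $N_1$ the intrinsic components of $A_{\mu\nu}$---namely $A_{uu}$, $A_{uA}$ and $(A_{AB})_{\mathrm{tf}}$---are annihilated directly by \eqref{KID_gen1}, \eqref{KID_gen2} and \eqref{KID_gen3}. The remaining components ($A_{uv}$, $A_{vv}$, $A_{vA}$ and the trace $g^{AB}A_{AB}$) involve the transverse direction; to control them I would derive, from the wave equation for $\zeta$ together with the vacuum constraints \eqref{constr1}--\eqref{constr4} and the commutation of covariant derivatives, transport equations along the generators $\partial_u$ of $N_1$ for these components and for the coupled first-derivative (Killing-form) quantities. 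The task is to verify that this transport system is homogeneous and linear, with sources that vanish once the intrinsic components do, and that its initial values at the corner $S$ are precisely those fixed by the corner conditions \eqref{KID_gen5}--\eqref{KID_gen9}---the on-surface relation \eqref{KID_gen4} closing the trace sector along $N_1$, and the second-order corner relation \eqref{KID_gen9} initiating the propagation of the first-derivative data. The main obstacle is thus bookkeeping-heavy rather than conceptual: one must check that the listed equations close the transport system with vanishing sources, are mutually compatible at $S$, and leave no component unaccounted for, so that $A_{\mu\nu}\equiv 0$ on $N_1\cup N_2$ and hence, by the uniqueness step, on all of $D^+(N_1\cup N_2)$.
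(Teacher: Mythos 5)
Your proposal is correct and follows essentially the same approach as the proof of this theorem, which the paper itself does not prove but quotes from \cite{KIDs}: there, too, necessity is a direct restriction of the Killing identities, while sufficiency proceeds by extending $\zeta$ through the covariant wave equation $\Box\zeta_\mu + R_\mu{}^\nu\zeta_\nu=0$, deriving from the $\Lambda$-vacuum equations a homogeneous linear hyperbolic system for $\nabla_\mu\zeta_\nu+\nabla_\nu\zeta_\mu$, invoking uniqueness for the characteristic Cauchy problem, and reducing everything to transport equations along the null generators whose corner data are fixed by \eqref{KID_gen5}--\eqref{KID_gen9}. The bookkeeping you defer (verifying that the transport system closes homogeneously under precisely the listed conditions) is indeed the bulk of the work in \cite{KIDs}, and your outline locates it correctly.
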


\subsection{Bifurcate Killing horizon}
\label{sec_bifurcate}

We consider a vacuum spacetime $( M ,g)$ in $n+1$ dimensions, $n\geq 3$, possibly with cosmological constant $\Lambda$, which admits a Killing vector $\xi$ which generates a bifurcate Killing horizon $\mathcal{H}=\mathcal{H}_1^+\cup\mathcal{H}_2^+\cup\mathcal{H}_1^-\cup\mathcal{H}_2^-\cup S$  according to Definition \ref{defi} and the comment that follows it.
The associated surface gravity is constant, $\kappa_{\xi}=\mathrm{const.}\ne 0$ on $\mathcal{H}$.
In fact, it is shown in \cite{re} that, adding a certain technical condition, any spacetime  with a non-degenerate Killing horizon with constant surface gravity 
can be extended to a spacetime where this Killing horizon forms a portion of a bifurcate Killing horizon.

For now, let us restrict attention to those parts of $\mathcal{H}$ which lie in the causal future of $S$,  that is to say, $\mathcal{H}^+_1$ and $\mathcal{H}^+_2$.
Then the spacetime $D^+(\mathcal{H}^+_1\cup\mathcal{H}^+_2)$ can be generated in terms of a characteristic initial value problem
for appropriately  data specified on $\mathcal{H}^+_1\cup\mathcal{H}^+_2\cup S$.

\begin{theorem}[\cite{ref, hiw, KIDs}]
\label{theorem_fund_form}
\begin{enumerate}
\item[(i)]
$D^+(\mathcal{H}^+_1\cup\mathcal{H}^+_2)$ admits a Killing vector field $\xi$ for which both, $\mathcal{H}^+_1$ and $\mathcal{H}^+_2$, are Killing horizons
if and only if the null second fundamental form ${}^{(a)}K$ vanishes on $\mathcal{H}^+_1$ and $\mathcal{H}^+_2$. In that case $\xi$ is unique (up to constant rescaling) and  the associated surface gravity  is constant and satisfies
$$
\kappa_{\xi}(\mathcal{H}^+_1)=-\kappa_{\xi}(\mathcal{H}^+_2)=\partial_{[u}\xi_{v]}|_S \ne 0\,,
$$
i.e.\ the bifurcate horizon in non-degenerate.
Moreover, $\xi \overset{S}{=}0$.
\item[(ii)] Assume that such a $\xi$ exists, then $D^+(\mathcal{H}^+_1\cup\mathcal{H}^+_2)$ admits a second independent Killing vector  $\zeta$,
tangent to $S$,  if and only
if the bifurcation surface $S$ with the Riemannian metric $\gamma\equiv g_{AB}\mathrm{d}x^A\otimes\mathrm{d}x^B|_S$ admits a non-trivial (i.e. not the zero vector)
Killing vector   $\ol  \zeta=\ol  \zeta^A\partial_A$ such that the $\ol  \zeta$-Lie-derivative of the torsion one-form of $S$ is exact.
The Killing vector $\zeta$ is then uniquely determined (up to an additive $c \xi$, $c\in\mathbb{R}$) by the condition $\zeta|_S=\ol  \zeta$,
 and is tangent to $\mathcal{H}^+_1\cup\mathcal{H}^+_2$.
\end{enumerate}
\end{theorem}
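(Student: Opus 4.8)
The plan is to reduce both statements to the characteristic KID equations \eqref{KID_gen1}--\eqref{KID_gen9}, exploiting that a smooth Killing field on $D^+(\mathcal{H}^+_1\cup\mathcal{H}^+_2)$ exists precisely when these hold for its restriction $\zeta$ to $N_1\cup N_2\cup S$. For part~(i), I would first encode the requirement that $\xi$ make each $\mathcal{H}^+_a$ a Killing horizon: being tangent and null on $N_1=\{v=0\}$ forces $\xi|_{N_1}=\xi^u\partial_u$ (that is, $\xi^v=\xi^A=0$ there), and likewise $\xi|_{N_2}=\xi^v\partial_v$. Matching at the intersection, where $\partial_u$ and $\partial_v$ are transverse, immediately yields $\xi\overset{S}{=}0$.

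Then I feed this ansatz into the KID system. Using ${}^{(1)}K_{AB}=\tfrac{1}{2}\partial_u g_{AB}$, the tangential equations \eqref{KID_gen2}--\eqref{KID_gen3} force the shear ${}^{(1)}\pi_{AB}$ to vanish wherever $\xi^u\neq0$, and, together with the Raychaudhuri constraint \eqref{constr1} and the Killing-horizon condition that the expansion be preserved along the flow, give ${}^{(1)}\theta=0$, hence ${}^{(1)}K=0$ on $N_1$ (and ${}^{(2)}K=0$ on $N_2$); by continuity this extends to the whole horizon, proving necessity. Sufficiency runs the argument backwards: assuming ${}^{(a)}K=0$, the remaining equations \eqref{KID_gen1} and \eqref{KID_gen4} become linear transport equations along the generators for the single components $\xi^u$, resp.\ $\xi^v$, whose unique solution with $\xi|_S=0$ is fixed up to one multiplicative constant determined on $S$. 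This constant is exactly $\partial_{[u}\xi_{v]}|_S$; the cross-section equations \eqref{KID_gen5} and \eqref{KID_gen9} guarantee its consistency, identify it with $\kappa_\xi$, and force the opposite signs $\kappa_\xi(\mathcal{H}^+_1)=-\kappa_\xi(\mathcal{H}^+_2)$ since it enters the $\partial_u$- and $\partial_v$-transport with reversed roles. Constancy of $\kappa_\xi$ follows from the transport together with the vacuum constraints \eqref{constr1}--\eqref{constr4}, and non-degeneracy holds because $\kappa_\xi=0$ would give $\xi^u\equiv\xi^v\equiv0$, i.e.\ the trivial field. Uniqueness up to constant rescaling is precisely the one-parameter freedom in this integration constant.

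For part~(ii) I would seek a second solution $\zeta$ of the KID system with $\zeta|_S=\ol\zeta=\ol\zeta^A\partial_A$ tangent to $S$. The purely tangential conditions \eqref{KID_gen3} and \eqref{KID_gen6} evaluated on $S$ combine --- using ${}^{(a)}K=0$ to discard the normal contributions --- into $D_{(A}\ol\zeta_{B)}=0$, so $\ol\zeta$ must be a Killing vector of $(S,\gamma)$. The remaining cross-section equations \eqref{KID_gen5}, \eqref{KID_gen7}--\eqref{KID_gen9} govern the off-$S$ growth of the normal components $\zeta^u,\zeta^v$, which vanish on $S$; rewriting them with \eqref{exp:varsigma} and \eqref{torsion_forms_S} shows that an integrable extension exists if and only if $\mcL_{\ol\zeta}\bm{s}$ equals the differential of a function on $S$, i.e.\ $\mcL_{\ol\zeta}\bm{s}$ is exact --- the admissible exact ambiguity being exactly the boost freedom \eqref{transformation} of the torsion. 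Given such an $\ol\zeta$ and a primitive $f$ with $\mcL_{\ol\zeta}\bm{s}=df$, I construct $\zeta$ by transporting $\ol\zeta$ along the generators via the KID transport equations; the result is tangent to $N_1\cup N_2$ (the equation for $\zeta^v$ on $N_1$ being homogeneous with zero initial value) and, by the KID theorem, extends to a spacetime Killing vector. Since $S$ is connected, $f$ is unique up to an additive constant, and that constant shifts $\zeta$ by a multiple of $\xi$ (which vanishes on $S$), yielding uniqueness up to $\zeta\mapsto\zeta+c\xi$.

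The main obstacle is the derivation in part~(ii) of the exactness criterion from the cross equations \eqref{KID_gen5}--\eqref{KID_gen9}: one must compute $\mcL_{\ol\zeta}\bm{s}$ explicitly in adapted null coordinates, keep careful track of the boost/gauge ambiguity \eqref{transformation} responsible for the ``up to an exact form'' clause, and verify that the transported $\zeta$ remains tangent to both horizons while still matching $\ol\zeta$ on $S$. In part~(i) the comparatively delicate point is extracting the sign relation and the constancy of $\kappa_\xi$ cleanly from the interplay of the transport equations, the constraints \eqref{constr1}--\eqref{constr4}, and the reparametrization gauge, rather than from any single manipulation.
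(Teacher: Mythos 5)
The paper itself contains no proof of Theorem~\ref{theorem_fund_form}: it is quoted from \cite{ref,hiw,KIDs}, and what Section~\ref{sec_bifurcate} does afterwards is precisely the reduction you propose --- recast everything in terms of the characteristic KID equations \eqref{KID_gen1}--\eqref{KID_gen9} and boil them down to equations on $S$ --- so your route is the ``right'' one, and your part (ii) tracks the paper's derivation of \eqref{KID_S1}--\eqref{KID_S5} and \eqref{tran_deriv} (tangential equations forcing $D_{(A}\ol{\zeta}_{B)}=0$, solvability of \eqref{tran_deriv} being equivalent to exactness of $\mcL_{\ol{\zeta}}\bm{\torsion}$, the additive integration constant realizing the $c\,\xi$ ambiguity). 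Still, two steps do not hold up as written. In the necessity half of (i), you get the shear from \eqref{KID_gen3} but then appeal to Raychaudhuri plus an unproved ``expansion preserved along the flow'' claim to kill ${}^{(1)}\theta$; note that the KID system contains no tangential trace equation on $N_1$ (the trace only enters through \eqref{KID_gen6}--\eqref{KID_gen8}, which live on $S$), so your chain as stated does not close. The standard fix is cheaper: in the necessity direction you have the full Killing equation, and on $N_1$ one computes $\nabla_A\xi_B = g^{uv}\,{}^{(1)}K_{AB}\,\xi_v$, which is symmetric, so $\nabla_{(A}\xi_{B)}=0$ annihilates the whole of ${}^{(1)}K$ (trace and trace-free parts simultaneously) wherever $\xi\neq 0$, i.e.\ Killing horizons are totally geodesic; alternatively, within the KID system, \eqref{KID_gen7} yields ${}^{(1)}\theta|_S=0$ and shear-free Raychaudhuri then propagates ${}^{(1)}\theta\equiv 0$ along the generators.

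The genuine gap is in the sufficiency half of (ii). You ``construct $\zeta$ by transporting $\ol{\zeta}$ along the generators \dots and, by the KID theorem, extend'', but the KID theorem requires \eqref{KID_gen3} to hold on all of $N_1\cup N_2$, and \eqref{KID_gen3} is a constraint, not a transport equation: a field obtained by solving the ODEs \eqref{KID_gen1}, \eqref{KID_gen2}, \eqref{KID_gen4} from data on $S$ has no a priori reason to satisfy it away from $S$. The crux of the argument is the computation the paper performs after Remark~\ref{rem}: using \eqref{constr4}, \eqref{ODE_KID1} and \eqref{ODE_KID2}, the $u$-derivative of \eqref{constr_KID} vanishes identically when the initial hypersurfaces are totally geodesic, so the constraint is $u$-independent and only needs to be imposed on $S$ --- the paper explicitly calls this ``a crucial property'', and it is exactly what breaks down for the more general data of Section~\ref{sec_fully_deg}. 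Your proposal lists this verification among the ``obstacles'' but never supplies it; without it, the reduction of the KID system to the $S$-system (equivalently \eqref{KID1}--\eqref{KID3}) and hence the ``if'' direction of part (ii) is unjustified.
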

It follows immediately that there cannot exist a second independent Killing vector $\zeta$ with Killing horizons $\mathcal{H}^+_1$ \emph{and} $\mathcal{H}^+_2$
 as this would imply that $\ol \zeta := \zeta |_S =0$,
which contradicts the fact that $\ol \zeta$ has to be non-trivial (item (ii) of the theorem).
However, $\zeta$ may have either  $\mathcal{H}^+_1$ or  $\mathcal{H}^+_2$ as Killing horizon.
So let us analyze under which conditions one of the horizons,  $\mathcal{H}^+_1$ say, is simultaneously the Killing horizon of a second Killing vector field, which we denote by $\eta$. Proportionality of $\eta$ and
$\xi$ on $\H_1^+$ is equivalent,  
in adapted null coordinates, to
\begin{align}
\eta^v |_{\mathcal{H}_1^+}= 0 \,, \quad
\eta^u|_{\mathcal{H}_1^+}\ne0\,, \quad \eta^A|_{\mathcal{H}_1^+}=0
  \, .
  \label{proportionality}
\end{align}
%
At $\H_1^+$ the metric in adapted null coordinates is of the form
\begin{align}
  ds^2 \stackrel{\H_{1}^+}{=} dv \left ( g_{vv} dv + 2 g_{uv} du
  + 2 g_{vA} dx^A \right )
  + \widehat{g}_{AB} dx^A dx^B, \quad
  g_{uv} \stackrel{S}{=} \nu,  \enspace g_{vv} \stackrel{S}{=} g_{vA} \stackrel{S}{=} 0,
  \label{metric}
\end{align}
so an equivalent form of (\ref{proportionality}) is
\begin{equation}
\eta_v|_{\mathcal{H}_1^+}\ne  0 \,, \quad
\eta_u|_{\mathcal{H}_1^+}=0\,, \quad \eta_A |_{\mathcal{H}_1^+}=0
\,.
\label{horizon_conds}
\end{equation}
We shall analyze to what extent these additional conditions are  compatible with the KID equations.
For this we shall first rewrite and simplify the system of KID equations in the special case of  a bifurcate Killing horizon.


The null hypersurfaces $N_1$ and $N_2$ are required to form a bifurcate Killing horizon, i.e.\
 the null second fundamental forms need to vanish (cf.\ Theorem~\ref{theorem_fund_form}).
With regard to the characteristic initial value problem this is achieved by  the initial data
\begin{equation}
{}^{(a)}\pi_{AB} = 0\,, \quad  {}^{(a)}\theta \overset{S}{=}0\,, \quad a=1,2\,.
\end{equation}
In this setup we denote $N_1$ and $N_2$ by
$\mathcal{H}^+_1$ and $\mathcal{H}^+_2$.
The Raychaudhuri equation then implies ${}^{(a)}\theta =0$. 
Equation (\ref{constr2})
together with $\one{K}_{AB}=0$ and (\ref{ss}) becomes
\begin{align}
  \partial_u \one{s}_A + \partial_{A} \Gamma^{u}_{uu} =0
  \label{dersone}
\end{align}
From this equation it follows that the Riemann tensor components
$R_{Auu}{}^{\mu}$ vanish. Indeed, let $e_A := \partial_{x^A}$ so that
$[\K, e_A] =0$. The vanishing of $\one{K}_{AB}$ and the definition
of the torsion 1-form gives (cf. (\ref{torsion}))
\begin{align}
  \nabla_{e_A} \K \stackrel{\H_1^+}{=}
  - \one{s}_A \K.
  \label{nablaeK}
\end{align}
Inserting in the Ricci identity and using $\nabla_{\K} \K
\stackrel{\H_1^+}{=}
\Gamma^{u}_{uu} \K$ yields, on $\H_{1}^+$,
\begin{align}
  R_{Auu}{}^{\mu} =  R_{\alpha\beta\delta}{}^{\mu} e_A^{\alpha}
  \K^{\beta} \K^{\delta} =
  \left ( \nabla_{\K} \nabla_{e_A} - \nabla_{e_A} \nabla_{\K} -
  \nabla_{[\K,e_A]}
  \right ) \K^{\mu} = -
  \left (   \K(\one{s}_A)  + \partial_A \Gamma^u_{uu} \right ) \K^{\mu} = 0.
  \label{curv1}
\end{align}
This means in particular that the right hand side of (\ref{KID_gen4}) is
$R_{\mu uu}{}^u \zeta^{\mu} = R_{vuu}{}^u \zeta^v
= g_{uv}^{-1} R_{vuu}{}^u \zeta_u $, the second equality following from
\begin{align*}
  g^{v\alpha} \stackrel{\H_1^+}{=} g_{uv}^{-1} \delta^{\alpha}{}_{u}
\end{align*}
which is found by inverting the metric (\ref{metric}).
Using the formulas in \cite[Appendix~A]{CCM2} one checks that the
KID equations (\ref{KID_gen1})-(\ref{KID_gen4}) reduce to the following system
on $\mathcal{H}^+_1$ (and correspondingly on $\mathcal{H}^+_2$):
\begin{eqnarray}
(\partial_u - \Gamma^u_{uu}) \zeta_{u}  &\overset{\mathcal{H}^+_1}{=} & 0
\,,
\label{ODE_KID1}
\\
\partial_{u} \zeta_{A} + (\widehat\nabla_{A} -2\Gamma^u_{uA})\zeta_{u}  &\overset{\mathcal{H}^+_1}{=} &0
\,,
\label{ODE_KID2}
\\
(\widehat\nabla_{(A}\zeta_{B)} )_{\mathrm{tf}} -(\Gamma^u_{AB})_{\mathrm{tf}} \zeta_u &\overset{\mathcal{H}^+_1}{=} & 0
\,,
\label{constr_KID}
\\
\partial_u(\partial_u+ \Gamma^u_{uu}) \zeta^u
-\frac{1}{2}\zeta_u(\partial_u+ \Gamma^u_{uu})(\partial_u+2 \Gamma^u_{uu})g^{uu}
\nonumber
&&
\\
-g^{AB}\Gamma^u_{uA} \Gamma^u_{uB}\zeta_u
+g^{AB}\zeta_{B}\partial_u \Gamma^u_{uA}
+2g^{AB}\Gamma^u_{uA}\partial_u\zeta_{B}  &\overset{\mathcal{H}^+_1}{=} & g_{uv}^{-1} R_{vuu}{}^u \zeta_u
\,.
\label{ODE_KID3}
\end{eqnarray}
%
%
%
%

We next rewrite the KID equation on $S$. In particular we need to evaluate
the curvature
components in (\ref{KID_gen9}):
\begin{align*}
  R_{vuA\mu} \zeta^{\mu} \stackrel{S}{=}
  R_{vuAu} \zeta^u + R_{vuAv} \zeta^v +
  R_{vuAB} \zeta^B = R_{vuAB} \zeta^B,
\end{align*}
the first two terms being zero because of (\ref{curv1}) and the corresponding identity on $\H_2^+$. To evaluate the remaining term we first note that, on $\H_{1}^{+}$,
  \begin{align*}
    \la \underline{\K}, \nabla_{e_A} \nabla_{e_B} \K \ra
    = \nabla_{e_A} \la \underline{\K},
    \nabla_{e_B} \K \ra - \la \nabla_{e_A} \underline{\K},
    \nabla_{e_B} \K \ra =
    \nabla_{e_A} \one{s}_B - \one{s}_A \one{s}_B
\,,
  \end{align*}
  where we used (\ref{nablaeK}) and the corresponding
  $\nabla_{e_A}  \underline{\K} = \one{s}_A \underline{K}$.
  From $\partial_v \stackrel{S}{=} - \nu \underline{\K}$ and
  $[e_A,e_B] =0$ we have
  \begin{align}
    R_{vuAB} \stackrel{S}{=} - \nu \la \underline{ \K}, \nabla_{e_A} \nabla_{e_B}
    \K - \nabla_{e_B} \nabla_{e_A} \K \ra =
    \nu \left ( \partial_B \one{s}_A - \partial_A \one{s}_B \right )
    = \nu \left ( \partial_B \torsion_A - \partial_A \torsion_B \right ),
    \label{Riemann_component}
    \end{align}
  where in the last equality we used that boosting a null basis changes
  the torsion one-form with an exact one-form  (cf. (\ref{transformation})).

Set $\ol  \zeta:= \zeta^A\partial_A|_S$. On $S$ we then have
(cf.\ the corresponding computation in \cite{KIDs} but be aware that the  Killing vector was assumed there to be tangential
to $S$ whence some additional terms appear here) 
\begin{eqnarray}
  \left (  \partial_u +\Gamma^u_{uu} + \Gamma^{v}_{uv} \right )   \zeta^u
+  \left (  \partial_v +\Gamma^v_{vv} + \Gamma^{u}_{uv} \right )   \zeta^v
  +  \nu^{-1} \mcL_{\ol\zeta} \nu  &\overset{S}{=}&  0\, ,
  \label{KID_S1}
\\
D^A\zeta_A&\overset{S}{=}& 0
\,,
\\
D^A\partial_u\zeta_A-\Big(\gamma^{AB}\Gamma^u_{uA}\Gamma^u_{uB}
+D^A \Gamma^u_{uA} -\frac{{}^{\gamma}R}{2} + \Lambda\Big) \zeta_{u}&\overset{S}{=}& 0
\,,
\\
D^A\partial_v\zeta_A-\Big(\gamma^{AB}\Gamma^v_{vA}\Gamma^v_{vB}
+D^A \Gamma^v_{vA} -\frac{{}^{\gamma} R}{2} +  \Lambda\Big) \zeta_{v}&\overset{S}{=}& 0
\,,
\\
D_A(\partial_{v}+\Gamma^v_{vv}- \Gamma^u_{uv})\zeta^v-D_A(\partial_{u}+\Gamma^u_{uu}
-\Gamma^v_{uv})\zeta^u
+2\mcL_{\ol  \zeta}\varsigma_A
&\overset{S}{=}& 0
\,,
\label{KID_S5}
\end{eqnarray}
%
where indices on $D $ have been raised with $\gamma_{AB}$,
and where \eq{constr3} and (\ref{Riemann_component}) have been used.
%
For convenience let us  impose the following gauge conditions: require $u$ and $v$ to be affine parameters along the
null geodesic generators of $\mathcal{H}^+_1$ and $\mathcal{H}^+_2$, respectively. Then  (cf.\ \cite{CCM2})
\begin{equation}
 \Gamma^u_{uu}  \overset{\mathcal{H}^+_1}{=} 0\,, \quad  \quad \Gamma^v_{vv} \overset{\mathcal{H}^+_2}{=} 0
\,.
\label{cond_gauge}
\end{equation}
This leaves the freedom to do affine transformations of these coordinates. With the gauge choice $S=\{u=0, v=0\}$
there remains the freedom to do rescalings of the form $u\mapsto f^{(u)}(x^A) u$ and  $v\mapsto f^{(v)}(x^A) v$ which can be partially fixed  \cite{KIDs} to achieve that
\begin{equation}
\nu \overset{S}{=} 1
\,.
\label{cond_gauge2}
\end{equation}
The remaining coordinate freedom is $\{ \widetilde{u} = \phi^{-1}(x^A) u\} $ and
$\{ \widetilde{v} = \phi(x^A) v\}$. Its effect on $\torsion$ is
\begin{align}
  \widetilde{\bm{\torsion}} = \bm{\torsion} + \phi^{-1} d\phi
\,,
  \label{transtorsion}
\end{align}
because of (\ref{transformation}) together with
\begin{align*}
  \partial_{\widetilde{u}} \stackrel{S}{=} \phi \partial_u, \quad \quad
\partial_{\widetilde{v}} \stackrel{S}{=} \phi^{-1} \partial_u \, .
\end{align*}
Thus, all the remaining gauge freedom is to add gradients to $\bm{\torsion}$.

\begin{remark}\label{rem}
{\rm
In fact, the non-gauge part of the torsion one-form $\torsion$ together with the Riemannian metric $\gamma$ on $S$  provide the only ``physical'', non-gauge data for a characteristic initial value problem for the vacuum equations on a bifurcate Killing horizon.
}
\end{remark}

We want to compute the $u$-derivative of \eq{constr_KID}.
%
With \eq{constr4}, \eq{ODE_KID1} and \eq{ODE_KID2} we   obtain on $\mathcal{H}^+_1$
\begin{eqnarray*}
0 &\overset{\mathcal{H}^+_1}{=}&
(\widehat\nabla_{(A}\partial_u\zeta_{B)} )_{\mathrm{tf}} -\partial_u(\Gamma^u_{AB})_{\mathrm{tf}} \zeta_u
 -(\Gamma^u_{AB})_{\mathrm{tf}} \partial_u \zeta_u
\\
&\overset{\mathcal{H}^+_1}{=}&
-(\widehat\nabla_{A} \widehat\nabla_{B}\zeta_{u} )_{\mathrm{tf}}
+(\widehat\nabla_{(A}\Gamma^u_{B)u} )_{\mathrm{tf}} \zeta_{u}
+2 ( \Gamma^u_{u(A}\widehat\nabla_{B)}\zeta_{u} )_{\mathrm{tf}}
-(\Gamma^{u}_{Au}\Gamma^{u}_{Bu})_{\mathrm{tf}}\zeta_u
+\frac{1}{2}(\widehat R_{AB} )_{\mathrm{tf}}\zeta_u
\,.
\end{eqnarray*}
Observe that
$\widehat g$, $\widehat R_{AB}$, $\Gamma^u_{uA}|_{\mathcal{H}^+_1}$ and $\zeta_u|_{\mathcal{H}^+_1}$ are $u$-independent
(cf.\  \eq{constr2} and \eq{ODE_KID1}),
so this equation is independent of $u$ and therefore only needs to be satisfied at $S$ (similarly on $\mathcal{H}^+_2$).
 In fact, this is a crucial property which allows to reduce  the KID equations to  a system of equations on $S$ in the special case
where $S$ is a bifurcation surface (cf.\ the different setting in Section~\ref{sec_fully_deg}).

Using 
\eq{ODE_KID2}, \eq{constr_KID} and \eq{ss} with the fact that in the present gauge $\one{s}_A = - \two{s}_A = \varsigma_A$ on $S$ (cf. \eq{torsion_forms_S}),
one checks that the system
\begin{eqnarray}
D_{(A}\zeta_{B)} &\overset{S}{=}& 0
\,,
\\
\Big(\Delta_{\gamma} - 2 \torsion_AD^A
-D^A \torsion_A+ |\torsion|^2-\frac{{}^{\gamma} R}{2} +\Lambda\Big) \zeta_{v}
&\overset{S}{=}&0
\,,
\label{trace1}
\\
\Big(\Delta_{\gamma} + 2 \torsion_AD^A
+D^A \torsion_A+ |\torsion|^2-\frac{{}^{\gamma} R}{2} +\Lambda\Big) \zeta_{u}
&\overset{S}{=}&0
\,,
\label{trace2}
\\
\Big(D_{A} D_{B}
-2 \torsion_{(A}D_{B)}
-D_{(A}\torsion_{B)}
+\torsion_A\torsion_B
-\frac{1}{2}{}^ {\gamma} R_{AB} \Big)_{\mathrm{tf}}\zeta_v
&\overset{S}{=}&
0
\,,
\label{tracefree1}
\\
\Big(D_{A}D_{B}
+2 \torsion_{(A}D_{B)}
+D_{(A}\torsion_{B)}
+\torsion_A\torsion_B
-\frac{1}{2}{}^{\gamma} R_{AB} \Big)_{\mathrm{tf}}\zeta_u
&\overset{S}{=}&
0
\,,
\label{tracefree2}
\\
\partial_u\zeta^u+\partial_v\zeta^v+\Gamma^u_{uv}\zeta^v+\Gamma^v_{uv}\zeta^u
&\overset{S}{=}& 0
\,,
\label{tran_deriv2}
\\
D_A\Big(\partial_{v}\zeta^v-\partial_{u}\zeta^u
-\Gamma^u_{uv}\zeta^v+\Gamma^v_{uv}\zeta^u\Big)
+2\mcL_{\ol  \zeta}\torsion_A
&\overset{S}{=}&0
\label{tran_deriv}
\end{eqnarray}
is equivalent to  \eq{KID_S1}-\eq{KID_S5} plus the restriction of \eq{constr_KID} to $S$ as well as its $u$- and $v$-
derivatives.
Supposing that this  system admits a solution, the restriction of the Killing vector field $\zeta$ of the emerging vacuum spacetime to
$\mathcal{H}_1^+$ is computed  by solving the ODEs \eq{ODE_KID1}, \eq{ODE_KID2} and \eq{ODE_KID3}, and correspondingly on $\mathcal{H}_2^+$.
The initial data of the transport equations along the horizons is determined from the solution of
the system on $S$ as follows: assume a solution of the system
\eq{trace1}-\eq{tracefree2} has been selected. This fixes completely the initial data
for \eq{ODE_KID1}, \eq{ODE_KID2} and the corresponding equations of $\H_2^+$.
Now,  \eq{tran_deriv2}  can be read as an algebraic equation for $\partial_u\zeta^u + \partial_v\zeta^v|_S$ (equivalently $\partial_{(u}\zeta_{v)}|_S$) and
equation \eq{tran_deriv} is solvable if and only if $\mcL_{\ol  \zeta}{\bf s}$ is exact, and in that case it yields an equation algebraically solvable for
$\partial_u\zeta^u - \partial_v\zeta^v|_S$  (equivalently $\partial_{[u}\zeta_{v]}|_S$) in terms of an arbitrary additive constant. The additive constant
corresponds to the addition to $\zeta$ on $D^+(\H_1^+ \cup \H_2^+)$ of a constant
times $\xi$ (the Killing vector with respect to which $\H_1^{+} \cup \H_2^+$ is
a bifurcate horizon).  Indeed, for any
  constant $\alpha$ the vector field along
  $\H_1^+ \cup \H_2^+$ given by
  \begin{align*}
    \zeta_u \stackrel{\H_1^+}{=} \zeta_A \stackrel{\H_1^+}{=}    0,
    \quad
    \zeta^u  \stackrel{\H_1^+}{=} \alpha u, \quad \quad
    \quad \quad
    \zeta_v \stackrel{\H_2^+}{=} \zeta_A \stackrel{\H_2^+}{=}    0,
        \quad
    \zeta^v  \stackrel{\H_2^+}{=} -\alpha v,
  \end{align*}
  solves the full set of KID equations. Denoting by
  $\xi$ the Killing vector it defines on $D^+(\H_1^+ \cup \H_2^+)$, it is
  obvious that $\xi$ has $\H_1^+ \setminus S$ and $\H_2^+ \setminus S$ as
  Killing horizons and that it vanishes on $S$. It is also immediate from
  item (i) in theorem (\ref{theorem_fund_form})  that the surface gravities  are
  \begin{align*}
    \kappa_{\xi}(\mathcal{H}^+_1)=-\kappa_{\xi}(\mathcal{H}^+_2)= \alpha.
  \end{align*}

  Combining further \eq{trace1} \& \eq{tracefree1} and  \eq{trace2} \&  \eq{tracefree2}, respectively, into a single equation we have established the following
  theorem:

\begin{theorem}
\label{prop_KID_equations}
The data $(\gamma, \bm{\torsion})$ on $S$, supplemented by vanishing null second fundamental forms on $\mathcal{H}^+_1$ and $\mathcal{H}^+_2$ generate an $(n+1)$-dimensional $\Lambda$-vacuum spacetime with bifurcation surface $S$ with $k+1$ Killing vectors if and only if
the following set of KID equations admits $k$ independent solutions $(\zeta_u,\zeta_v, \zeta_A)|_S$:
\begin{align}
\text{$\ol \zeta\equiv \zeta^A\partial_A|_S$ is a Killing vector on $(S,\gamma)$ such that $\mcL_{\ol \zeta}{\bm{\torsion}}$ is exact}
\,,
\label{KID1}
\\
\Big(D_{A}D_{B}
-2  \torsion_{(A}D_{B)}
-D_{(A}\torsion_{B)}
+\torsion_A\torsion_B
-\frac{1}{2}{}^{\gamma} R_{AB}+\frac{\Lambda}{n-1} \gamma_{AB} \Big)\zeta_v
\overset{S}{=}
0
\,,
\label{KID2}
\\
\Big(D_{A}D_{B}
+2 \torsion_{(A}D_{B)}
+D_{(A}\torsion_{B)}
+\torsion_A\torsion_B
-\frac{1}{2}{}^{\gamma} R_{AB}+\frac{\Lambda}{n-1} \gamma_{AB}  \Big)\zeta_u
\overset{S}{=}
0
\,.
\label{KID3}
\end{align}
\end{theorem}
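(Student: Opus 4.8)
The plan is to exhibit a linear isomorphism between the space of Killing vectors of the emerging spacetime and the solution space of \eqref{KID1}--\eqref{KID3} enlarged by one dimension accounting for $\xi$. By Theorem~\ref{theorem_fund_form}(i) the vanishing of the null second fundamental forms already forces the existence of the distinguished Killing vector $\xi$ (unique up to scale, with $\xi\overset{S}{=}0$). Every Killing vector $\zeta$ of $D^+(\H^+_1\cup\H^+_2)$ is determined by its restriction to $\H^+_1\cup\H^+_2$, which is in turn reconstructed from the Cauchy data $(\zeta_u,\zeta_v,\ol\zeta)|_S$ together with the single number $\partial_{[u}\zeta_{v]}|_S$ by solving the transport equations \eqref{ODE_KID1}--\eqref{ODE_KID3} along the generators. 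First I would invoke the discussion preceding the statement, where the full KID system was shown equivalent to the $S$-system \eqref{trace1}--\eqref{tran_deriv}; the remaining task is then purely to repackage this system into the three conditions \eqref{KID1}--\eqref{KID3} and to carry out the dimension count.

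The algebraic heart is that the single second-order operators acting on $\zeta_v$ and $\zeta_u$ in \eqref{KID2} and \eqref{KID3} split cleanly into $\gamma$-trace and trace-free parts. Taking the trace of the operator in \eqref{KID2} one uses $\gamma^{AB}D_AD_B=\Delta_\gamma$, $\gamma^{AB}\torsion_{(A}D_{B)}=\torsion_AD^A$, $\gamma^{AB}\torsion_A\torsion_B=|\torsion|^2$, and crucially $\gamma^{AB}\gamma_{AB}=n-1$, so that the term $\frac{\Lambda}{n-1}\gamma_{AB}$ contributes exactly $\Lambda$; this reproduces \eqref{trace1}. Removing this pure-trace cosmological term leaves a trace-free operator identical to the one in \eqref{tracefree1}. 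Hence \eqref{KID2} is equivalent to the pair \eqref{trace1}\,\&\,\eqref{tracefree1}, and the same computation on $\zeta_u$ gives the equivalence of \eqref{KID3} with \eqref{trace2}\,\&\,\eqref{tracefree2}. Condition \eqref{KID1} collects the two remaining pieces of the $S$-system that are not equations for $\zeta_u,\zeta_v$: the Killing equation $D_{(A}\zeta_{B)}\overset{S}{=}0$ for $\ol\zeta$ on $(S,\gamma)$, and the requirement that $\mcL_{\ol\zeta}\bm{\torsion}$ be exact, which is precisely the integrability condition under which \eqref{tran_deriv} can be solved for $\partial_{[u}\zeta_{v]}|_S$.

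Next I would perform the counting. Each solution $(\zeta_u,\zeta_v,\ol\zeta)|_S$ of \eqref{KID1}--\eqref{KID3} lifts to a genuine Killing vector: equations \eqref{KID2}--\eqref{KID3} fix the initial data for the horizon transport equations, \eqref{tran_deriv2} determines $\partial_{(u}\zeta_{v)}|_S$ algebraically, and \eqref{tran_deriv}---solvable thanks to the exactness clause of \eqref{KID1}---determines $\partial_{[u}\zeta_{v]}|_S$ up to one free additive constant. That additive constant is exactly the coefficient $\alpha$ of the explicit vector field displayed before the theorem, i.e.\ it generates $\xi$. Thus the restriction map $\zeta\mapsto(\zeta_u,\zeta_v,\ol\zeta)|_S$ sends the space of Killing vectors \emph{onto} the solution space of \eqref{KID1}--\eqref{KID3}, and its kernel is exactly $\mathbb{R}\,\xi$, namely the Killing vectors with vanishing data on $S$. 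Consequently the spacetime admits $k+1$ Killing vectors precisely when \eqref{KID1}--\eqref{KID3} has $k$ independent solutions.

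The step I expect to require the most care is not the trace decomposition---that is a short algebraic check---but verifying that the kernel of the restriction map is \emph{exactly} one-dimensional and that the lift neither loses nor double-counts any solution on $S$. This reduces to confirming that a Killing vector with $\zeta_u|_S=\zeta_v|_S=\ol\zeta=0$ is necessarily a multiple of $\xi$, which follows from uniqueness of the transport problem \eqref{ODE_KID1}--\eqref{ODE_KID3} once the lone constant $\partial_{[u}\zeta_{v]}|_S$ is fixed, combined with the uniqueness clause of Theorem~\ref{theorem_fund_form}(i). One must also check that the exactness requirement in \eqref{KID1} is genuinely \emph{necessary} (not merely sufficient) for the lift to exist, so that no spurious data on $S$ are admitted into the count.
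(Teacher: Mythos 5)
Your proposal is correct and follows essentially the same route as the paper: the paper's proof consists precisely of the preceding reduction of the KID system to the equations \eqref{trace1}--\eqref{tran_deriv} on $S$, the observation that the lift along \eqref{ODE_KID1}--\eqref{ODE_KID3} involves one additive constant corresponding to $\xi$, and the recombination of trace and trace-free parts into the single operators \eqref{KID2}--\eqref{KID3}. Your only stylistic difference is packaging the dimension count as an explicit restriction map with kernel $\mathbb{R}\,\xi$, which the paper handles implicitly via the remark that the trivial data on $S$ with $\partial_{[u}\zeta_{v]}|_S=\mathrm{const.}\neq 0$ generate $\xi$.
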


\begin{remark}
\label{rem_max_sym}
{\rm
\begin{enumerate}
\item[(i)]
  The KID equations \eq{KID2}-\eq{KID3} are identical in form
  to the master equation \eq{Eq} in $\Lambda$-vacuum.
    It should be emphasized however, that the two
  equations are {\it not} the same because the master equation holds
  for sections of  the Killing horizon and $S$ is not part of the horizons.
  In fact,  the torsion one-form $\bm{s}$ in \eq{Eq} is defined with  respect to
  $\xi$, so it makes no sense on $S$, where $\xi$ vanishes.
  The underlying reason why the two equations are the same will be discussed
  in the next subsection.
\item[(ii)]
As already mentioned, the trivial solution $\zeta|_S=0$ of this system, supplemented by  $\partial_{[u}\zeta_{v]}|_S=\mathrm{const.}\ne 0$ (cf.\ \eq{tran_deriv}) corresponds to data for the Killing vector  $\xi$ which generates the bifurcate horizon (a vanishing surface gravity would produce the trivial vector field).
\item[(iii)]
For the emerging spacetime to be maximally symmetric, \eq{KID1}-\eq{KID3} need to admit $\frac{1}{2}(n+1)(n+2)-1$ independent solutions.
It is shown in \cite{mps}
that each of \eq{KID2}-\eq{KID3} admits at most $n$ solutions, so a necessary condition is that the bifurcation surface $(S, \gamma)$ admits
$\frac{1}{2}n(n-1)$ Killing vectors, i.e.\   is maximally symmetric.
\end{enumerate}
}
\end{remark}

\subsubsection{Comparison with the master equation}

\label{sec_comparison}

There are two immediate differences between the initial data equation
(\ref{KID2}) and the master equation (\ref{Eq}). First, the torsion-one forms are a priori different, and, second, the former equation involves one of the
components of the Killing vector in adapted null coordinates and the latter involves the proportionality function between two Killings. Nevertheless, the two equations can be related to each other.

We discuss first the
issue of the different torsion one-forms. In the gauge where
$\nu =1$ we have shown before that
\begin{align*}
  \xi \stackrel{\H_1^+} = \kappa_{\xi} u \partial_u
\end{align*}
were $\kappa_{\xi}$ is the surface gravity of the bifurcate Killing horizon
with respect to $\xi$. On each cross section $\one{S}_u$ the proportionality
between $\xi$ and $\K$ is constant, and by (\ref{transformation})
we have $\bm{s} =
\bm{\one{s}}$.
 Moreover, $\bm{\one{s}}$ is independent of $u$ by
(\ref{dersone}) and since $\bm{\one{s}} = \bm{\torsion}$ on $S$ we conclude that actually $\bm{s} = \bm{\torsion}$. Had we chosen another cross section
of $\H_+^1$, this is defined implicitly by a positive graph function
$\psi : S \rightarrow \mathbb{R}$ as $S_{\psi}:= \{ u = \psi\}$. Defining $\tau :
  \H_1^+ \rightarrow \mathbb{R}$ as
  \begin{align*}
    \tau = \frac{1}{\kappa_{\xi}} \ln u
  \end{align*}
  we have $\xi(\tau) = 1$. Applying \cite[Lemma 3]{MarsTotallyGeodesic} the  torsion one-form
  of $S_{\psi}$ is
  \begin{align*}
    \bm{s}[\psi] = \bm{s} + \psi^{-1} d\psi
  \end{align*}
  and the freedom in changing the section corresponds to the freedom
  of adding differentials to $\bm{\torsion}$. Indeed,  the coordinate
  transformation $\{ \widetilde{u} = \psi^{-1} u, \widetilde{v} = \psi v\}$ preserves
  $\nu=1$, transforms the torsion
  $\bm{\torsion}$ as in (\ref{transtorsion}) and
  makes $S_{\psi} = \{ \widetilde{u} = 1 \}$, so that now we have a constant
  $\widetilde{u}$ section and by the previous argument $\bm{s}[\psi] = \bm{
    \widetilde{\torsion}}$ must hold.

  Let us compute how the KID equations on $S$ behave under the transformation
(\ref{transtorsion}).
Clearly, $\mcL_{\ol \zeta} \bm{\torsion}$ is exact if and only if $\mcL_{\ol \zeta}\widetilde{\bm{\torsion}}$ is exact, i.e.\ \eq{KID1} remains invariant as is should be.
One further finds that $\zeta_v$ is a solution of \eq{KID2} if and only if
$\psi \zeta_v$ is a solution of \eq{KID2} with $\bm{\torsion}$ replaced
by $\widetilde{\bm{\torsion}}$.
In particular, whenever \eq{KID2} admits a solution with no zeros we can find a gauge where this solution is given by $\zeta_v=1$,
and in this gauge the following relation holds
\begin{equation}
D_{(A}s_{B)}
-s_As_B+\frac{1}{2}{}^{\gamma} R_{AB}-\frac{\Lambda}{n-1} \gamma_{AB} =0
\,.
\label{special_gauge_master}
\end{equation}
%
If $\zeta_v$ has zeros, the Killing vector it generates will vanish at these points, which therefore do not belong to its
Killing horizon.  Restricting $\zeta_v$ to the domain where it does not have zeros, the above rescaling can be done.
In the gauge \eq{special_gauge_master} the KID equation \eq{KID2} becomes
\begin{equation}
\Big(D_{A}D_{B}
-2 s_{(A}D_{B)}
 \Big)\zeta_v
\overset{S}{=}
0
\,.
\label{Master_Equation}
\end{equation}
Again we find a deep connection with the master equation \cite{mps} as given in
\eq{Eq} because the null second fundamental form vanishes on
a bifurcation surface. We emphasize once more that, despite their identical
form, the equations are intrinsically different as they involve different objects and are constructed on different types of surfaces.

We now turn into the issue that the master equation in \cite{mps} involves
the proportionality factor $f$  between $\eta$ and $\xi$
while equation ({\ref{KID2}) is for a component of
  $\eta$ at the bifurcation surface. We use the proportionality
  (\ref{dfn_eqn_f}) in Lemma \ref{lemmamps} together with
  $\tau = \kappa_{\xi}^{-1} \ln u$ to conclude that
  \begin{align}
    \eta \stackrel{\H_{1}^+}{=} u^{-1} f_{\eta}  \xi
    = \kappa_{\xi} f_{\eta} \partial_u \label{etaxi}
  \end{align}
  and recall that $f_{\eta}$ is constant along $\xi$, hence independent of
  $u$. Expression (\ref{etaxi}) extends to the bifurcation surface and we
  get $\eta_v = \eta^u = \kappa_{\xi} f_{\eta}$ on $S$. On the other hand, evaluating on the cross
  section   $\{ u =u_0 \}$ with $u_0$ a positive constant,
  the proportionality function $f$ between
  $\eta$ and $\xi$ is $f = f_{\eta} u_0^{-1} = \eta_v |_{S} \kappa_{\xi}^{-1} u_0^{-1}$. The master equation
  (\ref{Eq}) is linear so the multiplicative constant
  $\kappa_{\xi}^{-1} u_0^{-1}$ can be dropped. This, together
  with the equality $\bm{\one{s}} = \bm{\torsion}$ explains the
  full coincidence  in form of the KID equation (\ref{KID2}) and the master
  equation (\ref{Eq}), while at the same times makes it clear the
  intrinsic  difference between the two (observe that $f = f_{\eta} u_0^{-1}$ makes no sense at $u_0=0$).

%

\section{Non-degenerate MKHs}
\label{sec_non-deg}

Let us return to our original question. For $\mathcal{H}^+_1$ to be  a multiple Killing horizon
there needs to exist at least one  Killing vector $\eta$   
which satisfies \eq{horizon_conds} in order to be tangential and null on $\mathcal{H}^+_1$ and independent of $\xi$.
%
%
It follows straightforwardly from \eq{ODE_KID1}-\eq{ODE_KID2}
that this will be the case if and only if
\begin{equation}
\eta_v |_{S}\ne 0
\,, \quad
\eta_u |_{S}= 0
\,, \quad \eta_A|_S= 0
\,.
\label{cond_double_horizon2}
\end{equation}
In that case the  KID equations \eq{KID1} and \eq{KID3} become trivial.
Whenever \eq{KID2}   admits  a non-trivial solution it can be extended
to a Killing vector field $\eta$ for which  $\mathcal{H}_1^+$ (but not $\mathcal{H}_2^+$) is a Killing horizon and, solving the corresponding problem into the past, also  $\mathcal{H}_1^-$.

As consequence of Theorem~\ref{prop_KID_equations} we find that the master equation is in this setting not only necessary but also
sufficient for the Killing horizon to be multiple.
\begin{corollary}
\label{cor_multiple_horizons}
Let  $( M ,g)$ be an $(n+1)$-dimensional $\Lambda$-vacuum spacetime, $n\geq 3$, which admits a Killing vector $\xi$ which generates  a   bifurcate Killing horizon $\mathcal{H}=\mathcal{H}_1^+\cup\mathcal{H}_2^+\cup\mathcal{H}_1^-\cup\mathcal{H}_2^-\cup S$ (which is then  non-degenerate).  Let
$(\gamma,\bm{\torsion})$ be the free data on $S$.
Then $\mathcal{H}_1^{\pm}$  is a MKH of order $m$
if and only if the following equation admits $m-1$ independent non-identically zero solutions $f$ on $S$,
\begin{equation}
\Big(D_{A} D_{B}
-2  \torsion_{(A}D_{B)}
-D_{(A} \torsion_{B)}
+\torsion_{A}\torsion_{B}
-\frac{1}{2}{}^{\gamma} R_{AB}+\frac{\Lambda}{n-1} \gamma_{AB} \Big) f
= 0
\,.
\label{double_horizon_cond}
\end{equation}
Correspondingly,  $\mathcal{H}_2^{\pm}$  is a multiple Killing horizon of order $m$
if and only if the following equation admits $m-1$ independent non-trivial solutions $ f $ on $S$,
\begin{equation}
\Big(D_{A} D_{B}
+2 \torsion_{(A}D_{B)}
+D_{(A}\torsion_{B)}
+\torsion_A\torsion_B
-\frac{1}{2}{}^{\gamma} R_{AB}+\frac{\Lambda}{n-1} \gamma_{AB}  \Big) f
= 0
\,.
\label{double_horizon_cond2}
\end{equation}
\end{corollary}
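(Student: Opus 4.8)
The plan is to deduce the statement directly from Theorem~\ref{prop_KID_equations} by turning the count of Killing generators of $\mathcal{H}_1^+$ into a dimension count for the linear PDE \eqref{double_horizon_cond}. By definition $\mathcal{H}_1^+$ is a MKH of order $m$ exactly when $\dim\Kill_{\mathcal{H}_1^+}=m$. Since $\xi$ generates the bifurcate horizon, $\mathcal{H}_1^+\setminus S$ is one of its Killing horizons and $\overline{\mathcal{H}_1^+\setminus S}=\overline{\mathcal{H}_1^+}$, so $\xi\in\Kill_{\mathcal{H}_1^+}$. The core of the argument is therefore to exhibit a linear isomorphism between $\Kill_{\mathcal{H}_1^+}/\mathbb{R}\xi$ and the space of solutions of \eqref{double_horizon_cond}.

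First I would set up the correspondence. Any $\eta\in\Kill_{\mathcal{H}_1^+}$ is, by the MKH structure, proportional to $\xi$ on $\mathcal{H}_1^+$ and hence tangent and null there, so it obeys \eqref{horizon_conds}; restricting to $S$ this forces $\eta_u|_S=0$ and $\eta_A|_S=0$, i.e.\ \eqref{cond_double_horizon2}. Under these restrictions the KID system of Theorem~\ref{prop_KID_equations} degenerates: $\eta_A|_S=0$ gives $\ol\eta:=\eta^A\partial_A|_S=0$, so \eqref{KID1} is vacuous; $\eta_u|_S=0$ makes \eqref{KID3} vacuous; and \eqref{KID2} becomes precisely \eqref{double_horizon_cond} for $f:=\eta_v|_S$. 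This is exactly the reduction already recorded in the text following \eqref{cond_double_horizon2}, and it defines the linear map $\eta\mapsto\eta_v|_S$ from $\Kill_{\mathcal{H}_1^+}$ into the solution space of \eqref{double_horizon_cond}.

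Next I would compute its kernel and image. For the kernel, $\eta_v|_S=0$ together with $\eta_u|_S=\eta_A|_S=0$ means $\eta$ has vanishing Cauchy data on $S$; by the analysis preceding Theorem~\ref{prop_KID_equations} the Killing vectors with zero data on $S$ are exactly the multiples of $\xi$, recovered through the residual constant $\partial_{[u}\zeta_{v]}|_S$, so $\ker=\mathbb{R}\xi$. For the image, given any non-identically-zero solution $f$ of \eqref{double_horizon_cond} I would prescribe the data $(\eta_v|_S,\eta_u|_S,\eta_A|_S)=(f,0,0)$; since \eqref{KID1} and \eqref{KID3} then hold trivially and \eqref{KID2} holds by hypothesis, Theorem~\ref{prop_KID_equations} yields a Killing vector $\eta$ obeying \eqref{cond_double_horizon2}, hence \eqref{horizon_conds}, for which $\mathcal{H}_1^+$ (off the zero set of $f$) is a Killing horizon with $\overline{\mathcal{H}_\eta}=\overline{\mathcal{H}_1^+}$, so $\eta\in\Kill_{\mathcal{H}_1^+}$ and the map is onto. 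Rank--nullity then gives $\dim\Kill_{\mathcal{H}_1^+}=1+d$, with $d$ the number of independent solutions of \eqref{double_horizon_cond}, so order $m$ is equivalent to $d=m-1$. The claim for $\mathcal{H}_2^\pm$ follows by the $u\leftrightarrow v$ symmetry of the construction, which flips the sign of the torsion contributions (recall $\one{s}_A=-\two{s}_A=\varsigma_A$ in the present gauge) and hence replaces \eqref{double_horizon_cond} by \eqref{double_horizon_cond2}; the statements for $\mathcal{H}_1^-$ and $\mathcal{H}_2^-$ follow by running the characteristic problem into the past.

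The step I expect to be the main obstacle is the surjectivity claim when $f$ vanishes somewhere. In that case $\eta$ has fixed points on $\mathcal{H}_1^+$ and its Killing horizon is only the open set $\{f\neq0\}\subsetneq\mathcal{H}_1^+$, so verifying $\overline{\mathcal{H}_\eta}=\overline{\mathcal{H}_1^+}$ --- and thus that $\eta$ genuinely lies in $\Kill_{\mathcal{H}_1^+}$ --- requires controlling the zero set of $f$ (e.g.\ its density or lower dimensionality, using analyticity of vacuum Killing fields). This is precisely the situation that the fixed-point-tolerant notion of horizon announced in the introduction is designed to accommodate; away from this point the dimension count is routine once the KID system has been shown to collapse as above.
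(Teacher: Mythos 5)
Your proposal is correct and is essentially the paper's own argument: the paper likewise reduces the KID system of Theorem~\ref{prop_KID_equations} under the conditions \eqref{cond_double_horizon2}, notes that \eqref{KID1} and \eqref{KID3} become trivial while \eqref{KID2} becomes \eqref{double_horizon_cond} for $f=\eta_v|_S$, counts $\xi$ as the extra dimension (the trivial solution $\zeta|_S=0$ supplemented by $\partial_{[u}\zeta_{v]}|_S=\mathrm{const.}\neq 0$), and treats $\mathcal{H}_2^{\pm}$ and the past portions exactly as you do. The single obstacle you flag --- solutions $f$ with zeros --- is what the paper's Lemma~\ref{density}, stated immediately after the corollary, resolves: it shows the zero set of $f$ has empty interior, using uniqueness of the Killing extension and the fact that a Killing vector vanishing on an open set vanishes identically, rather than analyticity.
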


\begin{remark}
  {\rm
  As described in Section~\ref{sec_comparison}, whenever  \eq{double_horizon_cond}  (similarly for \eq{double_horizon_cond2}) admits a solution
with no zeros, one can globally introduce a gauge where \eq{special_gauge_master} holds. This equation appears in the context of near-horizon geometries and  is sometimes called NHG equation. Actually, \eq{double_horizon_cond} (and  \eq{double_horizon_cond2}) supplemented with the
  condition that $f$ has no zeroes has  been established in \cite{LRS} as a necessary and sufficient condition
on the data on a bifurcate horizon to generate a near-horizon geometry (which contains a non-degenerate MKH). The solvability of this equation is analyzed in $3+1$ dimensions
if the bifurcation surface $S$ is topologically a 2-sphere in \cite{ABL,LP,LRS}, and in particular in \cite{DKLS} it is shown that in $3+1$-dimensions, and if the bifurcation surface $S$ is compact with positive genus, the only solutions $(\gamma, \bm{\torsion})$ to  \eq{special_gauge_master}
satisfy ${}^{\gamma}R=2\Lambda$ and $\bm{\torsion}=0$.  Note, though, that in our setting even for a compact $S$, if the solution to \eq{double_horizon_cond} has zeros
the gauge where \eq{special_gauge_master} holds  can  in general be realized only on non-compact subsets of $S$. We devote Section
\ref{sec_NHG} to discuss in depth the connection of the KID equations with near horizon geometries and, in particular, put forward a generalized definition of near-horizon geometries admitting zeros of the degenerate Killing vector.
}
\end{remark}

The restriction of the emerging Killing vector to the bifurcate horizon is computed by integrating the
ODEs \eq{ODE_KID1}, \eq{ODE_KID2} and \eq{ODE_KID3}
on $\mathcal{H}_1^+$, and the corresponding ones on $\mathcal{H}_2^+$.
If $\eta$ has $\mathcal{H}^+_1$ as Killing horizon we have
\begin{eqnarray}
\eta_u & \overset{\mathcal{H}^+_1}{=}&0
\,,
\label{candidate1}
\\
 \eta_A& \overset{\mathcal{H}^+_1}{=}&0
\,,
\\
\eta^u& \overset{\mathcal{H}^+_1}{=}& f + \kappa_{\eta} u
\,, \quad  \kappa_{\eta} = \mathrm{const.}
\label{candidate3}
\\
\eta_v& \overset{\mathcal{H}^+_2}{=}&  f
\,,
\\
\eta_A& \overset{\mathcal{H}^+_2}{=}& -v (D_A-2\torsion_A) f
\,.
\label{candidate5}
\end{eqnarray}
The  ODE for $\eta^v|_{\mathcal{H}^+_2}$ (equivalently $\eta_u|_{\mathcal{H}^+_2}$)
yields a more complicated solution, so let us just  mention here that the initial data are  given by $\eta^v\overset{S}{=}0$ \eq{cond_double_horizon2}  and
$\partial_v\eta^{v}\overset{S}{=}- \kappa_{\eta}- f \partial_ug_{uv}$
\eq{tran_deriv2}.

As the notation already suggests, the constant $ \kappa_{\eta}$ can be identified with  the surface gravity of $\mathcal{H}_1^+$ associated to  $\eta$.
Adding to the candidate field $\eta$ an appropriate multiple of $\xi$, one may always assume $ \kappa_{\eta}=0$, i.e.\ that
the associated Killing horizon is degenerate.
In this sense a solution to \eq{double_horizon_cond} uniquely defines a Killing vector field $\eta$ for which   $\mathcal{H}_1^+$ is a \emph{degenerate} horizon.

We also observe that
%
\begin{eqnarray*}
[\xi,\eta]^{\mu}\overset{S}{=} -\eta^u\partial_u\xi^u\delta^{\mu}{}_u\overset{S}{=}- \kappa_{\xi} f  \delta^{\mu}{}_u \ne 0
\,,
\end{eqnarray*}
i.e.\ apart from multiples of $\xi$ no Killing vector which has $\mathcal{H}_1$ or $\mathcal{H}_2$ as Killing horizon can commute with $\xi$.
This is in accordance with Theorem~\ref{thm_MKH2}.

As a straightforward consequence of Theorem~\ref{prop_KID_equations} and Corollary~\ref{cor_multiple_horizons} we have:
\begin{corollary}
Let  $( M ,g)$ be an $(n+1)$-dimensional $\Lambda$-vacuum spacetime, $n\geq 3$, with at least $\frac{1}{2}n(n-1)+2$
Killing vectors. Then any bifurcate Killing horizon contains a non-degenerate MKH.
\end{corollary}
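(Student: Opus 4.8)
The plan is to deduce this corollary directly from the two preceding results, treating it as a counting statement about the dimension of solution spaces. The final statement asserts that in an $(n+1)$-dimensional $\Lambda$-vacuum spacetime with at least $\frac{1}{2}n(n-1)+2$ Killing vectors, every bifurcate Killing horizon contains a non-degenerate MKH. The key observation is that a bifurcate Killing horizon, by hypothesis, already comes equipped with the Killing vector $\xi$ generating it, and $\mathcal{H}_1^+$ is a non-degenerate MKH precisely when Corollary~\ref{cor_multiple_horizons} yields $m\geq 2$, i.e.\ when \eq{double_horizon_cond} admits at least one non-trivial solution $f$ on $S$.

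First I would set up the dimension bookkeeping. Let $N$ denote the total number of independent Killing vectors of $(M,g)$, so $N\geq \frac{1}{2}n(n-1)+2$ by assumption. The bifurcation surface $S$ is an $(n-1)$-dimensional Riemannian manifold $(S,\gamma)$, whose isometry group has dimension at most $\frac{1}{2}n(n-1)$ (the maximal symmetry bound for a manifold of dimension $n-1$). By Theorem~\ref{theorem_fund_form}(ii), every Killing vector $\zeta$ of $(M,g)$ that is tangent to $S$ restricts to a Killing vector $\ol\zeta=\zeta|_S$ of $(S,\gamma)$, with the map $\zeta\mapsto\ol\zeta$ having kernel spanned by $\xi$ (since $\xi|_S=0$). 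So the number of independent Killing vectors \emph{tangent to} $S$ is at most $\frac{1}{2}n(n-1)+1$, the extra $+1$ accounting for $\xi$ itself.

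The crucial counting step is then to compare $N$ with this bound. Any spacetime Killing vector $\zeta$ restricts on $S$ to data $(\zeta_u,\zeta_v,\zeta_A)|_S$, and by Theorem~\ref{prop_KID_equations} these are governed by the KID system \eq{KID1}--\eq{KID3}. A Killing vector is tangent to $S$ exactly when its transverse components $\zeta_u|_S$ and $\zeta_v|_S$ vanish; equivalently, it corresponds to a solution of \eq{KID1} alone, with trivial $\zeta_u,\zeta_v$. Since $N$ strictly exceeds $\frac{1}{2}n(n-1)+1$ — the maximal count of Killing vectors tangent to $S$ — there must exist at least one Killing vector whose restriction to $S$ has $\zeta_u|_S\neq 0$ or $\zeta_v|_S\neq 0$. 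By the equivalence in Corollary~\ref{cor_multiple_horizons}, such a non-vanishing $\zeta_v$ (resp.\ $\zeta_u$) is precisely a non-trivial solution $f$ of \eq{double_horizon_cond} (resp.\ \eq{double_horizon_cond2}), which makes $\mathcal{H}_1^\pm$ (resp.\ $\mathcal{H}_2^\pm$) a MKH of order $m\geq 2$.

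The main obstacle I anticipate is making the dimension count airtight, specifically ruling out the degenerate possibility that all of the ``excess'' Killing vectors beyond those tangent to $S$ are in fact multiples of $\xi$ or otherwise fail to produce an \emph{independent} solution of the master-type equation. One must verify that the restriction map from the full Killing algebra to the KID data on $S$ is injective modulo the vector $\xi$, so that the surplus in dimension genuinely forces a non-trivial transverse component; this uses the uniqueness clauses in Theorems~\ref{theorem_fund_form} and \ref{prop_KID_equations} (a Killing vector is determined by its initial data on $\mathcal{H}_1^+\cup\mathcal{H}_2^+$). Once injectivity is secured, the inequality $\frac{1}{2}n(n-1)+2 > \frac{1}{2}n(n-1)+1$ closes the argument, and the non-degeneracy of the resulting MKH follows because $\xi$ itself is non-degenerate on the bifurcate horizon by Theorem~\ref{theorem_fund_form}(i).
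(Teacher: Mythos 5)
Your proposal is correct and follows essentially the same route as the paper: the paper's proof simply observes that, by Theorem~\ref{prop_KID_equations}, a spacetime with $\frac{1}{2}n(n-1)+2$ Killing vectors forces at least $\frac{1}{2}n(n-1)+1$ independent KID solutions on $S$, of which at most $\frac{1}{2}n(n-1)$ can come from \eq{KID1}, so \eq{KID2}--\eq{KID3} must admit a non-trivial solution and Corollary~\ref{cor_multiple_horizons} applies. Your version merely rephrases this count in terms of Killing vectors tangent to $S$ (with the $+1$ for $\xi$ spanning the kernel of restriction), which is the same bookkeeping.
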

\begin{proof}
The KID equation \eq{KID1} admits at most $\frac{1}{2}n(n-1)$ independent solutions so \eq{KID2}-\eq{KID3} must admit at least one non-trivial solution.
\end{proof}

Another useful consequence of the existence result is the following lemma, to be used later.
\begin{lemma}
  \label{density}
  Let $(S,\gamma)$ by an $(n-1)$-dimensional Riemannian  manifold ($n \geq 3$)
  endowed with a one-form
  $\bm{\torsion}$. Let $f : S \rightarrow \mathbb{R}$ be
  a non-identically zero solution of
  (\ref{double_horizon_cond})
  or (\ref{double_horizon_cond2}). Then $f$ is non-zero on a dense subset of $S$.
\end{lemma}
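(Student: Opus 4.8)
The plan is to exploit that \eq{double_horizon_cond} is written in \emph{normal form}: it determines the \emph{entire} Hessian $D_A D_B f$ — not merely its trace or trace-free part — in terms of $f$ and $D_B f$. Solving for the second derivatives,
\begin{equation*}
D_A D_B f = 2\torsion_{(A}D_{B)}f + \Big(D_{(A}\torsion_{B)} - \torsion_A\torsion_B + \tfrac12 {}^{\gamma}R_{AB} - \tfrac{\Lambda}{n-1}\gamma_{AB}\Big) f ,
\end{equation*}
and introducing the auxiliary field $p_A := D_A f$, the pair $(f,p_A)$ satisfies the \emph{closed}, linear, first-order system $D_A f = p_A$, $D_A p_B = (\text{right-hand side above})$, whose coefficients are smooth functions of $\gamma$ and $\bm{\torsion}$. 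This is precisely the prolongation behind the earlier observation that a solution is fixed by prescribing $f|_p$ and $df|_p$ at a single point. The case \eq{double_horizon_cond2} is identical up to signs.

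First I would reduce to a connected component of $S$ (the bifurcation surface is connected in our applications, and a solution restricts to a solution on each component), so assume $S$ connected. The core is a unique-continuation argument. Fix a geodesically convex normal neighbourhood of a point; along any geodesic $\sigma(t)$ with tangent $T$, contracting the closed system with $T$ yields a homogeneous \emph{linear} first-order ODE for $\big(f\circ\sigma,\,p_B\circ\sigma\big)$ with smooth coefficients, so by ODE uniqueness $(f,Df)$ vanishes along the entire geodesic as soon as it vanishes at the initial point. Now set $Z := \{x\in S : f(x)=0 \text{ and } Df(x)=0\}$. Then $Z$ is closed by continuity, and it is open because every point of a normal neighbourhood of a given $x_0\in Z$ is reached from $x_0$ by a radial geodesic along which $(f,Df)\equiv 0$. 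Connectedness forces $Z=\emptyset$ or $Z=S$. If the zero set $\{f=0\}$ had nonempty interior $U$, then $Df=0$ on $U$ as well, so $\emptyset\neq U\subseteq Z$, giving $Z=S$ and hence $f\equiv 0$, contradicting the hypothesis. Therefore $\{f=0\}$ has empty interior, i.e.\ $f\neq 0$ on a dense subset of $S$.

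The main obstacle is conceptual rather than computational: one must recognise that \eq{double_horizon_cond} controls the \emph{full} Hessian, so that the problem genuinely reduces to uniqueness for a first-order ODE system via prolongation; the analogous statement for only the trace-free part would be false. The sole technical care needed lies in the ``open'' step — covering nearby points by geodesics inside a convex neighbourhood so that ODE uniqueness applies uniformly — which is standard.
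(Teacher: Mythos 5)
Your proof is correct, but it takes a genuinely different route from the paper's. The paper argues \emph{extrinsically}: it uses the data $(S,\gamma,\bm{\torsion})$ to generate a $\Lambda$-vacuum spacetime with a bifurcate Killing horizon (Theorem~\ref{thm_Cauchy_problem}), lets $\eta$ be the Killing vector on $D^+(\H_1^+\cup\H_2^+)$ built from $f$ via \eq{candidate1}--\eq{candidate5} with $\kappa_{\eta}=0$, and observes that if $\{f=0\}$ had non-empty interior then $\eta$ would have vanishing data over that open portion of $S$, hence (by the uniqueness in Theorem~\ref{theorem_fund_form}) would vanish on an open spacetime region --- impossible for a non-trivial Killing vector on a connected spacetime --- so $f\equiv 0$, a contradiction. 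You instead work \emph{intrinsically} on $S$: you exploit that \eq{double_horizon_cond} is in normal form, prolong it to a closed linear first-order system for $(f,D_Af)$, get unique continuation along geodesics from ODE uniqueness (with the standard parallel-frame step left implicit, which is fine), and finish with an open-closed argument on the set $Z=\{f=0,\,Df=0\}$. Your route is more elementary and self-contained: it needs neither Rendall's existence theorem, nor the KID theorem, nor Theorem~\ref{theorem_fund_form}, and it applies directly to an arbitrary $(S,\gamma,\bm{\torsion})$ as in the lemma's hypotheses. What the paper's route buys is brevity given the machinery already assembled, plus the geometric interpretation that solutions of \eq{double_horizon_cond} \emph{are} Killing vectors; note, moreover, that the spacetime fact it leans on --- a Killing field vanishing on an open set vanishes identically --- is itself proved by exactly the prolongation-plus-ODE mechanism you use, so in effect you have unpacked the paper's argument one level down, at the level of $S$. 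One shared caveat: the conclusion genuinely requires connectedness of $S$ (resp.\ of the generated spacetime) --- otherwise $f$ could vanish identically on a single component --- a hypothesis implicit in both the lemma and the paper's proof; you are right to flag it, though your phrase ``reduce to a connected component'' should really be ``assume $S$ connected, as the statement is false otherwise.''
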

\begin{proof}
  We consider only (\ref{double_horizon_cond}), the other case is similar.
Let $(M,g)$ be a $\Lambda$-vacuum spacetime with bifurcate Killing horizon
  generated by the data $(S,\gamma,\bm{\torsion})$ and $\eta$ the Killing vector
  on $D^+ (\H_1^+ \cup \H_2^+)$ satisfying, in corresponding null
  coordinates, (\ref{candidate1})-(\ref{candidate5}) with
  $\kappa_{\eta}=0$. Assume that the set $\{ p \in S; f (p) =0\}$  has a
  non-empty interior $S^{(0)}$. Then by Theorem \ref{theorem_fund_form}, $\eta$ is identically
  zero on some neighbourhood of $S$, which can only happen for a Killing if it
  is zero everywhere and hence $f$ was in fact identically zero on $S$.
  \end{proof}

\subsection{Non-degenerate MKHs of order $m\geq 3$}
\label{sec_non_deg3}

Let us assume that a $\Lambda$-vacuum spacetime admits  a bifurcate Killing horizon such that $\mathcal{H}^+_1$
is a multiple Killing horizon of order $m\geq 3$
and let
$\bm{\torsion_0}$ be  its torsion one-form. Assume that the bifurcation surface is connected.
Then \eq{double_horizon_cond} (with $\bm{\torsion}$ replaced
by $\bm{\torsion_0}$)
admits  at least two independent non-trivial solutions.
Select one such solution $f^{(1)}$ and let $\widetilde{S}$ be the
set where $f^{(1)}$ is non-zero.  By
Lemma \ref{density}, $\widetilde{S}$ is dense on $S$. On this set we can realize
a gauge where $ f ^{(1)}=1$. Throughout this section we denote by
$\bm{\torsion}$ the corresponding torsion-one-form on $\widetilde{S}$ where
(\ref{special_gauge_master}) holds.


The second solution will be denoted by $ f $. 
It cannot be constant on open sets (otherwise it is  proportional
to $f^{(1)}$ on this open set, hence everywhere by
Lemma \ref{density} and we would contradict the linear independence)
and in our current gauge solves the
master equation \eq{Master_Equation},
\begin{equation}
\Big(D_{A} D_{B}
-2  \torsion_{(A}D_{B)} \Big) f
= 0
\,.
\label{master}
\end{equation}
As a solution is uniquely determined by the values $ f $ and $\mathrm{d} f $  at one point, 
and all constant functions
solve (\ref{master}) ,
we have
\begin{equation}
N:= \gamma^{\sharp}(\mathrm{d}f ,\mathrm{d}f)  >0
\,
\label{ineq_F}
\end{equation}
everywhere on $\widetilde{S}$.
Equation \eq{master} has the following first integrability condition 
(see
\cite[Equation (61)]{mps})
%
\begin{equation}
-{}^{\gamma} R_{ABCD}D^D f  +2D_C f D_{[A}\torsion_{B]}  - 2\torsion_C \torsion_{[A}D_{B]} f - 2 D_{[A} f D_{B]}\torsion_C
=0
\,.
\label{integrability}
\end{equation}
 We contract with  $D^C f $ and $\gamma^{BC}$, respectively,
\begin{eqnarray}
 2ND_{[A}\torsion_{B]}
  +2s^CD_C fD_{[A} f \torsion_{B]}
-D_{A} fD^C fD_{[B}\torsion_{C]}
+D_{B} f D^C f D_{[A}\torsion_{C]}
\nonumber
&&
\\
-D_{A} fD^C fD_{(B}\torsion_{C)}
+ D_{B} f D^C fD_{(A}\torsion_{C)}
&=&0
\,,\phantom{xxx}
\label{inte1}
\\
3D^B f D_{[A}\torsion_{B]}
-  (D_{B}s^B- |s|^2)D_A f
+D^{B} f (D_{(A}\torsion_{B)} -\torsion_{A}\torsion_B)
+{}^{\gamma} R_{AB} D^B f
&=&0
\,,
\label{inte2}
\end{eqnarray}
and  use the gauge condition \eq{special_gauge_master},
\begin{eqnarray*}
 2ND_{[A}\torsion_{B]}
-D_{A} f D^C f D_{[B}\torsion_{C]}
+ D_{B} fD^C fD_{[A}\torsion_{C]}
-{}^{\gamma} R_{C[A}D_{B]} fD^C f
&=&0
\,,
\\
3D^B f D_{[A}\torsion_{B]}
+\frac{1}{2} \Big({}^{\gamma} R-2(n-2)\frac{\Lambda}{n-1}\Big)D_A f
+\frac{1}{2}  D^{B} f {}^{\gamma} R_{AB}
&=&0
\,.
\end{eqnarray*}
We insert the second equation into the first one,
\begin{equation}
 ND_{[A}\torsion_{B]}
- 2D_{A} f D^C fD_{[B}\torsion_{C]}
+ 2D_{B} f D^C f D_{[A}\torsion_{C]}
=0
\,,
\label{sigma_closed}
\end{equation}
and  apply $D^B f $,
\begin{equation*}
 ND^B f D_{[A}\torsion_{B]}
=0
\quad\Longrightarrow \quad D^B fD_{[A}\torsion_{B]}
=0
\,.
\end{equation*}
%
From this we deduce via \eq{sigma_closed} and \eq{ineq_F} that the torsion one-form  $\bm{\torsion}$
must  be closed on $\widetilde S$. Thus, the same is true for the
original torsion $\bm{\torsion_0}$
(as they differ  by an exact form).
Unlike $\bm{\torsion}$,
$\bm{\torsion_0}$ is smooth on the whole of $S$, so it must be closed everywhere:

\begin{lemma}
\label{lem_closed}
 Consider an $(n+1)$-dimensional  $\Lambda$-vacuum spacetime which  admits  a bifurcate Killing horizon such that $\mathcal{H}_1$
 is a multiple Killing horizon of order $m\geq 3$. Then its torsion 1-form
is closed.
\end{lemma}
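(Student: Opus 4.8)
The plan is to exploit the hypothesis $m\geq 3$ which, by Corollary~\ref{cor_multiple_horizons}, guarantees that \eq{double_horizon_cond} admits (at least) two linearly independent non-trivial solutions on the connected bifurcation surface $S$. First I would use one of them, $f^{(1)}$, to fix a gauge. By Lemma~\ref{density} the set $\widetilde S$ where $f^{(1)}\neq 0$ is dense, and on $\widetilde S$ the rescaling described in Section~\ref{sec_comparison} normalises $f^{(1)}$ to the constant value $1$; in this gauge the geometry of the cut satisfies the NHG equation \eq{special_gauge_master}, while the second solution $f$ obeys the reduced master equation \eq{master}, namely $(D_AD_B-2\torsion_{(A}D_{B)})f=0$. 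A short but essential preliminary is to record the strict positivity $N:=\gamma^{AB}D_Af\,D_Bf>0$ on $\widetilde S$, equation \eq{ineq_F}: since \eq{master} is a second-order equation in normal form whose solution is fixed by the pair $(f,\mathrm{d}f)$ at a point, a zero of $\mathrm{d}f$ would force $f$ to be locally constant and hence proportional to $f^{(1)}$, contradicting the assumed independence via Lemma~\ref{density}.

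The core of the argument is to extract an integrability condition for \eq{master} and to squeeze out of it the closedness of the torsion. Commuting the two covariant derivatives in \eq{master} produces the first integrability condition \eq{integrability}, which couples the intrinsic curvature ${}^\gamma R_{ABCD}$ of the cut, the differential $\mathrm{d}\bm{\torsion}$, and first derivatives of $f$. I would then contract \eq{integrability} in the two natural ways: once with $D^Cf$ and once with the metric $\gamma^{BC}$, obtaining the identities \eq{inte1} and \eq{inte2}. At this point the NHG gauge \eq{special_gauge_master} is invoked to trade the symmetrised derivatives $D_{(A}\torsion_{B)}$ and the term $\torsion_A\torsion_B$ for curvature contributions, so that the Ricci and $\Lambda$ pieces organise into scalars built from ${}^\gamma R$.

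Inserting the traced identity into the $D^Cf$-contracted one cancels all curvature terms and leaves the purely torsion-theoretic relation \eq{sigma_closed}, in which every term carries a factor of $\mathrm{d}\bm{\torsion}$ contracted against $\mathrm{d}f$ or against the metric through $N$. The decisive manoeuvre is then to contract \eq{sigma_closed} once more with $D^Bf$: the term quadratic in $\mathrm{d}f$ vanishes identically, since it contracts the symmetric pair $D^Cf\,D^Bf$ against the antisymmetric $D_{[C}\torsion_{B]}$, and the remaining terms combine into a multiple of $N\,D^BfD_{[A}\torsion_{B]}$, which must therefore vanish because $N>0$. Feeding $D^BfD_{[A}\torsion_{B]}=0$ back into \eq{sigma_closed} removes its last two terms and leaves $N\,D_{[A}\torsion_{B]}=0$, whence $\mathrm{d}\bm{\torsion}=0$ on $\widetilde S$.

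It then remains to globalise. Since $\bm{\torsion}$ differs from the original smooth torsion $\bm{\torsion_0}$ only by the exact gradient introduced in the gauge fixing, one has $\mathrm{d}\bm{\torsion_0}=\mathrm{d}\bm{\torsion}=0$ on $\widetilde S$; and because $\bm{\torsion_0}$ is smooth on all of $S$ while $\widetilde S$ is dense, continuity forces $\mathrm{d}\bm{\torsion_0}=0$ on the whole of $S$. The step I expect to be the main obstacle is the algebraic bookkeeping that isolates \eq{sigma_closed}: one must check that, once the gauge \eq{special_gauge_master} is imposed, every curvature and $\Lambda$ term produced by the two contractions genuinely cancels, leaving an identity built only from $\mathrm{d}\bm{\torsion}$ and $\mathrm{d}f$. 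The conceptual leverage throughout is the strict positivity of $N$, which is precisely what allows one to \emph{divide} and conclude closedness rather than merely a contracted vanishing.
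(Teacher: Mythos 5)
Your proposal is correct and follows essentially the same route as the paper: gauge-fixing via $f^{(1)}$ on the dense set $\widetilde S$, establishing $N>0$ from uniqueness of solutions to \eq{master}, deriving \eq{sigma_closed} from the contracted integrability conditions in the gauge \eq{special_gauge_master}, contracting with $D^Bf$ to kill $\mathrm{d}\bm{\torsion}$ on $\widetilde S$, and then globalising by smoothness of $\bm{\torsion_0}$ and density. No gaps to report.
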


\begin{remark}
  {\rm 
  In the setup of this lemma (so that in particular
    $\bm{\torsion}$ is closed), consider the domain $\tilde{S}$
    and define $h: \widetilde{S} \rightarrow \mathbb{R}$ by
\begin{equation}
h := 2|\bm{\torsion}|^2-D_A \torsion^A +\frac{2}{n-1}\Lambda=\frac{1}{2}\big({}^{\gamma} R-\frac{2(n-3)}{n-1}\Lambda\big) +|\bm{\torsion}|^2
\,.
\label{dfn_h}
\end{equation}
%
Applying $D^A$ to   \eq{special_gauge_master} and using the contracted Bianchi identity for $\gamma$, it follows
that (compare \cite{KuLu} where such an  equation appears in the context of vacuum
near-horizon geometries)
\begin{equation}
D_A h - 2h\torsion_A=0
\label{equation_h}
\end{equation}
which obviously implies
\begin{align}
  D_A h D^A f = 2 h \torsion_A D^A f.
  \label{equationDhDf}
\end{align}
%
%
Inserting  \eq{dfn_h} and \eq{special_gauge_master} into  \eq{inte2}
one obtains
\begin{equation}
h D_A f
=
D_{A} (\torsion^B D_{B} f )
 -2\torsion_{A}\torsion_BD^{B} f
\,.
\label{add_eqn}
\end{equation}
Equations (\ref{master}), (\ref{equationDhDf}) and \ref{add_eqn})
arise naturally also in a different but related context, and several consequences of them have been obtained in \cite{mps2}. From those results  one can extract the following lemma.
\begin{lemma}
  \label{Lemmamps2}
  Let $(\widetilde{S},\gamma)$ be an $(n-1)$-dimensional
  Riemannian manifold ($n \geq 3$) with a one-form $\bm{\torsion}$
  satisfying (\ref{special_gauge_master}).
  Let $p+1$  (necessarily $\leq n$)  be the dimension of the space of solutions  of (\ref{master}).
  If $p \geq 1$, then
  \begin{itemize}
  \item[(i)] $\torsion$ is exact,
  \item[(ii)] locally, $(\widetilde{S},\gamma)$ is  a warped product
    $\widetilde S = V\times \Sigma$, $\gamma= \overline{\gamma}
    + \Omega \mathring \gamma$, $\Omega : V \rightarrow \mathbb{R}$,
  where $(\Sigma,\mathring\gamma)$ is
a $p$-dimensional maximally symmetric Riemannian manifold.
     \end{itemize}
\end{lemma}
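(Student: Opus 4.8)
The plan is to treat the two assertions separately, establishing (i) first because it is exactly what makes the reduction used in (ii) available. Throughout I work on a connected component of $\widetilde S$ (part (ii) is local in any case), and I abbreviate $\psi:=\torsion^BD_Bf$ and $N:=\gamma^{\sharp}(\mathrm{d}f,\mathrm{d}f)$.

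\emph{Part (i).} By Lemma~\ref{lem_closed} the torsion is closed, so it suffices to produce a global potential. The idea is to read \eq{equation_h} as $\mathrm{d}h=2h\,\bm{\torsion}$: along any curve $c(t)$ this is a linear homogeneous ODE $\dot H=2\bm{\torsion}(\dot c)\,H$ for $H=h\circ c$, so by uniqueness either $h$ vanishes identically on the component or it is nowhere zero. In the second case $\bm{\torsion}=\tfrac12\,\mathrm{d}\ln|h|$ is exact and we are done. If instead $h\equiv 0$, then \eq{add_eqn} collapses to $\mathrm{d}\psi=2\psi\,\bm{\torsion}$, which has the very same structure; hence either $\psi$ is nowhere zero, giving $\bm{\torsion}=\tfrac12\,\mathrm{d}\ln|\psi|$, or $\psi\equiv 0$. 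In the last case one computes from \eq{master} that $D_AN=2N\torsion_A+2\psi\,D_Af=2N\torsion_A$, and since $N>0$ everywhere by \eq{ineq_F} we get $\bm{\torsion}=\tfrac12\,\mathrm{d}\ln N$. Thus in every case $\bm{\torsion}$ is exact.

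\emph{Part (ii).} Write $\bm{\torsion}=\mathrm{d}\lambda$ and pass to the conformally rescaled metric $\hat\gamma:=e^{2\lambda}\gamma$. Tracking how the Hessian transforms under a conformal change, the first-order terms of \eq{master} cancel precisely against those produced by $\hat\gamma$, so \eq{master} becomes the concircular (Tashiro-type) equation $\hat D_A\hat D_Bf=\hat\phi\,\hat\gamma_{AB}$ with $\hat\phi=e^{-2\lambda}\torsion^CD_Cf$; the same $\lambda$ serves every solution, so all $p+1$ members of the solution space of \eq{master} become concircular functions of the single metric $(\widetilde S,\hat\gamma)$. (Equivalently, and without rescaling, each non-constant solution gives a Killing one-form $e^{-2\lambda}\mathrm{d}f$ for $\gamma$, since $D_{(A}(e^{-2\lambda}D_{B)}f)=e^{-2\lambda}(D_AD_Bf-2\torsion_{(A}D_{B)}f)=0$.) One then invokes the structure theory for spaces of concircular functions: a single non-constant one already splits $\hat\gamma$ locally as a warped product over its totally umbilic level sets, and a $(p+1)$-dimensional family pins the common fibers to a $p$-dimensional factor $\Sigma$. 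The companion identities \eq{equationDhDf} and \eq{add_eqn} fix a common proportionality constant in the Obata-type equation induced on $\Sigma$, so Obata's/Tashiro's characterization of space forms yields a maximally symmetric $(\Sigma,\mathring\gamma)$ — this is exactly the content distilled from \cite{mps2}. Finally one transfers the splitting back to $\gamma$: from $\mathrm{d}h=2h\,\mathrm{d}\lambda$ and $\mathrm{d}N=2N\,\mathrm{d}\lambda+2\psi\,\mathrm{d}f$ one checks that $\lambda$, hence the conformal factor $e^{2\lambda}$, is constant along the fibers, so the rescaling only alters the warping function $\Omega$ while preserving $\mathring\gamma$, leaving $\gamma=\overline{\gamma}+\Omega\mathring\gamma$ of the claimed form.

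\emph{Main obstacle.} Part (i) is a uniqueness-of-ODE argument and is routine. The real work sits in part (ii): upgrading ``a warped product over a one-dimensional base for each individual solution'' to ``a single $p$-dimensional maximally symmetric fiber for the whole family''. This requires showing that the $p$ gradient directions $e^{-2\lambda}\mathrm{d}f_i$ are mutually compatible and that the second-order equation they induce on the common fibers has the rigid Obata form with one common constant, which is precisely where the full strength of the analysis in \cite{mps2} is needed. A secondary technical point is verifying that $\lambda$ is basic (constant along the fibers), since that is what allows the warped-product structure to descend from $\hat\gamma$ to $\gamma$.
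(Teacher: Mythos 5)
The paper offers no proof of Lemma~\ref{Lemmamps2} to compare against: it is explicitly ``extracted'' from the companion paper \cite{mps2}, and the text only assembles the ingredients \eq{master}, \eq{equationDhDf}, \eq{add_eqn}. Judged on its own, your part (i) is correct and self-contained, and it is a genuine addition: the trichotomy ($h$ nowhere zero, else $\psi=\torsion^BD_Bf$ nowhere zero, else $N>0$), with each branch exhibiting $\bm{\torsion}$ as one half the differential of the logarithm of a nowhere-vanishing function, follows correctly from \eq{equation_h}, \eq{add_eqn}, \eq{master} and \eq{ineq_F}. One repair is needed: Lemma~\ref{lem_closed} is a statement about spacetimes with a bifurcate horizon, so you cannot quote it verbatim in the purely Riemannian setting of Lemma~\ref{Lemmamps2}; what you should invoke is the computation of Section~\ref{sec_non_deg3} (the integrability conditions \eq{inte1}--\eq{inte2} leading to \eq{sigma_closed}), which uses only \eq{special_gauge_master} and a non-constant solution of \eq{master} and therefore applies under exactly the lemma's hypotheses. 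Note also that \eq{ineq_F} and your case analysis require a non-constant solution on each connected component, a point worth stating.

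Part (ii), by contrast, has a genuine gap. Your conformal rescaling $\hat\gamma=e^{2\lambda}\gamma$ is correct (I verified that every solution of \eq{master} becomes concircular, $\hat D_A\hat D_Bf=\hat\phi\,\hat\gamma_{AB}$, with $\hat\phi=e^{-2\lambda}\torsion^CD_Cf$), but the ``structure theory of concircular functions'' you then invoke delivers the warped product with the wrong orientation. For a non-constant concircular function, the Tashiro/K\"uhnel--Rademacher splitting puts the gradient along a one-dimensional \emph{base} and the warping function on that base, multiplying the metric of the orthogonal level sets; iterating over $p$ independent functions, the gradients span a $p$-dimensional base which carries the warping. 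The lemma asserts the opposite: the $p$-dimensional maximally symmetric factor $\Sigma$ (tangent to the gradients) is the one scaled by a function $\Omega$ of the transverse factor $V$, while $\overline{\gamma}$ does not vary along $\Sigma$. These two structures coincide only for direct products, so the step ``a $(p+1)$-dimensional family pins the common fibers to a $p$-dimensional factor $\Sigma$'' does not follow from the theory you cite; the extra rigidity must be wrung out of \eq{special_gauge_master} together with \eq{equationDhDf} and \eq{add_eqn}, and that is exactly the portion you (like the paper) delegate to \cite{mps2} instead of proving. The same problem affects your final step: ``$\lambda$ is constant along the fibers'' amounts to $\torsion(\mathrm{grad}\,f)=0$ for every solution, which is not automatic in your own trichotomy --- in the branch $\bm{\torsion}=\tfrac12\,\mathrm{d}\ln|\psi|$ with $\psi$ nowhere zero one has $\mathrm{d}\lambda(\mathrm{grad}\,f)=\psi\neq 0$, so either that branch must be excluded or the transfer back to $\gamma$ fails. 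In short: (i) stands as a proof; (ii) is a plausible reduction whose decisive structural step is missing.
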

  We emphasize that, in the context of multiple Killing horizons of order
  $m \geq 3$ where this lemma automatically applies,
    the globally defined torsion one-form
$\bm{\torsion}_0$ is in general not exact.
}
\end{remark}


\begin{theorem}
\label{thm_manyKVs}
Let $( M , g)$ be an $(n+1)$-dimensional $\Lambda$-vacuum spacetime which admits a bifurcate Killing horizon
with one of its Killing horizons being a non-degenerate MKH of order $m\geq 3$. Then the spacetime admits (locally) at least $\frac{1}{2}m(m+1)$ Killing vectors.
\end{theorem}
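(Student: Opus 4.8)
The strategy is to count the spacetime Killing vectors directly through the KID system of Theorem~\ref{prop_KID_equations}, exploiting the rigidity forced on the data by Lemma~\ref{Lemmamps2}. Let $\mathcal{H}_1^+$ be the MKH of order $m$, with free data $(\gamma,\bm{\torsion})$ on $S$. By Corollary~\ref{cor_multiple_horizons}, equation \eq{double_horizon_cond} admits exactly $m-1$ independent solutions. Choosing one such solution $f^{(1)}$ and passing, on the dense set $\widetilde S$ where $f^{(1)}\neq 0$, to the gauge \eq{special_gauge_master}, this equation becomes the master equation \eq{master}; since rescaling by the nowhere-vanishing $f^{(1)}$ is a linear isomorphism of solution spaces, \eq{master} again has an $(m-1)$-dimensional solution space. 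Matching this with the quantity $p+1$ of Lemma~\ref{Lemmamps2} gives $p=m-2\geq 1$, so the lemma applies: $\bm{\torsion}$ is locally exact, say $\bm{\torsion}=d\lambda$, and $(\widetilde S,\gamma)$ is locally a warped product $V\times\Sigma$ with $(\Sigma,\mathring\gamma)$ maximally symmetric of dimension $p=m-2$.

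The key structural point is that the KID system \eq{KID1}--\eq{KID3} decouples: \eq{KID1} constrains only $\ol\zeta=\zeta^A\partial_A$, \eq{KID2} only $\zeta_v$, and \eq{KID3} only $\zeta_u$. Hence the full solution space is the direct sum of the three individual solution spaces, and by Theorem~\ref{prop_KID_equations} the number of spacetime Killing vectors is $1$ (from $\xi$, the trivial KID solution with $\partial_{[u}\zeta_{v]}|_S\neq 0$) plus the sum of their dimensions. I then evaluate the three in turn. The solution space of \eq{KID2} has dimension $m-1$ by construction. Next, \eq{KID3} is precisely \eq{KID2} with $\bm{\torsion}$ replaced by $-\bm{\torsion}$; since $\bm{\torsion}=d\lambda$ is exact, the gauge transformation \eq{transtorsion} with $\phi=e^{-2\lambda}$ sends $\bm{\torsion}\mapsto-\bm{\torsion}$, so multiplication by the nowhere-vanishing $e^{-2\lambda}$ is a linear isomorphism between the solution spaces of \eq{KID2} and \eq{KID3}; thus the latter also has dimension $m-1$. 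Finally, exactness of $\bm{\torsion}$ makes $\mcL_{\ol\zeta}\bm{\torsion}=d(\ol\zeta(\lambda))$ automatically exact, so \eq{KID1} reduces to requiring that $\ol\zeta$ be a Killing vector of $(S,\gamma)$; the $\tfrac{1}{2}p(p+1)=\tfrac{1}{2}(m-2)(m-1)$ Killing vectors of the maximally symmetric factor $\Sigma$ lift to Killing vectors of the warped product (their Lie derivative of $\gamma$ vanishes because the warping function depends only on $V$), so \eq{KID1} has at least $\tfrac{1}{2}(m-2)(m-1)$ solutions.

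Adding the three contributions, the spacetime admits at least
\begin{equation*}
1+\tfrac{1}{2}(m-2)(m-1)+(m-1)+(m-1)=\tfrac{1}{2}m(m+1)
\end{equation*}
Killing vectors, which is the assertion; their independence is automatic, since the three families have disjoint support among $(\zeta_u,\zeta_v,\zeta_A)$ and $\xi$ vanishes on $S$. Given Lemma~\ref{Lemmamps2} the argument is largely bookkeeping, and the main obstacle is the matching step showing that \eq{KID3} has the same number of solutions as \eq{KID2}: this rests decisively on the exactness of $\bm{\torsion}$ (part (i) of the lemma), without which the two equations need not share a solution count. Part (ii) is then what promotes the two ``horizon'' families to the full maximally symmetric tower of tangential Killing vectors needed to close the quadratic count $\tfrac{1}{2}m(m+1)$.
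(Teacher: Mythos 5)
Your proposal is correct and follows essentially the same route as the paper's proof: pass to the dense set $\widetilde S$ where the gauge \eq{special_gauge_master} holds, invoke Lemma~\ref{Lemmamps2} to get exactness of $\bm{\torsion}$ and the warped-product structure, count $m-1$ solutions each for \eq{KID2} and \eq{KID3}, $\tfrac{1}{2}(m-2)(m-1)$ tangential Killing vectors for \eq{KID1}, plus one for $\xi$, and apply Theorem~\ref{prop_KID_equations}. Your write-up is in fact slightly more explicit than the paper's at the points it glosses over — the isomorphism between the solution spaces of \eq{KID2} and \eq{KID3} via the rescaling $\phi=e^{-2\lambda}$, the automatic exactness of $\mcL_{\ol\zeta}\bm{\torsion}=d(\ol\zeta(\lambda))$, and the lifting of the fiber Killing vectors to the warped product — all of which are correct.
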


\begin{proof}
Without restriction assume that $\mathcal{H}_1^+$ is non-degenerate MKH of order $m\geq 3$.
We consider the initial data on the bifurcation surface which generate this spacetime.
Moreover, we restrict attention to the subset $\widetilde S\subset S$ where the gauge \eq{special_gauge_master}
can be realized.
In particular \eq{master}, \eq{equationDhDf} and \eq{add_eqn} hold and
the space of solutions of \eq{master} is $m-1$. By item (ii) of
Lemma \ref{Lemmamps2}, $(\widetilde  S, \gamma|_{\widetilde S})$
admits at least $\frac{1}{2}(m-2)(m-1)$ (local) Killing vectors. Moreover, since
$\bm{\torsion}$ is exact, equation (\ref{KID3}) restricted to
$\widetilde{S}$ admits the same number of
independent solutions as (\ref{master}). Applying
Theorem~\ref{prop_KID_equations}, the data on $\widetilde S$, and therefore on $S$,
generate a spacetime with
\begin{align*}
\frac{1}{2}(m-2)(m-1) + 2 (m-1) + 1 = \frac{1}{2} m(m+1)
\end{align*}
locally defined Killing vectors, as claimed.
\end{proof}

\subsection{Vanishing torsion one-form}
\label{sec_vanish_torsion}

A particular case of relevance for characteristic initial data corresponding
to a bifurcate horizon  is  the case when the torsion one-form $\torsion$ is
exact and  $(S,\gamma)$ is Einstein, i.e.
\begin{align}
  {}^{\gamma} R_{AB}= \frac{{}^{\gamma}  R}{n-1} \gamma_{AB}.
\label{weaker_assumpt_data}
  \end{align}
Note that for $n \geq 4$ the second Bianchi identity then  implies  ${}^{\gamma}  R=\mathrm{const.}$, while for $n=2,3$ (\ref{weaker_assumpt_data}) imposes no restriction whatsoever on $\gamma$.  The case  $n=2$
is special since all KID equations become ODEs, so we assume $n \geq 3$.  We also assume that $S$ is connected.

We want to analyze the space of solutions of the KID equations for such data.
When $\bm{\torsion}$ is exact, there is a global gauge where $\bm{\torsion}=0$,
which we assume from on. Note that when $\bm{\torsion}$ is merely closed,
this  gauge can also be imposed locally, so all results below of a local nature
also hold in this case.

The KID equation to be solved is
(cf.\ \eq{double_horizon_cond} and \eq{double_horizon_cond2})
\begin{equation}
\Big(D_{A}D_{B}
-\frac{{}^{\gamma} R}{2(n-1)} \gamma_{AB}+\frac{\Lambda}{n-1} \gamma_{AB} \Big) f
= 0
\,.
\label{max_sym_KID}
\end{equation}
We split \eq{max_sym_KID} into trace and  trace-free part,
\begin{eqnarray}
(D_{A} D_{B} f )_{\mathrm{tf}}&=& 0
\,,
\label{khs_eqn1}
\\
\Big(\Delta_{\gamma}-\frac{1}{2}{}^{\gamma} R +\Lambda) \Big) f &=& 0
\,.
\label{khs_eqn2}
\end{eqnarray}
We determine the divergence of the first equation,
\begin{equation}
D_{A}\Delta_{\gamma}  f
+\frac{1}{n-2}{}^{\gamma} RD_{A} f
=0
\,.
\label{divergence_eqn}
\end{equation}
Inserting \eq{khs_eqn2}
we obtain
\begin{equation}
\Big(\frac{n}{n-2}{}^{\gamma} R-2\Lambda\Big)D_{A}  f
+ f D_{A}{}^{\gamma}  R
=0
\,.
\label{ODE_RY}
\end{equation}
%
When the first parenthesis is not identically zero, which we write as
\begin{equation}
  {}^{\gamma}  R
  \not \equiv \frac{2(n-2)}{n}\Lambda
\,,
\label{ineq_rR}
\end{equation}
the equation can be integrated, the general solution being
\begin{equation}
  f =C \left | \frac{n}{n-2}{}^{\gamma}   R-2\Lambda \right |^{\frac{2-n}{n}}  \,, \quad  C = \mathrm{const.}
 \label{solf}
\end{equation}
We insert it into \eq{max_sym_KID},
\begin{align}
&\frac{2(n-1)}{n-2}D_{A} {}^{\gamma}  RD_{B}{}^{\gamma}  R
-\Big( \frac{n}{n-2}{}^{\gamma}  R-2\Lambda \Big)D_{A}D_{B}{}^{\gamma}  R
\nonumber
\\
&\qquad\qquad  -\frac{1}{n-1}\Big(\frac{{}^{\gamma}  R}{2} -\Lambda \Big) \Big( \frac{n}{n-2}{}^{\gamma}  R-2\Lambda \Big)^2\gamma_{AB}
= 0
\,.
\label{2-order_KH_cond}
\end{align}
Summarizing, the following result has been proved.
\begin{lemma}
\label{lem_multiple}
Consider data $(S,\gamma,\bm{\torsion})$ generating a bifurcate Killing horizon.
Assume that $S$ is connected of dimension at least two,
$\bm{\torsion}$ is  exact
and \eq{weaker_assumpt_data}, \eq{ineq_rR} hold. Then
a MKH which belongs to the bifurcate Killing horizon can be at most of order 2.
It is of order 2 if and only if \eq{2-order_KH_cond} holds and then the solution of the KID equation in the gauge $\bm{\torsion}=0$ is (\ref{solf}).
\end{lemma}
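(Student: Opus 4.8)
The plan is to convert the order of the multiple Killing horizon directly into the dimension of the solution space of the relevant KID equation, and then to show that the hypotheses force that space to be at most one-dimensional, with the single admissible profile given by \eqref{solf}. By Corollary~\ref{cor_multiple_horizons}, $\mathcal{H}_1^{\pm}$ is a MKH of order $m$ precisely when \eqref{double_horizon_cond} admits $m-1$ independent non-identically-zero solutions $f$ on $S$. Since $\bm{\torsion}$ is exact, I would first pass to the global gauge $\bm{\torsion}=0$, in which \eqref{double_horizon_cond} collapses to \eqref{max_sym_KID}. Using the Einstein condition \eqref{weaker_assumpt_data} I would split this into its trace-free and trace parts \eqref{khs_eqn1}--\eqref{khs_eqn2}, so the problem becomes the overdetermined elliptic system ``trace-free Hessian vanishes'' together with one scalar constraint.

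The core step is to extract from this system a first-order relation that pins $f$ down up to a single constant. I would take the divergence of \eqref{khs_eqn1}, commuting derivatives with the contracted Bianchi identity and \eqref{weaker_assumpt_data} to reach \eqref{divergence_eqn}, and then substitute the trace equation \eqref{khs_eqn2} to obtain \eqref{ODE_RY}, a pointwise relation among $D_A f$, $f$ and $D_A{}^{\gamma}R$. Under \eqref{ineq_rR} the coefficient $\tfrac{n}{n-2}{}^{\gamma}R-2\Lambda$ is not identically zero, so on the open set where it is nonzero \eqref{ODE_RY} integrates to the explicit profile \eqref{solf}, determined up to the multiplicative constant $C$. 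Since every solution of \eqref{max_sym_KID} must take this form on that set, the solution space is at most one-dimensional; equivalently $m-1\le 1$, i.e.\ the MKH has order at most $2$. (In the case ${}^{\gamma}R\equiv 2\Lambda$ permitted by \eqref{ineq_rR} when $\Lambda\ne0$, formula \eqref{solf} reduces to a constant, consistent with the fact that constants then solve \eqref{max_sym_KID}.)

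For the sharp ``if and only if'' I would insert the candidate \eqref{solf} back into the full equation \eqref{max_sym_KID}. Computing $D_A D_B f$ for $f$ of the form \eqref{solf}, and using \eqref{weaker_assumpt_data} once more, shows that \eqref{max_sym_KID} holds exactly when \eqref{2-order_KH_cond} is satisfied on $S$. Hence \eqref{solf} is a genuine, non-identically-zero solution (so $m=2$) if and only if the second-order condition \eqref{2-order_KH_cond} holds; otherwise no nontrivial $f$ exists and the bifurcate horizon carries no further Killing horizon. Together these give all three assertions: order $\le 2$, the order-$2$ criterion, and the form of the solution.

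The step I expect to be most delicate is the \emph{global} character of the dimension count. Equation \eqref{ODE_RY} controls $f$ only where $\tfrac{n}{n-2}{}^{\gamma}R-2\Lambda\ne0$. For $n\ge4$ the Einstein condition \eqref{weaker_assumpt_data} together with the second Bianchi identity forces ${}^{\gamma}R=\mathrm{const.}$, so this coefficient is a nonzero constant and the argument closes immediately; but for $n=3$, where \eqref{weaker_assumpt_data} is vacuous and ${}^{\gamma}R$ may vary, the zero set of the coefficient must be handled. There I would use connectedness of $S$ together with unique continuation for the overdetermined elliptic system \eqref{khs_eqn1}--\eqref{khs_eqn2} (a solution is fixed by its value and differential at a single point) to propagate the one-dimensionality across the zero set and conclude globally. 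The remaining Hessian substitution producing \eqref{2-order_KH_cond} is routine.
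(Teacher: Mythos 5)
Your proposal follows essentially the same route as the paper's own derivation: pass to the gauge $\bm{\torsion}=0$, split \eqref{max_sym_KID} into trace-free and trace parts \eqref{khs_eqn1}--\eqref{khs_eqn2}, take the divergence to reach \eqref{ODE_RY}, integrate under \eqref{ineq_rR} to obtain \eqref{solf}, and substitute back to get \eqref{2-order_KH_cond} as the order-2 criterion. Your extra unique-continuation step (a solution of the normal-form system is fixed by its 1-jet at one point of the connected $S$), used to handle the $n=3$ case where the coefficient $\tfrac{n}{n-2}{}^{\gamma}R-2\Lambda$ may vanish on part of $S$, addresses a point the paper passes over silently, so your treatment is, if anything, slightly more careful.
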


\begin{remark}
\label{remark_n4-case}
{\rm
If  $n\geq 4$
equation \eq{2-order_KH_cond} implies by \eq{weaker_assumpt_data}  and \eq{ineq_rR}   that there there will be an MKH of order 2 if and only if
${}^{\gamma}  R=2\Lambda$.  In that case $f$ is constant on $S$.
}
\end{remark}

Let us now restrict the data further by imposing
${}^{\gamma}R=\mathrm{const.}$ This holds in particular if  $(S,\gamma)$ is maximally symmetric.
 which we recall to be equivalent to
\begin{equation}
  {}^{\gamma} R_{ABCD} =\frac{2\, \, {}^{\gamma}  R}{(n-1)(n-2)} \gamma_{A[C}\gamma_{D]B}\,,
\end{equation}
with
${}^{\gamma}  R =\mathrm{const.}$ We compute the  divergence of \eq{divergence_eqn},
\begin{equation}
\Delta_{\gamma} \Big(\Delta_{\gamma}+\frac{{}^{\gamma}  R}{n-2}\Big) f = 0
\,,
\end{equation}
and plug in \eq{khs_eqn2}
\begin{equation}
\Big({}^{\gamma}  R-2\Lambda\Big) \Big(\frac{n}{n-2}{}^{\gamma}  R-2\Lambda\Big) f = 0
\,,
\end{equation}
which requires
\begin{equation}
{}^{\gamma}  R=2\Lambda\quad \text{or} \quad
{}^{\gamma}  R
=\frac{2(n-2)}{n}\Lambda
\,.
\label{some_nec_conds}
\end{equation}

\begin{remark}
{\rm
If $n\geq 4$ data of the form \eq{weaker_assumpt_data} always have constant scalar curvature so the same conclusion can be drawn:
A necessary  condition for the existence of  a multiple Killing horizon is \eq{some_nec_conds}, and ${}^{\gamma} R=2\Lambda$ is, by Remark~\ref{remark_n4-case}, also sufficient.
}
\end{remark}

\subsubsection{Case  ${}^{\gamma}   R=2\Lambda$}
Let us consider the case where ${}^{\gamma}   R=2\Lambda$ first. It is different from the second  case only for $\Lambda \ne 0$, which we assume here.
Then Lemma~\ref{lem_multiple} applies and $ f =\mathrm{const.}\ne 0$ is the only candidate solution, and, indeed, solves
\eq{max_sym_KID}. It then follows from Theorem
\ref{prop_KID_equations} that the number of Killings of the vacuum solution generated by the data is $3 + k$, where $k$ is the dimension of the Killing algebra of $(S,\gamma)$. Thus,
\begin{lemma}
  \label{lem_Nariai}
  With the same assumptions as in Lemma \ref{lem_multiple}, impose further
  ${}^{\gamma}  R  = 2 \Lambda \neq 0$.
  Let $k \geq 0$ be the number of
  independent Killing vectors of $(S,\gamma)$. Then the
  vacuum spacetime generated by this data admits   $k+3$ Killing vectors and
  both horizons $\mathcal{H}^{+}_a$, $a=1,2$, which belong to the bifurcate Killing horizon  are  multiple Killing horizons of order 2.
\end{lemma}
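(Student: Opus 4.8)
The plan is to reduce everything to the decoupled KID system of Theorem~\ref{prop_KID_equations} and exploit the special algebraic cancellation that occurs when ${}^{\gamma}R=2\Lambda$. Since $\bm{\torsion}$ is exact, I would first fix the global gauge $\bm{\torsion}=0$; in this gauge \eqref{KID1} becomes the plain Killing equation on $(S,\gamma)$ (the requirement that $\mcL_{\ol{\zeta}}\bm{\torsion}=0$ be exact is automatic), while \eqref{KID2} and \eqref{KID3} both collapse to the single equation \eqref{max_sym_KID}. Substituting ${}^{\gamma}R=2\Lambda$ there, the scalar-curvature term $-\frac{{}^{\gamma}R}{2(n-1)}\gamma_{AB}$ exactly cancels the cosmological term $+\frac{\Lambda}{n-1}\gamma_{AB}$, so the equation reduces to the pure Hessian condition $D_AD_Bf=0$.

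The heart of the argument is then to show that, on the connected Einstein surface $(S,\gamma)$, the solution space of $D_AD_Bf=0$ is exactly one-dimensional, consisting of the constants. I would observe that $V_A:=D_Af$ is parallel, so the integrability condition ${}^{\gamma}R_{ABC}{}^{D}V_{D}=0$ holds; contracting over $A$ and $C$ gives ${}^{\gamma}R_{BD}V^D=0$, and the Einstein assumption \eqref{weaker_assumpt_data} with ${}^{\gamma}R=2\Lambda$ turns this into $\frac{2\Lambda}{n-1}V_B=0$. Since $\Lambda\neq0$ this forces $V=0$, hence $f=\mathrm{const}$ on connected $S$, while a nonzero constant manifestly solves the equation. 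Alternatively, I could invoke Lemma~\ref{lem_multiple} directly: for ${}^{\gamma}R=2\Lambda$ constant one has $\frac{{}^{\gamma}R}{2}-\Lambda=0$ and all derivatives of ${}^{\gamma}R$ vanish, so the order-2 criterion \eqref{2-order_KH_cond} holds trivially and \eqref{solf} degenerates to a constant.

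With the three blocks of the KID system decoupled ($\zeta_A$ governed by \eqref{KID1}, $\zeta_v$ by \eqref{KID2}, $\zeta_u$ by \eqref{KID3}), the full solution space is a direct sum of dimensions $k+1+1$, where $k$ is the number of Killing vectors of $(S,\gamma)$ and each of the two scalar equations contributes the single constant solution. Theorem~\ref{prop_KID_equations} then appends the generating field $\xi$, yielding $(k+2)+1=k+3$ Killing vectors of the emerging spacetime. For the horizon orders I would appeal to Corollary~\ref{cor_multiple_horizons}: in the gauge $\bm{\torsion}=0$ equation \eqref{double_horizon_cond} coincides with \eqref{max_sym_KID}, whose solution space is one-dimensional, so $\mathcal{H}^{\pm}_1$ is a MKH of order $1+1=2$; the identical analysis of \eqref{double_horizon_cond2} (the $\zeta_u$ equation) shows $\mathcal{H}^{\pm}_2$ is also of order $2$.

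The step I expect to be the main obstacle is the rigidity claim that $D_AD_Bf=0$ admits only constant solutions: this is precisely where both the Einstein condition and $\Lambda\neq0$ are indispensable, since without them a nonconstant function with parallel gradient could exist and would inflate the Killing count beyond $k+3$. A secondary point requiring care is confirming that the three KID blocks genuinely decouple in this gauge, so that the dimension of the solution space is additive and the bookkeeping against the ``$+1$'' of Theorem~\ref{prop_KID_equations} is unambiguous.
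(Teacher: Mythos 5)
Your proposal is correct, and its skeleton (gauge $\bm{\torsion}=0$, reduction of \eq{KID2}--\eq{KID3} to \eq{max_sym_KID}, the cancellation at ${}^{\gamma}R=2\Lambda$, the additive count $k+1+1$ plus the bifurcate Killing $\xi$ via Theorem~\ref{prop_KID_equations}, and Corollary~\ref{cor_multiple_horizons} for the order-2 claim) coincides with the paper's. The one genuine difference is how you establish that $D_AD_Bf=0$ has only constant solutions: the paper simply invokes Lemma~\ref{lem_multiple}, whose earlier ODE analysis (integrating \eq{ODE_RY} to \eq{solf}) shows that for constant ${}^{\gamma}R=2\Lambda$ the only candidates are constants, and then observes that constants indeed solve \eq{max_sym_KID}; you instead give a self-contained rigidity argument, noting that a solution has parallel gradient $V_A=D_Af$, whose integrability condition ${}^{\gamma}R_{ABC}{}^{D}V_D=0$ contracts, under the Einstein condition \eq{weaker_assumpt_data} with ${}^{\gamma}R=2\Lambda\neq 0$, to $\tfrac{2\Lambda}{n-1}V_B=0$, forcing $f=\mathrm{const}$ on connected $S$. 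Your argument is more elementary and makes completely transparent where both hypotheses (Einstein and $\Lambda\neq 0$) enter, and it does not depend on the preceding divergence/ODE machinery of Section~\ref{sec_vanish_torsion}; the paper's route is shorter in context because that machinery is already in place, and your ``alternative'' observation --- that \eq{2-order_KH_cond} holds trivially and \eq{solf} degenerates to a constant when ${}^{\gamma}R=2\Lambda$ is constant --- is precisely the paper's proof. Both correctly yield a one-dimensional solution space for each of \eq{double_horizon_cond} and \eq{double_horizon_cond2}, hence order 2 for both horizons and $k+3$ Killing vectors in total.
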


 It follows from the comment just after Theorem \ref{theorem_fund_form} that the second Killing vector with $\mathcal{H}^{+}_1$ as Killing horizon must be different from the second Killing vector with  Killing horizon $\mathcal{H}^{+}_2$.

 Lemma~\ref{lem_Nariai} confirms the existence (for $\Lambda\ne0$) of non-maximally symmetric spacetimes with non-degenerate multiple Killing horizons, as was already shown in 4 dimensions in \cite[section 4.2]{mps}.

Since maximally symmetric spacetimes of dimension $n-1$ admit
$\frac{1}{2}n(n-1)$ independent Killing we also have from Lemma \ref{lem_Nariai}
\begin{corollary}
  Let $(S,\gamma)$ be maximally symmetric, connected and of dimension
  at least two and assume  ${}^{\gamma}  R  = 2 \Lambda \neq 0$.
  Then the spacetime generated by the data
  $(S,\gamma,\bm{\torsion} = d \psi)$ has $\frac{1}{2} n (n-1)  +3$ Killing vectors and  both horizons $\mathcal{H}^{+}_a$, $a=1,2$
  are  multiple Killing horizons of order 2.
  \end{corollary}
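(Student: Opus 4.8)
The plan is to obtain the statement as a direct specialization of Lemma~\ref{lem_Nariai}, so the bulk of the work reduces to verifying hypotheses and counting symmetries. First I would check that the data $(S,\gamma,\bm{\torsion}=d\psi)$ fall within the scope of that lemma, which in turn inherits the assumptions of Lemma~\ref{lem_multiple}. By hypothesis $S$ is connected and of dimension $n-1\geq 2$; the torsion one-form $\bm{\torsion}=d\psi$ is manifestly exact; and a maximally symmetric $(S,\gamma)$ is in particular Einstein, so \eqref{weaker_assumpt_data} holds. The only condition requiring a genuine check is the strict inequality \eqref{ineq_rR}. Since here ${}^{\gamma}R=2\Lambda$ with $\Lambda\neq 0$, the equality $2\Lambda=\frac{2(n-2)}{n}\Lambda$ would force $n=n-2$, which is impossible; hence \eqref{ineq_rR} is satisfied and Lemma~\ref{lem_Nariai} applies verbatim.

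Next I would count the Killing vectors of the cut. A maximally symmetric Riemannian manifold of dimension $n-1$ carries the maximal number $\frac{1}{2}(n-1)\big((n-1)+1\big)=\frac{1}{2}n(n-1)$ of independent Killing vectors, so in the notation of Lemma~\ref{lem_Nariai} we have $k=\frac{1}{2}n(n-1)$. Substituting this value, Lemma~\ref{lem_Nariai} yields that the generated $\Lambda$-vacuum spacetime admits $k+3=\frac{1}{2}n(n-1)+3$ Killing vectors and that both horizons $\mathcal{H}^{+}_a$, $a=1,2$, belonging to the bifurcate Killing horizon are multiple Killing horizons of order~2. This is precisely the assertion.

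There is essentially no obstacle beyond the bookkeeping described above, since the substantive analysis has already been carried out in Lemmas~\ref{lem_multiple} and~\ref{lem_Nariai}: once $\bm{\torsion}$ is exact and ${}^{\gamma}R=2\Lambda\neq 0$, the KID equation \eqref{max_sym_KID} forces $f=\mathrm{const.}\neq 0$ as its unique (up to scale) solution, and the count of spacetime Killing vectors follows from Theorem~\ref{prop_KID_equations}. The single point deserving a moment's care is the verification of \eqref{ineq_rR}, which guarantees that one lies on the branch ${}^{\gamma}R=2\Lambda$ rather than the alternative ${}^{\gamma}R=\frac{2(n-2)}{n}\Lambda$ appearing in \eqref{some_nec_conds}; it is exactly the assumption $\Lambda\neq 0$ that disentangles these two cases and makes the present corollary a clean consequence of Lemma~\ref{lem_Nariai}.
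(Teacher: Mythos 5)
Your proposal is correct and follows essentially the same route as the paper: the corollary is stated there as an immediate consequence of Lemma~\ref{lem_Nariai} combined with the fact that a maximally symmetric $(n-1)$-dimensional $(S,\gamma)$ admits $\frac{1}{2}n(n-1)$ Killing vectors, i.e.\ $k=\frac{1}{2}n(n-1)$. Your additional verification that ${}^{\gamma}R=2\Lambda\neq 0$ rules out the degenerate branch ${}^{\gamma}R=\frac{2(n-2)}{n}\Lambda$, so that \eqref{ineq_rR} and hence Lemma~\ref{lem_multiple} apply, is exactly the (implicit) hypothesis check underlying the paper's argument.
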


\begin{remark}
{\rm
The bifurcation surface of the (Anti-)Nariai spacetime
\cite{Exact}
 is, depending on the sign of $\Lambda$,   a round sphere or hyperbolic space with Ricci scalar ${}^{\gamma}  R=2\Lambda$,  and has vanishing torsion, i.e.\ by uniqueness of solution to the characteristic Cauchy problem it must be the solution predicted by  the corollary.

For $\Lambda>0$ there is another way of seeing this: One may start with a round sphere as  bifurcation surface. Then the emerging spacetime will be spherically symmetric and  by Birkhoff's theorem (cf.\ e.g.\ \cite{schmidt})
the emerging spacetime must locally  be isometric to the (Anti-)Nariai solution  (Schwarzschild-de Sitter can be excluded since it has only $\frac{1}{2}n(n-1)+1$ Killing vectors).
}
\end{remark}

\subsubsection{Case  ${}^{\gamma}  R
=\frac{2(n-2)}{n}\Lambda $}

Let us devote attention now to the second case where
\begin{equation}
{}^{\gamma}  R
=\frac{2(n-2)}{n}\Lambda
\,.
\label{cond_scalar_curv}
\end{equation}
%
Maximally symmetric vacuum spacetimes have vanishing Weyl tensor, so the Gauss identity  on $S$ gives
$0= C_{ABCD}\overset{s}{=}{}^{\gamma}  R_{ABCD} -\frac{4}{n(n-1)}\Lambda \gamma_{A[C}\gamma_{D]B}$, whence
\eq{cond_scalar_curv} is necessary for the spacetime to be maximally symmetric.

Now  \eq{max_sym_KID}  becomes
\begin{eqnarray}
(D_{A}D_{B} f )_{\mathrm{tf}}&=& 0
\,,
\label{max_sym_KID1}
\\
\Big(\Delta_{\gamma}+\frac{2}{n}\Lambda  \Big) f &=& 0
\,.
\label{max_sym_KID2}
\end{eqnarray}
%
%
%
 Let us assume that $(S, \gamma)$ is maximally symmetric and, in addition, simply connected, connected and complete. Then, depending on the sign of the cosmological constant,
the bifurcation surface can be identified with Euclidean space, hyperbolic space or a round sphere.
In fact, if we require \eq{max_sym_KID1}-\eq{max_sym_KID2} to admit a solution,  the  simply connectedness-requirement is not needed
if the cosmological constant is positive \cite{obata,tashiro}.

\paragraph{Vanishing cosmological constant}
If $\Lambda=0$ then $(S, \gamma)$ needs to be the Euclidean space and in that case $n$ independent solutions are given, in standard
coordinates $(x^A)$,  by $ f =1$ and $ f =x^A$, $A=3,\dots, n+1$.

\paragraph{Positive cosmological constant}
If $\Lambda>0$ let us rescale the metric  such that $\Lambda=n(n-1)/2$.
Then ${}^{\gamma}  R=(n-1)(n-2)$, while $(S, \gamma)$ is the standard $(n-1)$-sphere.
In that case $n$ independent solutions to \eq{max_sym_KID2} are given by the $\ell=1$-spherical harmonics $Y_{\ell}$, $\Delta_{\gamma} Y_{\ell}=-\ell(\ell+n-2)Y_{\ell}$.
Since the gradient of each function with  $\Delta_{\gamma} Y_{\ell}=-(n-1)Y_{\ell}$ is a conformal Killing one-form \cite{obata}, all $Y_{\ell}$'s satisfy the full system
\eq{max_sym_KID1}-\eq{max_sym_KID2}.

\paragraph{Negative cosmological constant}
If $\Lambda<0$ we rescale the metric  such that $\Lambda=-n(n-1)/2$.
Then ${}^{\gamma}  R=-(n-1)(n-2)$, while $(S, \gamma)$ is the standard hyperbolic space.
By analytic continuation of the spherical harmonics  of degree $\ell$ on an $n-1$-dimensional standard sphere to the $n-1$-dimensional standard  hyperbolic space, both regarded
as subsets of the complex unit sphere in $\mathbb{C}^n$, one obtains spherical   harmonics  of degree $\ell$  on the hyperbolic space \cite{strichartz}.
These spherical harmonics are eigenfunctions of $\Delta_{\gamma}$ with eigenvalue $\ell(\ell+ n-2)$.
In particular the $\ell=1$-spherical  harmonics  provide $n$  independent solutions which solve  \eq{max_sym_KID2}.

When the conditions on $(S,\gamma)$ being connected, simply connected or
complete are dropped, the KID equations (\ref{max_sym_KID1})-(\ref{max_sym_KID2}) still admit $n$ linearly independent local
solutions.  Thus, the following result holds (note that since the
statement is local we may assume that $\bm{\torsion}$ is merely closed).

\begin{lemma}
  \label{MaxSym}
  Let $(S,\gamma)$ be maximally symmetric and satisfying
  (\ref{cond_scalar_curv})  and $\bm{\torsion}$
  be closed. Assume $n \geq 3$ and let $(M,g)$ be a
  spacetime generated by the characteristic initial value problem corresponding to a bifurcate Killing horizon with data $(S,\gamma,\bm{\torsion})$.
 Then any point $q\in S \subset M$ admits
  a spacetime neighbourhood $U_q \subset M$  with $\frac{1}{2} (n+1)(n+2)$ Killing vectors. In particular $(M,g)$
is maximally symmetric in some neighbourhood of $S$.
\end{lemma}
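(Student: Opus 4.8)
The plan is to invoke Theorem~\ref{prop_KID_equations}, which reduces the count of spacetime Killing vectors to counting independent solutions of the KID system \eq{KID1}--\eq{KID3} on $S$ and then adding one for the horizon generator $\xi$. Since the assertion is purely local and $\bm{\torsion}$ is closed, I would first restrict to a neighbourhood of $q$ on which $\bm{\torsion}=d\psi$ is exact, and then use the residual gauge freedom \eq{transtorsion} to set $\bm{\torsion}=0$ throughout that neighbourhood. This is precisely the gauge in which the equations simplify.

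In the gauge $\bm{\torsion}=0$ the three conditions \eq{KID1}, \eq{KID2} and \eq{KID3} constrain the disjoint components $\zeta_A$, $\zeta_v$ and $\zeta_u$ respectively, so the space of solutions $(\zeta_u,\zeta_v,\zeta_A)|_S$ splits as a direct sum and the dimensions add. I would count each factor in turn. First, \eq{KID1} reduces to the statement that $\ol\zeta=\zeta^A\partial_A$ is a Killing vector of $(S,\gamma)$, the exactness requirement being vacuous when $\bm{\torsion}=0$; as $(S,\gamma)$ is maximally symmetric of dimension $n-1$ this gives $\tfrac{1}{2}n(n-1)$ solutions. Second, with $\bm{\torsion}=0$, and using the Einstein property of a maximally symmetric metric \eq{weaker_assumpt_data} together with \eq{cond_scalar_curv} to rewrite the curvature term, both \eq{KID2} and \eq{KID3} collapse to one and the same equation \eq{max_sym_KID}, equivalently the pair \eq{max_sym_KID1}--\eq{max_sym_KID2}. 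The analysis of the three cases $\Lambda=0$, $\Lambda>0$ and $\Lambda<0$ carried out in the paragraphs preceding the lemma shows that this equation admits exactly $n$ linearly independent local solutions, so each of \eq{KID2} and \eq{KID3} contributes $n$.

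Adding the three contributions yields
\[
k = \tfrac{1}{2}n(n-1) + n + n = \tfrac{1}{2}n(n+3),
\]
so by Theorem~\ref{prop_KID_equations} the generated spacetime carries $k+1=\tfrac{1}{2}(n+1)(n+2)$ locally defined Killing vectors on a suitable neighbourhood $U_q$ of $q$. Since $\tfrac{1}{2}(n+1)(n+2)$ is the maximal number of independent local Killing vectors an $(n+1)$-dimensional Lorentzian manifold can possess, and this maximum is attained only by constant-curvature geometries, $(M,g)$ must be maximally symmetric on $U_q$, which proves the claim.

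I expect the only genuine subtlety to be the bookkeeping: one must verify that in the gauge $\bm{\torsion}=0$ the three pieces of the KID system truly decouple (so that no solution is counted twice and none is overlooked), and that the horizon generator $\xi$, which corresponds to the trivial on-$S$ data $\zeta|_S=0$ supplemented by $\partial_{[u}\zeta_{v]}|_S\ne 0$, is the single extra Killing vector accounting for the ``$+1$'' rather than being already among the $k$ solutions. The analytic heart of the matter---that \eq{max_sym_KID} has exactly $n$ local solutions without the global completeness or simple-connectedness hypotheses---has already been established, so the remaining work is this arithmetic together with the standard fact that attaining the maximal Killing dimension characterises local maximal symmetry.
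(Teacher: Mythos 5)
Your proposal is correct and follows essentially the same route as the paper: it gauges the closed torsion one-form to zero locally, invokes Theorem~\ref{prop_KID_equations}, counts $\frac{1}{2}n(n-1)$ solutions of \eq{KID1} from the maximal symmetry of $(S,\gamma)$ plus $n$ solutions each of \eq{KID2} and \eq{KID3} via the reduction to \eq{max_sym_KID1}--\eq{max_sym_KID2} established in the preceding paragraphs, and adds the bifurcate generator $\xi$ to reach $\frac{1}{2}(n+1)(n+2)$. The only difference is presentational: the paper leaves this bookkeeping implicit (``Thus, the following result holds''), whereas you spell out the decoupling of the three KID equations and the arithmetic explicitly, which is exactly the verification the paper's terse statement relies on.
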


One may also consider whether the converse is true, i.e.\ given a
spacetime with a bifurcate Killing horizon with bifurcation surface $S$
and maximally symmetric near $S$, whether  the data $(S,\gamma,\bm{\torsion})$ must
be as in Lemma \ref{MaxSym}.
According to Remark~\ref{rem_max_sym} (iii) the bifurcation surface $(S, \gamma)$ needs to be maximally symmetric, i.e.\
admit $\frac{1}{2}n(n-1)$  independent (local) Killing vectors
 $\ol \zeta^{(b)}$, $b\in\{1,\dots,n(n-1)/2\}$.
Each of these  Killing vectors need to extend (also locally) to a spacetime symmetry which requires, by Theorem~\ref{theorem_fund_form}, that
$$
\mcL_{\ol \zeta^{(b)}}D_{[A}\torsion_{B]} = 0 \enspace \forall\, b
\quad \Longrightarrow \quad  D_{[A}\torsion_{B]} = 0
\,.
$$
i.e.  $\bm{\torsion}$ needs to be closed.
%
Finally, as already mentioned the Gauss identity on $S$ forces
(\ref{cond_scalar_curv}). Thus, indeed the converse of Lemma \ref{MaxSym}
holds true.

Combining this converse with the uniqueness of solutions
of  the characteristic initial value problem near the initial hypersurface we conclude
\begin{lemma}
  Let $(S,\gamma)$ be maximally symmetric, complete, connected and
  simply connected and $\bm{\torsion}$ exact. Let $(M,g)$ be a
  spacetime generated by the characteristic initial value problem corresponding to a bifurcate Killing horizon with data $(S,\gamma,\bm{\torsion})$. Then,
  $(M,g)$ is isometric to a portion of de Sitter ($\Lambda >0$),
  Minkowski ($\Lambda=0$) or
  anti-de Sitter ($\Lambda <0$).
\end{lemma}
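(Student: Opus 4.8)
The plan is to combine the local maximal symmetry supplied by Lemma~\ref{MaxSym} with the local uniqueness of the characteristic Cauchy problem in Theorem~\ref{thm_Cauchy_problem}. Under the standing assumption \eqref{cond_scalar_curv} of this subsubsection, $(S,\gamma)$ is maximally symmetric and $\bm{\torsion}$ is exact, hence in particular closed; therefore Lemma~\ref{MaxSym} applies verbatim and shows that $(M,g)$ is maximally symmetric on some connected open neighbourhood $U$ of $S$. A maximally symmetric $(n+1)$-dimensional Lorentzian manifold (with $n\geq 3$) obeying the $\Lambda$-vacuum equations has constant sectional curvature fixed by $\Lambda$, and is accordingly locally of the de Sitter ($\Lambda>0$), Minkowski ($\Lambda=0$) or anti-de Sitter ($\Lambda<0$) type. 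What remains is to upgrade this \emph{local} statement into the claim that $U$ is isometric to an actual \emph{portion} of the corresponding model spacetime, which I denote $\mathcal{M}_\Lambda$.

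First I would pin down the bifurcation surface itself. Being maximally symmetric, complete, connected and simply connected of dimension $n-1$, $(S,\gamma)$ is, by the Killing--Hopf classification of space forms, globally isometric to the complete simply connected Riemannian manifold of constant curvature dictated by \eqref{cond_scalar_curv}: a round sphere $\mathbb{S}^{n-1}$ for $\Lambda>0$, Euclidean space $\mathbb{R}^{n-1}$ for $\Lambda=0$, or hyperbolic space $\mathbb{H}^{n-1}$ for $\Lambda<0$. This is precisely where completeness and simple connectedness are used: they rule out nontrivial quotients (flat tori, lens spaces, compact hyperbolic manifolds, $\dots$), whose associated spacetimes would be quotients of $\mathcal{M}_\Lambda$ rather than $\mathcal{M}_\Lambda$ itself.

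Next I would exhibit the model and match the data. Each $\mathcal{M}_\Lambda$ carries a canonical bifurcate Killing horizon whose bifurcation surface $(S_0,\gamma_0)$ is exactly this space form: the cosmological horizon of de Sitter with $S_0\cong\mathbb{S}^{n-1}$, the boost (Rindler) horizon of Minkowski with $S_0\cong\mathbb{R}^{n-1}$, and the corresponding boost horizon of anti-de Sitter with $S_0\cong\mathbb{H}^{n-1}$. By the converse of Lemma~\ref{MaxSym} established above (via Theorem~\ref{theorem_fund_form} and the Gauss identity), the characteristic data induced by $\mathcal{M}_\Lambda$ on $S_0$ are again of the present type, namely a maximally symmetric $\gamma_0$ satisfying \eqref{cond_scalar_curv} together with a closed torsion one-form. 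Since adding an exact one-form to the torsion is pure gauge (cf.\ \eqref{transtorsion}), both torsions can be set to zero simultaneously, and by the previous paragraph $(S,\gamma)\cong(S_0,\gamma_0)$. Recalling from Remark~\ref{rem} that the only non-gauge data are $\gamma$ together with the non-gauge part of $\bm{\torsion}$, supplemented by the vanishing null second fundamental forms that characterise a bifurcate horizon (which both spacetimes share by construction), I conclude that $(M,g)$ and $\mathcal{M}_\Lambda$ are generated by identical characteristic data.

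Finally I would invoke the uniqueness clause of Theorem~\ref{thm_Cauchy_problem}: two $\Lambda$-vacuum spacetimes arising from the same data on two transversally intersecting null hypersurfaces are isometric on a neighbourhood of $S$. Applied to $(M,g)$ and $\mathcal{M}_\Lambda$, this produces an isometry of a neighbourhood of $S$ in $M$ onto a neighbourhood of $S_0$ in $\mathcal{M}_\Lambda$, which is exactly the assertion that $(M,g)$ is isometric to a portion of de Sitter, Minkowski or anti-de Sitter according to the sign of $\Lambda$. The main obstacle I anticipate is the local-to-global gap: both Lemma~\ref{MaxSym} and Theorem~\ref{thm_Cauchy_problem} are intrinsically local near $S$, so one cannot expect an isometry onto the whole of $\mathcal{M}_\Lambda$ but only onto an open subset, which is precisely why the statement is phrased with the word \emph{portion}. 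The only secondary point requiring care is the verification that the model's induced data really match the prescribed ones; this reduces, through the same Gauss identity used to derive \eqref{cond_scalar_curv}, to checking the scalar-curvature normalisation and the gauge-triviality of the torsion, both of which hold by construction.
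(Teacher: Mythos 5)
Your proposal is correct and follows essentially the same route as the paper: it combines Lemma~\ref{MaxSym} together with its converse (used to identify the data induced by the model spacetimes on their canonical bifurcate horizons) with the uniqueness clause of the characteristic initial value problem, Theorem~\ref{thm_Cauchy_problem}, to conclude isometry with a portion of the model. The paper states this very tersely ("combining this converse with the uniqueness of solutions of the characteristic initial value problem near the initial hypersurface we conclude"), and your write-up simply makes explicit the steps it leaves implicit — the Killing--Hopf identification of $(S,\gamma)$ with the model bifurcation surface, the gauge-triviality of the exact torsion, and the matching of data.
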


\vspace{1cm}

\subsection{Non-degenerate MKHs  of order $m=2$ for  $\Lambda=0$}
\label{sec_order2MKH}

The (Anti-)Nariai solution provides an example of a $\Lambda\lessgtr 0$-vacuum spacetime which admits a non-degenerate MKH of order 2.
We therefore aim to construct characteristic initial data which generate $\Lambda=0$-vacuum spacetimes with  non-degenerate MKH of order 2.
For this we  consider, in $n+1$ dimensions, initial data with vanishing torsion 1-form.
Then \eq{double_horizon_cond} becomes
\begin{equation}
\Big(D_{A}D_{B}
-\frac{1}{2}{}^{\gamma} R_{AB} \Big) f
= 0
\,.
\label{double_horizon_cond2b}
\end{equation}
We want to find a metric $\gamma$ for which these equations admit  precisely one  non-trivial solution.
For this we assume $n \geq 3$ (if $n=2$, then
\eq{double_horizon_cond2b} is a second order linear ODE which always has two independent solutions)  and that
$(S,\gamma)$ is a warped product metric with locally flat $n-2$ dimensional fibers,  so that the metric can be written in the local form
\begin{equation}
\gamma= \mathrm{d}x^2 + \R^2(x)\delta
\,,\quad \delta=\sum_{P=1}^{n-2}(\mathrm{d}y^P)^2
\,, \quad n\geq 3
\,. \\
\label{MKH2_metric}
\end{equation}
Then
\begin{align}
  {}^{\gamma} R_{xx} =  - (n-2)\frac{\partial_x^2\R}{\R}
\,,
\quad
  {}^{\gamma} R_{PQ} =
  - \left ( (n-3) (\partial_x \R)^2 + \R \partial_x^2 \R
  \right )   \delta_{PQ}
\,,
\\
 {}^{\gamma} R_{xP}=0
  \,, \quad {}^{\gamma} R=-(n-2) \left ( (n-3) \frac{(\partial_x \R)^2}{\R^2}
  - 2 \partial_x^2 \R \right )
\,,
\end{align}
and  \eq{double_horizon_cond2b}
 becomes
\begin{align}
  \partial_x^2f + \frac{(n-2)}{2} \frac{\partial_x^2\R}{\R}   f  =& 0\,, \label{Eqxx} \\
  \partial_P\partial_{Q}f + \R \partial_x \R \delta_{PQ}\partial_xf
  + \frac{1}{2} \left ( (n-3) (\partial_x \R)^2 + \R \partial_x^2 \R
  \right )   \delta_{PQ} f =& 0\,, \label{EqPQ} \\
  \Big( \partial_x-\frac{\partial_x\R}{\R }\Big)\partial_Pf =& 0\,. \label{EqPx}
\end{align}
We assume that the metric $\gamma$ is not flat, as otherwise
\eq{double_horizon_cond2b} clearly admits $n$ (hence more than one)
linearly independent
solutions. Thus,  $\partial_x \R$ is not identically zero and when
$n =3$ also $\partial_x^2 \R$ is not identically zero. When $n >3$ it is easy to show that if $\R = \R_0 x + \R_1$ with
$\R_0$ a non-zero constant, the system (\ref{Eqxx})-(\ref{EqPx})
admits no non-trivial solutions. So, we may assume that neither
$\partial_x \R$ nor $\partial_{x}^2\R$ vanish identically.

Equation (\ref{EqPx}) integrates to $f = a(y) \R + b(x)$, which inserted into
\eq{Eqxx} yields
\begin{align*}
  \frac{n \partial_x^2 \R}{2} a(y) + \partial_{x}^2 b
  + \frac{n-2}{2} \frac{\partial_x^2 \R}{\R} b =0.
\end{align*}
 Thus, $a(y)$ is necessarily constant and $f$ only depends on $x$. Thus the system to solve
  is
  \begin{align}
  \partial_x^2f + \frac{(n-2)}{2} \frac{\partial_x^2\R}{\R}   f  =& 0\,, \label{Eqxx2} \\
  \R \partial_x \R \partial_xf
  + \frac{1}{2} \left ( (n-3) (\partial_x \R)^2 + \R \partial_x^2 \R
  \right )  f =& 0\,. \label{EqPQ2}
\end{align}
This is a system of two ODEs for two functions. We aim at finding its general solution. Replacing $x \rightarrow -x$ if necessary we may assume that
$\partial_x \R >0$. Equation \eq{EqPQ2} can be integrated to
\begin{align*}
  f = \frac{f_0 \R^{- \frac{n-3}{2}}}{\sqrt{\partial_x \R}}
\end{align*}
where $f_0$ is a constant. Inserting into the other equation gives a rather
involved third order ODE for $\R(x)$. It is more convenient to perform
the change of variable
\begin{align}
  dx = \frac{z dz}{w(z)}
  \label{changevar}
\end{align}
and use the freedom introduced by $w(z)$ to impose the equation
\begin{align}
  \frac{d\R}{dz} = \frac{\R}{z w}.
  \label{EqR}
\end{align}
Note that in terms of this variable, $f=  f_0 z \R^{-\frac{n-2}{2}}$. A straightforward computation shows
that equation \eq{Eqxx2} becomes
\begin{align}
  \label{ODEfinal}
  z w \frac{dw}{dz} = w^2 + w (n-2) - \frac{n(n-2)}{4} := P^{(n)}(w).
\end{align}
This is a separable equation which can be solved explicitly. Note that
the second order polynomial $P^{(n)}(w)$ vanishes at two real and non-zero values
\begin{align*}
  w_{\pm} := \frac{ - (n-2) \pm \sqrt{ 2 (n-1)(n-2)}}{2},
\end{align*}
so that  $w(z) := w_{\pm}$ solve the ODE.  When $w(z)$ is not constant,
$P^{(n)}(w)$
is not identically zero and we get
\begin{align*}
  z =  \exp\left ( {{\int \frac{dz}{z}} } \right ) =
  \exp \left ( { {\int\frac{w dw}{P^{(n)}(w)}}} \right )=
  \frac{z_0 |w-w_{+}|^{\frac{w_+}{w_+ - w_-}}}{|w-w_{-}|^{\frac{w_-}{w_+ - w_-}}},
  \end{align*}
where $z_0$ is a constant. We elaborate further each case. When $w := w_{\pm}$, equation
(\ref{changevar}) integrates to
\begin{align*}
  z^2 = 2 w_{\pm} (x-x_0).
\end{align*}
The constant $x_0$ may be set to zero by shifting $x$. Then
(\ref{EqR}) gives
\begin{align*}
  \R(z) = c_0 |z|^{\frac{1}{w_{\pm}}} := c_0 |z|^{p_{\pm}}
    = \widetilde{c}_0 |x|^{\frac{p_{\pm}}{2}}
\end{align*}
where $c_0, \widetilde{c}_0$ are non-zero constants and
\begin{align*}
  p_{\pm} := \frac{2}{n}\left (1\pm\sqrt{\frac{2(n-1)}{n-2}}\right )\,.
\end{align*}
The constant $\widetilde{c}_0$ can be absorbed into the coordinates $\{y^P\}$
and we can also assume $x>0$ without loss of generality. So, the metric
and the function $f$ takes the form (redefining $f_0$ conveniently)
\begin{align*}
  \gamma = dx^2 + x^{p_{\pm}} \delta, \quad \quad
  f = f_0 x^{\frac{2 -(n-2) p_{\pm}}{4}}.
  \end{align*}
Observe that $p_{\pm} = 2$ only for the plus sign and $n=3$ and this case
has to be excluded as it corresponds to a flat metric.  Thus, for $n=3$ only $p_{-} = -\frac{2}{3}$ survives.

When $w(z)$ is not constant, we may use $w$ as coordinate. The equation for $\R(w)$ is, from (\ref{EqR}) and (\ref{ODEfinal})
\begin{align*}
  \frac{d\R}{dw} = \frac{\R}{(w-w_{+})(w-w_-)}
  \quad \quad \Longrightarrow \quad \quad \R(w)= \R_0 \left | \frac{w-w_{+}}{w-w_{-}} \right |^{\frac{1}{w_{+} - w_{-}}}, \quad \R_0 \in \mathbb{R}^+
\end{align*}
so that the metric  is, after absorbing a multiplicative constant
in $\{y^P\}$ and  redefining a suitable constant $C>0$,
\begin{align}
  \gamma = C   \left | \frac{w-w_+}{w - w_{-}} \right |^{\frac{2(w_+ + w_-)}{w_+ - w_-}}  dw^2 +
\left | \frac{w-w_+}{w - w_{-}} \right |^{\frac{2}{w_+ - w_-}}
  \delta.
  \label{metricsol}
\end{align}
In this case the solution of
(\ref{double_horizon_cond2b}) is
\begin{align*}
  f := f_0 \frac{\left | w -w_+ \right |^{\frac{3 w_+ + w_-}{2(w_+ - w_-)}}}
  {{\left | w -w_- \right |^{\frac{w_+ + 3 w_-}{2(w_+ - w_-)}}}}
  \end{align*}
  Summarizing:
  \begin{lemma}
\label{kasner}
Consider data $(S, \gamma, \bm{\torsion}=0)$, with
$(S,\gamma)$ a warped product  with locally flat $(n-2)$-dimensional fibers
 for the characteristic initial value problem for the $(\Lambda=0)$-vacuum equations on a bifurcate horizon in $(n+1)$-dimensions, $n\geq 3$.
Then the bifurcate horizon is a non-degenerate MKH of order 2 if and only if either (i) or (ii) hold:
\begin{enumerate}
\item[(i)] The metric takes the local form  $\gamma = dx^2 + x^p \delta$
where $\delta$ is the flat $n-2$ dimensional
  metric, and
  \begin{enumerate}
  \item[(a) ]  $ p = - \frac{2}{3}$ for $n=3$,
  \item[(b) ] $p=\frac{2}{n}\Big(1\pm\sqrt{\frac{2(n-1)}{n-2}}\Big)$ for $n\geq 4$.
  \end{enumerate}
  \item[(ii)] The metric takes the local form
  (\ref{metricsol}).
\end{enumerate}
\end{lemma}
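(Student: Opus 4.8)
The plan is to read off the order of the horizon from Corollary~\ref{cor_multiple_horizons}: because $\bm{\torsion}=0$, the bifurcate horizon is a non-degenerate MKH of order $m$ exactly when the reduced master equation \eq{double_horizon_cond2b} has $m-1$ independent non-trivial solutions, so ``order $2$'' is equivalent to the solution space of \eq{double_horizon_cond2b} being one-dimensional. The first step is to insert the warped-product ansatz \eq{MKH2_metric} and its Ricci tensor into \eq{double_horizon_cond2b}, producing the coupled system \eq{Eqxx}--\eq{EqPx}. The crucial preliminary observation is that any solution must depend on $x$ alone: integrating \eq{EqPx} gives $f=a(y)\R+b(x)$, and substituting this into \eq{Eqxx} forces $a(y)$ to be constant, since $\partial_x^2\R$ is a nonvanishing function of $x$ while $a$ depends only on $y$. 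Thus every solution reduces to the two-equation system \eq{Eqxx2}--\eq{EqPQ2} for a single function $f(x)$, with $\R(x)$ regarded as the object to be classified.

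Next I would treat $\R$ as the unknown and $f$ as determined by it. Equation \eq{EqPQ2} is a first-order linear ODE for $f$ that integrates explicitly to $f=f_0\,\R^{-(n-3)/2}(\partial_x\R)^{-1/2}$, which already shows that, for fixed $\R$, the solution $f$ is unique up to the overall constant $f_0$. Hence the solution space is at most one-dimensional and the order is at most $2$; the entire content of the statement is then the compatibility of this $f$ with the remaining equation \eq{Eqxx2}, which becomes a third-order constraint on $\R$. Rather than attack that constraint directly, I would follow the change of variables \eq{changevar} and use the freedom in $w(z)$ to impose \eq{EqR}; a direct computation collapses \eq{Eqxx2} into the autonomous separable first-order equation \eq{ODEfinal}, $z\,w\,\frac{dw}{dz}=P^{(n)}(w)$, whose quadratic $P^{(n)}$ has the two nonzero roots $w_\pm$.

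The classification then follows from the two regimes of \eq{ODEfinal}. The constant solutions $w\equiv w_\pm$ integrate back through \eq{changevar} and \eq{EqR} to the power laws $\R\propto|x|^{p_\pm/2}$ with $p_\pm=\tfrac{2}{n}\bigl(1\pm\sqrt{2(n-1)/(n-2)}\bigr)$, giving the metrics $\gamma=dx^2+x^p\delta$ of case (i); the non-constant solutions, using $w$ as coordinate, integrate to the metric \eq{metricsol} of case (ii), with the associated $f$ recovered by back-substitution. For the converse direction I would simply verify that for each such $\R$ the displayed $f$ solves both \eq{Eqxx2} and \eq{EqPQ2}, so that a one-dimensional family of solutions genuinely exists, forcing the order to be precisely $2$.

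The main obstacle is making sure that exactly order $2$ is realized, as opposed to a larger or vanishing order, which is where the degenerate profiles must be excised. If $\partial_x\R$ vanishes identically the metric is flat and \eq{double_horizon_cond2b} degenerates to $D_AD_Bf=0$, whose solution space is $n$-dimensional and yields order $n+1\neq2$; for $n=3$ the same flatness is detected by $\partial_x^2\R\equiv0$, which is exactly why the branch $p_+=2$ (affine $\R$, hence flat in two dimensions) is discarded and only $p_-=-\tfrac{2}{3}$ survives there. For $n>3$ one must also rule out affine non-constant $\R$: a short calculation shows that \eq{Eqxx2}--\eq{EqPQ2} then has no non-trivial $f$ at all (order $1$, not an MKH), and one checks that $p_\pm=2$ never occurs for $n\ge4$, so both branches are admissible. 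Once these degenerate profiles are removed, the dependence argument of the first paragraph guarantees that no further solutions hide outside the one-parameter family, pinning the order at $2$ and completing the equivalence.
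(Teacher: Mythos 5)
Your proposal follows the paper's own proof essentially step by step: the same reduction via Corollary~\ref{cor_multiple_horizons}, the same integration of \eq{EqPx} and substitution into \eq{Eqxx} to force $x$-dependence, the same explicit integration of \eq{EqPQ2}, the same change of variables \eq{changevar}--\eq{EqR} collapsing \eq{Eqxx2} to the separable equation \eq{ODEfinal}, and the same dichotomy between constant roots $w_\pm$ (case (i)) and non-constant $w$ (case (ii)), including the exclusion of $p_+=2$ at $n=3$ as the flat metric.

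There is, however, one concrete gap, in your exclusion of the affine profiles for $n>3$. You rule out $\R=\R_0x+\R_1$ ($\R_0\neq 0$) by arguing that the \emph{reduced} ODE system \eq{Eqxx2}--\eq{EqPQ2} has no non-trivial solution. But the reduction to that system --- your first-paragraph argument that every solution depends on $x$ alone --- uses precisely the hypothesis $\partial_x^2\R\not\equiv 0$, which fails exactly for affine $\R$. So in the affine case your argument does not exclude solutions of the full system \eq{Eqxx}--\eq{EqPx} of the form $f=a(y)\R+b(x)$ with non-constant $a$; if such a solution existed, the order would be at least $2$ for a metric which is neither of type (i) nor of type (ii), and the ``only if'' direction of the lemma would fail. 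This is why the paper checks the full PDE system \eq{Eqxx}--\eq{EqPx} in the affine case, not the reduced one. The repair is short: \eq{Eqxx} gives $f=\alpha(y)x+\beta(y)$; \eq{EqPx} forces $\beta=\mathrm{const.}$; splitting \eq{EqPQ} into its $x$-linear and $x$-independent parts gives $\beta=0$ (using $n>3$) and $\partial_P\partial_Q\alpha=-\tfrac{n-1}{2}\R_0^2\,\alpha\,\delta_{PQ}$; since the fibre dimension is $n-2\geq 2$, the off-diagonal equations $\partial_P\partial_Q\alpha=0$ ($P\neq Q$) combined with the diagonal ones force $\alpha=\mathrm{const.}$ and hence $\alpha=0$. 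With this replacement your proof is complete and coincides with the paper's.
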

\begin{remark}
  {\rm As any line element $\gamma= F(x) dx^2 + H(x) \delta$ has
    at least $\frac{1}{2}(n-2)(n-1)$ (local)  Killing vectors,
the emerging $(\Lambda=0)$-vacuum spacetimes in this Lemma admit, by Theorem~\ref{prop_KID_equations}, at least
$\frac{1}{2}(n-2)(n-1)+3$
(local) Killing vectors.}
\end{remark}

\begin{remark}
{\rm The spacetime in case (i) can be fully determined.
Recall the Kasner vacuum solution \cite{kasner} which can be written in the form
\begin{equation}
g=-x^{2p_0} (\mathrm{d}t)^2 +(\mathrm{d}x)^2+ \sum_{A=2}^nx^{2p_A} (\mathrm{d}y^A)^2
\,,\quad
 p_0+ \sum_{A=2}^n p_A=1\,, \quad (p_0)^2+  \sum_{A=2}^n (p_A)^2=1
\,.
\end{equation}
Now choose $p_A:= p/2$, $A=3,\dots n$, and $p_2=p_0=:q/2$, and set $u:=(t-y^2)/\sqrt{2}$, $v:=(t+y^2)/\sqrt{2}$.
Then
\begin{equation}
g=2x^{1-\frac{n-2}{2} p_{\pm}}\mathrm{d}u\mathrm{d}v+(\mathrm{d}x)^2+ x^{p_{\pm}}\delta
\quad \text{with} \quad    p_{\pm}=\frac{2}{n}\Big(1\pm\sqrt{2\frac{n-1}{n-2}}\Big)
\,.
\end{equation}
We conclude that the  data constructed in Lemma~\ref{kasner} (i) generate a subfamily of the Kasner metrics with  non-degenerate MKH of order 2
(for $n=3$  we have $p_+=2$ and this corresponds to the Minkowski metric which has a MKH of maximal order).
}
\end{remark}

\subsection{A no-go result}
\label{sec_non_deg_3+1}

In Section~\ref{sec_non_deg3} we have shown that whenever a $\Lambda$-vacuum spacetime admits a bifurcate Killing horizon with
a MKH  of order $m\geq 3$, the torsion one-form needs to be closed.
We work on the dense domain $\widetilde{S}$ where
the gauge condition \eq{special_gauge_master} can be imposed. The integrability condition \eq{integrability} becomes
\begin{equation}
{}^{\gamma}  R_{ABCD}D^D f  - D_{[A} f({}^{\gamma} R_{B]C}-\frac{2}{n-1}\Lambda \gamma_{B]C} )
=0
\,.
\label{int_cond_rew}
\end{equation}
%
Let us assume now that $m\geq n$, i.e.\ that the bifurcate horizon is a MKH of order $n$ or $n+1$.
Then \eq{master} admits at least $n-2$ independent non-constant solutions.
The set of points $q\in \widetilde{S}$
where all the gradients 
of these solutions are non-zero
and linearly independent is also a dense set. Locally near such points
we can construct an orthonormal frame $(e_{\widehat A})=(e_{\widehat 3}, e_{\widehat a})$, $\widehat a=4,\dots,n+1$, such the $e_{\widehat a}$'s are in the tangent plane generated by those gradients.
Then \eq{int_cond_rew} can be written in these frame components as
\begin{equation}
{}^{\gamma} R_{\widehat A\widehat B\widehat C\widehat a}  =\frac{1}{2}\delta_{\widehat A \widehat a}\left({}^{\gamma} R_{ \widehat B \widehat C}-\frac{2}{n-1}\Lambda \delta_{\widehat B\widehat  C} \right)
 - \frac{1}{2}\delta_{\widehat B \widehat a}\left({}^{\gamma} R_{ \widehat A \widehat C}-\frac{2}{n-1}\Lambda \delta_{ \widehat A \widehat C} \right)
\,.
\end{equation}
That yields
\begin{equation*}
{}^{\gamma} R_{\widehat A\widehat a}  =-\delta _{\widehat A \widehat a}\left({}^{\gamma} R-\frac{2(n-2)}{n-1}\Lambda \right)
\,,
\quad
{}^{\gamma}  R_{\widehat 3\widehat 3} =\frac{2(n-2)}{n(n-1)}\Lambda
\,,
\end{equation*}
whence
\begin{equation*}
{}^{\gamma} R=\frac{2(n-2)}{n}\Lambda \quad \Longrightarrow \quad {}^{\gamma} R_{AB}=\frac{2(n-2)}{n(n-1)}\Lambda \gamma_{AB}
\quad \Longrightarrow \quad {}^{\gamma} R_{ABCD} = \frac{4\Lambda}{n(n-1)} \gamma_{A[C}\gamma_{D]B}
\,,
\end{equation*}
i.e.\ the bifurcation surface $(S,\gamma)$ needs to be maximally symmetric.
Now we change the gauge.  Instead of  \eq{special_gauge_master} we transform to a gauge where the torsion one-form vanishes, which can be done at least locally.
Then we are in the setting of  Section~\ref{sec_vanish_torsion}, the emerging $\Lambda$-vacuum spacetime will have, at least locally, the maximum number of Killing vectors.


\begin{proposition}
Let $( M , g)$ be an $(n+1)$-dimensional  $\Lambda$-vacuum spacetime which  admits  a bifurcate Killing horizon such that one of its horizons
is a MKH of order $m\geq n$. Then $( M , g)$  is locally isometric to either Minkowski
or (Anti-)de Sitter, depending on the sign of the cosmological constant.
In particular a MKH of order $m=n$ which belongs to a bifurcate horizon does not exist.
\end{proposition}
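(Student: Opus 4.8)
The plan is to assemble the structural facts already established in this section into a single application of Lemma~\ref{MaxSym}. On the dense gauge domain $\widetilde{S}$ two things are in hand: by Lemma~\ref{lem_closed} the torsion one-form is closed (which applies because $m\geq n\geq 3$), and the frame decomposition of the integrability condition (\ref{int_cond_rew}) forces ${}^{\gamma}R_{ABCD}=\frac{4\Lambda}{n(n-1)}\gamma_{A[C}\gamma_{D]B}$, so that $(S,\gamma)$ is maximally symmetric with scalar curvature obeying condition (\ref{cond_scalar_curv}), namely ${}^{\gamma}R=\frac{2(n-2)}{n}\Lambda$. My first step is simply to record that closedness of $\bm{\torsion}$ together with maximal symmetry of $(S,\gamma)$ under (\ref{cond_scalar_curv}) are exactly the hypotheses of Lemma~\ref{MaxSym}.

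Next I would pass to a local gauge in which $\bm{\torsion}=0$, available precisely because $\bm{\torsion}$ is closed, and apply Lemma~\ref{MaxSym} directly. This yields that every $q\in S$ possesses a spacetime neighbourhood carrying $\frac{1}{2}(n+1)(n+2)$ independent Killing vectors, so $(M,g)$ is locally maximally symmetric around $S$. The standard classification of maximally symmetric Lorentzian manifolds of dimension $n+1$ then fixes the local geometry through the sign of $\Lambda$: Minkowski for $\Lambda=0$, de Sitter for $\Lambda>0$, anti-de Sitter for $\Lambda<0$, which establishes the first assertion.

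For the final clause I would argue by contradiction on the order. Suppose the horizon were a MKH of order exactly $m=n$. Then the first part still applies and $(M,g)$ is locally maximally symmetric; but in the vanishing-torsion gauge equation (\ref{double_horizon_cond}) coincides with (\ref{max_sym_KID}), which by the analysis of the case (\ref{cond_scalar_curv}) admits the full complement of $n$ independent solutions. By Corollary~\ref{cor_multiple_horizons} the horizon would then be a MKH of order $n+1$, contradicting $m=n$. Hence a bifurcate horizon can never contain a MKH of order exactly $n$, and only the maximal case $m=n+1$ survives, realized by the constant-curvature models.

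The technical heart, already carried out just before the statement, is the frame argument converting the scalar identity (\ref{int_cond_rew}) into full control of the Riemann tensor of $(S,\gamma)$. The delicate point is that (\ref{int_cond_rew}) holds separately for each of the at least $n-2$ independent non-constant solutions of (\ref{master}), and one needs their gradients to be pointwise linearly independent on a dense subset of $\widetilde{S}$; Lemma~\ref{density} guarantees each solution is non-vanishing densely, and generic independence of the gradients then permits the construction of the adapted orthonormal frame $(e_{\widehat 3},e_{\widehat a})$ in which (\ref{int_cond_rew}) decomposes into the componentwise relations for ${}^{\gamma}R_{\widehat A\widehat B\widehat C\widehat a}$. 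Once maximal symmetry of $S$ is secured, the remaining steps are a direct assembly of Lemma~\ref{MaxSym} and Corollary~\ref{cor_multiple_horizons}.
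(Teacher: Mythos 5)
Your proposal follows the paper's own proof essentially step by step: closedness of the torsion via Lemma~\ref{lem_closed}, the frame decomposition of the integrability condition \eq{int_cond_rew} on the dense set where the gradients of the non-constant solutions of \eq{master} are independent (a density claim the paper also asserts rather than proves), forcing $(S,\gamma)$ to be maximally symmetric with ${}^{\gamma}R=\frac{2(n-2)}{n}\Lambda$, then the local change to a vanishing-torsion gauge and Lemma~\ref{MaxSym} to conclude local maximal symmetry of the spacetime. Your explicit contradiction argument excluding $m=n$ (via Corollary~\ref{cor_multiple_horizons} and the $n$ independent solutions of \eq{max_sym_KID}) only spells out what the paper leaves implicit, so the two proofs coincide in substance.
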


Summarizing this section,
in $3+1$ dimensions we have  seen that  non-degenerate  MKHs of order 3 do not exist, at least not as part of a bifurcate Killing horizon,
while the maximal order 4 is obtained by maximally symmetric spacetimes.
The (Anti-)Nariai solution provides an example of a $\Lambda\lessgtr 0$-vacuum spacetime which admits a non-degenerate MKH of order 2.
An example of a  $(\Lambda=0)$-vacuum spacetime which admits a non-degenerate MKH of order 2 are certain Kasner metrics, cf.\  Section~\ref{sec_order2MKH}.

\section{Near-horizon geometries}
\label{sec_NHG}

Let us consider a $\Lambda$-vacuum spacetime $( M ,g)$ which admits a Killing vector field $\eta$ which has a degenerate Killing horizon $\mathcal{H}_1$.
In a neighborhood of $\mathcal{H}_1$ and in Gaussian null coordinates with $\mathcal{H}_1=\{v=0\}$ and $\eta=\partial_u$ the metric takes the form
\begin{equation}
g= 2\mathrm{d}u\mathrm{d}v  + 4 v \widehat s_A(v, x^B)\mathrm{d}u\mathrm{d}x^A + v^2h(v, x^A)\mathrm{d}u^2 + g_{AB}(v, x^C)\mathrm{d}x^A\mathrm{d}x^B
\,.
\label{gen_form}
\end{equation}
The associated \emph{near-horizon geometry} is
\begin{equation}
g_{\mathrm{NH}}= 2\mathrm{d}u\mathrm{d}v  + 4 v  \widehat s_A(0, x^B)\mathrm{d}u\mathrm{d}x^A + v^2h(0, x^A)\mathrm{d}u^2 + g_{AB}(0, x^C)\mathrm{d}x^A\mathrm{d}x^B
\,.
\label{NHmetric}
\end{equation}
Usually it is obtained as a certain limit of the original spacetime \eq{gen_form}, cf.\ \cite{KuLu}.
For our purposes, it is more enlightening how it can be obtained in terms of a characteristic initial value problem.
For this we consider the cut $S=\{u=0,v=0\}$ of the horizon $\mathcal{H}_1$, and we keep the same notation for the corresponding objects in the metric \eq{NHmetric}.
The null second fundamental forms ${}^{(1)}K$ induced by the line elements \eq{gen_form} or \eq{NHmetric} on
$\mathcal{H}^+_1:=\mathcal{H}_1\cap\{u>0\}$
vanish. The data induced on $S$
relevant for the characteristic initial-value problems are the $(n-1)$-dimensional Riemannian manifolds
$(S, \gamma_{AB} = g_{AB}|_S)$ and the torsion one-forms $\torsion_A= \widehat s_A|_S$. Hence, both metrics \eq{gen_form} and \eq{NHmetric} induce the same data on  $\mathcal{H}^+_1$ and $S$.
To generate either \eq{gen_form} or \eq{NHmetric} these data need to be supplemented with the 
null second fundamental forms induced by either \eq{gen_form} or \eq{NHmetric}
on the null hypersurfaces $\mathcal{H}^+_2:= \{ u=0\} \cap \{ v>0\}$ (again we keep the same notation for both metrics) generated by the null geodesics intersecting $\mathcal{H}_1$ transversally on $S$. These null second fundamental forms are in both cases proportional to $\partial_v g_{AB}$. Thus, such a second fundamental form is
in general different from zero for the case of the metric \eq{gen_form} but, in contrast, it vanishes for the case of the metric \eq{NHmetric}.
%
In other words, in view of a characteristic  initial-value problem,  taking the near-horizon limit  corresponds to replacing the induced null second fundamental form
on the initial hypersurface surface $N^+_2$ by trivial data. It is worth stating this as a
proposition.
\begin{proposition}
\label{dfn_NHG_alt}
Let $(M,g)$ be a $\Lambda$-vacuum spacetime which contains a degenerate Killing horizon $\mathcal{H}_1$, generated by a Killing vector $\eta$.
Then take any cut $S$ of  $\mathcal{H}_1$, and let $N_2$ denote the null hypersurface which is generated by the null geodesics through $S$
which are transversal to $\mathcal{H}_1$.
Denote by $({}^{(1)}K_{AB}\overset{\mathcal{H}_1}{=}0, {}^{(2)}K_{AB},g_{AB}|_S,
\bm{\torsion})$ the induced characteristic initial data. Then the associated NHG is the $\Lambda$-vacuum spacetime obtained as solution of the characteristic initial value problem for these data with  ${}^{(2)}K_{AB}$  replaced by trivial data.
\end{proposition}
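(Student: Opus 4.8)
The plan is to read off the characteristic initial data induced by the two metrics \eq{gen_form} and \eq{NHmetric} on the pair of null hypersurfaces $(\mathcal{H}_1,N_2)$, to observe that these data agree in everything except the null second fundamental form of $N_2$, and then to invoke the uniqueness clause of Theorem~\ref{thm_Cauchy_problem}. I would work on the region where $\eta=\partial_u$ does not vanish, so that the Gaussian null coordinates $(u,v,x^A)$ of \eq{gen_form} are available, and set $N_1:=\mathcal{H}_1=\{v=0\}$ with null generator $\partial_u$, and $N_2:=\{u=0\}$ with null generator $\partial_v$ (which is null since $g_{vv}\equiv0$ and tangent to $N_2$), intersecting along the cut $S=\{u=v=0\}$; on $S$ one reads off $\nu=g_{uv}|_S=1$.

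First I would compare the induced data. Both \eq{gen_form} and \eq{NHmetric} are manifestly $u$-independent --- this is just $\mcL_\eta g=0$ --- so in either case ${}^{(1)}K_{AB}=\tfrac12\partial_u g_{AB}=0$ on $N_1$, consistent with $\mathcal{H}_1$ being a degenerate Killing horizon. Both induce the same first fundamental form $\gamma_{AB}=g_{AB}(0,x^C)$ on $S$, and a short computation of $\Gamma^v_{vA}$ and $\Gamma^u_{uA}$ from $g_{uA}=2v\widehat s_A$ inserted into \eq{exp:varsigma} gives, for both, the same torsion one-form $\torsion_A=\widehat s_A(0,x^B)$. The only discrepancy is the null second fundamental form of $N_2$, namely ${}^{(2)}K_{AB}=\tfrac12\partial_v g_{AB}$: for the full metric \eq{gen_form} this is in general non-zero, while for \eq{NHmetric} the transverse metric $g_{AB}(0,x^C)$ has been frozen at $S$, hence is $v$-independent, so that ${}^{(2)}K_{AB}=0$ identically. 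Thus the data of \eq{NHmetric} are obtained from those of \eq{gen_form} precisely by setting ${}^{(2)}K$ to trivial data, as asserted.

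The remaining --- and only genuinely nontrivial --- step, which I expect to be the main obstacle, is to verify that \eq{NHmetric} is itself a $\Lambda$-vacuum metric, for only then can it be identified with the solution of the characteristic problem. I would obtain this from the standard near-horizon scaling limit (cf.\ \cite{KuLu}): under the diffeomorphism $\phi_\epsilon:(u,v,x^A)\mapsto(u/\epsilon,\epsilon v,x^A)$ the specific powers of $v$ in \eq{gen_form} conspire so that $\phi_\epsilon^{*}g$ converges in $C^\infty$ on compact sets, as $\epsilon\to0$, to $g_{\mathrm{NH}}$. Since each $\phi_\epsilon^{*}g$ is isometric to $g$ it solves the $\Lambda$-vacuum equations, and because the Ricci tensor depends continuously on the $2$-jet of the metric the limit $g_{\mathrm{NH}}$ does too. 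With this in hand, both $g_{\mathrm{NH}}$ and the solution of the characteristic initial value problem with ${}^{(2)}K$ replaced by trivial data are $\Lambda$-vacuum metrics inducing the identical data $({}^{(1)}K=0,\,{}^{(2)}K=0,\,\gamma_{AB},\,\bm{\torsion})$ on $N_1\cup N_2$; by the uniqueness part of Theorem~\ref{thm_Cauchy_problem} they are isometric on a neighbourhood of $S$, which is exactly the statement of the proposition. (The conventions for $\torsion$ and the normalization $\nu=1$ are fixed automatically by the coordinate form of \eq{gen_form}, so no additional gauge matching is needed.)
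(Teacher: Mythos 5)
Your proof is correct and follows essentially the same route as the paper: both compare the characteristic data induced by \eq{gen_form} and \eq{NHmetric} on $\mathcal{H}_1\cup N_2\cup S$ (same $\gamma_{AB}$, same $\bm{\torsion}=\widehat s_A|_S$, vanishing ${}^{(1)}K$), observe that the two data sets differ only in ${}^{(2)}K_{AB}\propto\partial_v g_{AB}$, and conclude via the uniqueness clause of Theorem~\ref{thm_Cauchy_problem}. The only difference is that you verify explicitly, through the scaling diffeomorphisms $\phi_\epsilon$, that $g_{\mathrm{NH}}$ is itself $\Lambda$-vacuum, a fact the paper treats as known from the standard near-horizon limit construction of \cite{KuLu}.
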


It is clear from the construction (and Theorem \ref{theorem_fund_form}) that
the intersection surface $S$ becomes
the  bifurcation surface of a bifurcate Killing horizon. Thus
the construction automatically implies that the near horizon spacetime
admits an additional Killing vector field
$\xi$. In a sense, the present geometric reinterpretation of the
near horizon limit explains
why near horizon geometries always admit
an additional Killing vector $\xi = u \partial_ u - v \partial_v$, which vanishes at a bifurcation surface and generates a bifurcation horizon an
observation made in \cite{KuLu} (see also  \cite{LSW, PLJ,mps,mps2}).

In view of the KID equations for the bifurcate Killing horizon data,
the original Killing vector
$\eta$ corresponds to a solution of the KID equation
\eq{double_horizon_cond}
(or \eq{double_horizon_cond2}) which vanishes nowhere.
Thus, we can identify which characteristic data on a bifurcate horizon yield
near horizon geometries.

\begin{corollary}
  \label{LemmaNHG}
  Characteristic initial data on a bifurcate horizon $(\mathcal{H}_1\cup\mathcal{H}_2\cup S)$ consist of a Riemannian metric $\gamma$ and the torsion 1-form
  $\bm{\torsion}$ on $S$.
  The emerging $\Lambda$-vacuum spacetime is a NHG if and only if
  \eq{double_horizon_cond}
  or \eq{double_horizon_cond2} admits
a nowhere zero  solution.
\end{corollary}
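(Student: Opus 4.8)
The plan is to establish the equivalence by pairing the geometric reading of the near-horizon limit in Proposition~\ref{dfn_NHG_alt} with the KID analysis encoded in Corollary~\ref{cor_multiple_horizons}. The conceptual bridge is the remark preceding the statement: a solution $f$ of \eq{double_horizon_cond} (respectively \eq{double_horizon_cond2}) is exactly the datum that, via the candidate field \eq{candidate1}--\eq{candidate5} with the gauge choice $\kappa_\eta=0$, integrates to a \emph{degenerate} Killing vector $\eta$ having $\mathcal{H}_1$ (respectively $\mathcal{H}_2$) as Killing horizon. The point I would stress is that on $\mathcal{H}_1^+$ the candidate field reduces to $\eta=f\,\partial_u$ with $f$ constant along the generators and $\eta^u|_S=f$; since $\partial_u$ never vanishes, $\eta$ is free of zeros on the whole of $\mathcal{H}_1$, bifurcation surface included, if and only if $f$ is nowhere zero on $S$. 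This is precisely the nondegeneracy demanded of a Killing horizon in Definition~\ref{defi}(a), i.e.\ the absence of fixed points of $\eta$ on $\mathcal{H}_1$.

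For the ``if'' direction I would assume that \eq{double_horizon_cond} admits a nowhere-zero solution $f$, the case of \eq{double_horizon_cond2} being identical after interchanging the two hypersurfaces. Corollary~\ref{cor_multiple_horizons} extends $f$ to a Killing vector $\eta$ for which $\mathcal{H}_1^{\pm}$ is a Killing horizon, and adding a suitable multiple of the bifurcate generator $\xi$ one may arrange $\kappa_\eta=0$, so that this horizon is degenerate. Because $f$ is nowhere zero, the observation above shows $\eta$ is nowhere zero on $\mathcal{H}_1$, so $\mathcal{H}_1$ is genuinely a degenerate Killing horizon in the sense of Definition~\ref{defi}(a). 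The emerging spacetime was, however, generated from bifurcate-horizon data, hence with both null second fundamental forms vanishing and in particular with trivial ${}^{(2)}K_{AB}$ on $\mathcal{H}_2$. This is exactly the data prescribed by Proposition~\ref{dfn_NHG_alt} to build the near-horizon geometry of the degenerate horizon $\mathcal{H}_1$, so by the uniqueness clause of Theorem~\ref{thm_Cauchy_problem} the spacetime coincides with that near-horizon geometry and is therefore a NHG.

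For the ``only if'' direction I would assume the emerging spacetime is a NHG. By its very construction it contains a degenerate Killing horizon $\mathcal{H}_1$ generated by some Killing vector $\eta$, and Definition~\ref{defi}(a) forces $\eta$ to be nowhere zero along $\mathcal{H}_1$. The KID equations for the bifurcate-horizon data identify $\eta|_{\mathcal{H}_1^+}$ with the candidate field \eq{candidate1}--\eq{candidate5}, degeneracy giving $\kappa_\eta=0$ and hence $\eta^u=f$ on $\mathcal{H}_1^+$ for a solution $f$ of \eq{double_horizon_cond}. Reading $\eta^u|_S=f$ together with the fact that $\eta$ does not vanish anywhere on $\mathcal{H}_1$ (in particular not at the bifurcation surface, where $\xi$ itself vanishes) yields that $f$ is nowhere zero on $S$, as required; a degenerate horizon realised by $\mathcal{H}_2$ instead produces a nowhere-zero solution of \eq{double_horizon_cond2}.

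I expect the step requiring most care to be the precise equivalence between $\eta$ having no zeros on $\mathcal{H}_1$ and $f$ having no zeros on $S$, since the whole purpose of the generalized viewpoint is to accommodate fixed points of the degenerate Killing vector: one must verify that a zero of $f$ genuinely forces a fixed point of $\eta$ lying on $\mathcal{H}_1$ (so that $\mathcal{H}_1$ ceases to be a Killing horizon in the strict sense), and conversely that a nowhere-vanishing $f$ keeps $\eta$ zero-free even at the bifurcation surface. The second delicate point is to appeal to uniqueness of the characteristic initial value problem in order to identify the constructed spacetime with the near-horizon geometry of Proposition~\ref{dfn_NHG_alt}, rather than being content with merely exhibiting the pair of Killing vectors $\eta$ and $\xi$.
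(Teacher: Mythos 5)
Your proof is correct and takes essentially the same route as the paper: it rests on the equivalence between a nowhere-zero solution $f$ of \eq{double_horizon_cond} (or \eq{double_horizon_cond2}) and a zero-free degenerate Killing vector $\eta=f\,\partial_u$ on $\mathcal{H}_1$ provided by Corollary~\ref{cor_multiple_horizons}, combined with the reinterpretation of the near-horizon limit in Proposition~\ref{dfn_NHG_alt} and the uniqueness clause of Theorem~\ref{thm_Cauchy_problem}, which is precisely the (more tersely stated) argument in the paper. The two points you flag as delicate are indeed the substance of the paper's reasoning, and you resolve them in the same way.
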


\begin{remark}
  {\rm In the case
    $\Lambda=0$, this result was proved in \cite{LRS} by explicit construction of the
  spacetime. Indeed, given such data on $S$, there exists a global gauge where
  the torsion one-form satisfies  (\ref{special_gauge_master}). Then, one can define $h$ as in
    (\ref{dfn_h}),  write down the metric (\ref{NHmetric}) and check that
    it solves the $\Lambda$-vacuum field equations. This is a very direct way of establishing existence, but it is heavily limited by the condition
    that the solution of \eq{double_horizon_cond} is nowhere zero. Our existence results do not rely on this, so it seems reasonable to allow for fixed points of $\eta$.
  In fact, one of the main interests of viewing the near horizon
limit as a characteristic initial value problem is that it does not depend
on the fact that the Killing vector $\eta$ is non-zero on
$\mathcal{H}_1$, something which is crucial in the coordinate limit definition,
since it relies on Gaussian null coordinates adapted to $\eta$.
This suggests the existence of a natural generalization of the concept of NHG, which we put forward in Definition \ref{definition_NHG_alt} below.
}
\end{remark}


\begin{remark}
{\rm
  Another interesting characterization of
  vacuum near horizon geometries alternative
  to Corollary \ref{LemmaNHG} is given in
\cite{LSW}, where it is shown that a vacuum spacetimes is a near
horizon geometry if and only if
admits a foliation of non-expanding  horizons (supplemented by a transverse
one).}
\end{remark}

It follows from the KID equations and the fact
that the  master equation \eq{Eq} must hold on $S$  for each  Killing vector field $\eta^{(k)}$ of the original spacetime
for which  $\mathcal{H}_1$ is a degenerate horizon that the associated solutions $f^{(k)}$ of \eq{double_horizon_cond}
(which need to be independent) generate, by Corollary~\ref{cor_multiple_horizons},  independent  Killing vectors of the near-horizon geometry too, for which  $\mathcal{H}^+_1$ is also a degenerate horizon.
We recover \cite[Theorem~4.1]{mps} in the $\Lambda$-vacuum case:

\begin{corollary}
Consider a $\Lambda$-vacuum spacetime which admits a fully degenerate (multiple) Killing horizon of order $m\geq 1$. Then there exists an
associated $\Lambda$-vacuum spacetime, the near-horizon limit, which has a non-degenerate MKH  of order $m+1$.
\end{corollary}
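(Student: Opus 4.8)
The plan is to realise the near-horizon limit of the given fully degenerate MKH as the $\Lambda$-vacuum spacetime arising from a characteristic initial value problem on a bifurcate horizon, and then to read off its Killing vectors from Corollary~\ref{cor_multiple_horizons}. Concretely, I would start from the fully degenerate MKH $\mathcal{H}_1$, whose algebra $\Kill_{\mathcal{H}_1}=\Kill_{\mathcal{H}_1}^{deg}$ is $m$-dimensional and spanned by Killing vectors $\eta^{(1)},\dots,\eta^{(m)}$, each admitting $\mathcal{H}_1$ as a degenerate horizon. Fixing a cut $S\subset\mathcal{H}_1$ with induced data $(\gamma,\bm{\torsion})$, where $\torsion_A=\widehat s_A|_S$, Proposition~\ref{dfn_NHG_alt} produces the near-horizon spacetime as the solution of the characteristic problem with these data supplemented by vanishing null second fundamental form on the transverse hypersurface. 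By construction $S$ is now a bifurcation surface, so Theorem~\ref{theorem_fund_form}(i) automatically supplies a Killing vector $\xi$ with $\mathcal{H}_1^+$ as a Killing horizon of constant, \emph{non-vanishing} surface gravity; this is the non-degenerate generator accounting for the ``$+1$''.

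Next I would transfer the original degenerate generators to solutions of \eqref{double_horizon_cond}. Fixing $\eta^{(1)}$ as reference, Lemma~\ref{lemmamps} gives proportionality functions $f^{(k)}$ on $S$, each satisfying the master equation \eqref{Eq}. The key observation is that in $\Lambda$-vacuum one has $R_{AB}=\tfrac{2\Lambda}{n-1}\gamma_{AB}$, so $\tfrac12 R_{AB}=\tfrac{\Lambda}{n-1}\gamma_{AB}$ and \eqref{Eq} becomes \emph{identical} in form to \eqref{double_horizon_cond}; combined with the identification of the master-equation torsion $\bm{s}$ with the data torsion $\bm{\torsion}$ (analogous to the discussion in Section~\ref{sec_comparison}), this makes each $f^{(k)}$ a solution of \eqref{double_horizon_cond}. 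Their linear independence I would argue from the fact that the $f^{(k)}$ are constant along the null generators, so a linear relation on $S$ would propagate to a relation among the $\eta^{(k)}$ on an open portion of $\mathcal{H}_1$, hence everywhere, contradicting their independence. Corollary~\ref{cor_multiple_horizons} then yields that $\mathcal{H}_1^+$ is a MKH of order at least $m+1$; since $\xi$ has non-zero surface gravity while the $m$ generators built from the $f^{(k)}$ are degenerate, $\Kill_{\mathcal{H}_1^+}\neq\Kill_{\mathcal{H}_1^+}^{deg}$, so the MKH is non-degenerate.

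The main obstacle is the exact bookkeeping of the order, i.e.\ ruling out that \eqref{double_horizon_cond} admits \emph{more} than $m$ independent solutions, which would give order strictly greater than $m+1$. For this I would use that the solution space of \eqref{double_horizon_cond} is intrinsic to the data $(\gamma,\bm{\torsion})$ and is therefore common to the original spacetime and to its near-horizon limit; the delicate point is that on the original fully degenerate horizon the master equation must parametrise precisely the $m$-dimensional algebra $\Kill_{\mathcal{H}_1}^{deg}$ (and not a larger space of non-integrable solutions), which pins the dimension at exactly $m$ and the order at exactly $m+1$. This is the substantive content recovered from \cite[Theorem~4.1]{mps}. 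Finally, solving the problem into the past extends $\xi$ and all $m$ degenerate generators across $S$, so that the full bifurcate horizon of the near-horizon spacetime is a non-degenerate MKH of order $m+1$, as claimed.
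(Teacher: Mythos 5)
Your core argument is essentially the paper's own proof: realise the near-horizon limit as the characteristic development of the data $(\gamma,\bm{\torsion})$ on a bifurcate horizon (Proposition~\ref{dfn_NHG_alt}), let Theorem~\ref{theorem_fund_form}(i) supply the non-degenerate bifurcate Killing vector $\xi$, and use Lemma~\ref{lemmamps} to convert the $m$ degenerate generators into independent solutions $f^{(k)}$ of the master equation \eqref{Eq}, which in $\Lambda$-vacuum coincides in form with the KID equation \eqref{double_horizon_cond}, so that Corollary~\ref{cor_multiple_horizons} produces the corresponding degenerate Killing vectors of the near-horizon spacetime. This, including your independence argument (which the paper only asserts), is exactly the route taken in Section~\ref{sec_NHG}.

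Where you depart from the paper --- and where your reasoning does not hold up --- is the final paragraph attempting to pin the order at exactly $m+1$. The solution space of \eqref{double_horizon_cond} is indeed intrinsic to $(\gamma,\bm{\torsion})$, but your claim that on the original fully degenerate horizon the master equation parametrises precisely the $m$-dimensional algebra $\Kill^{deg}_{\H_1}$ is unjustified: the original spacetime is \emph{not} the development of bifurcate-horizon data (its transverse null hypersurface must have ${}^{(2)}K\not\equiv 0$, for otherwise Theorem~\ref{theorem_fund_form}(i) would yield a non-degenerate Killing vector and $\H_1$ would not be fully degenerate), and for such data the master equation is only a \emph{necessary} condition on proportionality functions; Section~\ref{sec_fully_deg} shows that integrability to an actual Killing vector then requires genuinely further constraints. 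Hence nothing prevents \eqref{double_horizon_cond} from admitting more than $m$ solutions, in which case the near-horizon limit has a MKH of order strictly larger than $m+1$. The paper makes no exactness claim either: its argument, like the sound part of yours, establishes a non-degenerate MKH of order \emph{at least} $m+1$, and the corollary is to be read in that sense, so the extra paragraph is both unnecessary and unsalvageable as stated.
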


\subsection{Near-horizon limit if $\dim( \Kill_\H^{deg})\geq 2$}

Let us now assume that for  the metric \eq{gen_form} $\mathcal{H}_1=\{v=0\}$ is a MKH and that there exists another Killing vector $\widetilde \eta$ for which
$\H_{\widetilde \eta}$
is a  degenerate  Killing horizon satisfying $\overline{\H}_{\widetilde \eta} = \overline{\H_1}$, i.e.\ we assume  $\dim( \Kill_{\H_1}^{deg})\geq 2$ (recall its definition
in Theorem~\ref{thm_MKH2}).
Then the near-horizon limit can be taken with respect to both $\eta$ or $\widetilde \eta$.
In the latter case one would first need to transform to Gaussian null coordinates $(\widetilde u, \widetilde v, \widetilde x^A)$. However, here we only need to observe the following: with $\widetilde \eta\overset{\mathcal{H}_1}{=}f\eta$, we find that, at points where $f$
is not zero,
%
\begin{equation}
\widetilde g_{AB}\overset{\mathcal{H}_1} {=} g_{AB}\,,
\quad\bm{ \widetilde {\widehat s}}\overset{\mathcal{H}_1} {=} \bm{ \widehat  s}-\mathrm{d}\log |f|
\,,
\end{equation}
and this implies that the characteristic Cauchy data for the near-horizon geometries associated to   $\eta$ and $\widetilde \eta$ coincide up to gauge transformations (cf.\ Section~\ref{sec_bifurcate}).
This way we  recover \cite[Theorem~7]{mps2} in the $\Lambda$-vacuum case:
\begin{corollary}
Let (M,g) be a $\Lambda$-vacuum spacetime containing a MKH  $\mathcal{H}$ and let  $\eta,\widetilde \eta\in \Kill_\H^{deg}$.
Then the near-horizon geometries w.r.t.\ $\eta$ and $\widetilde \eta$ are locally isometric.
\end{corollary}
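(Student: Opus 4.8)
The plan is to reduce the claim to the uniqueness part of the characteristic initial value problem. By Proposition~\ref{dfn_NHG_alt}, the near-horizon geometry associated to a degenerate Killing vector is the $\Lambda$-vacuum spacetime generated, on a cut $S$ of $\mathcal{H}$, from the characteristic data consisting only of the induced Riemannian metric $\gamma_{AB}=g_{AB}|_S$ and the torsion one-form of $S$ relative to the chosen generator, the transverse null second fundamental form being replaced by trivial data. By Remark~\ref{rem} these are the only physical data, and by the uniqueness statement of Theorem~\ref{thm_Cauchy_problem} two such spacetimes are locally isometric as soon as their data agree up to the gauge freedom of the problem. Thus it suffices to compare the data induced by $\eta$ and by $\widetilde\eta$ on one common cut $S$.

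First I would invoke Lemma~\ref{lemmamps}. Since $\eta,\widetilde\eta\in\Kill_\H^{deg}$ both have vanishing surface gravity, the exponential factor in \eqref{dfn_eqn_f} is absent and the lemma yields a proportionality $\widetilde\eta\overset{\mathcal{H}}{=}f\eta$ with $f$ constant along $\eta$. On the subset of $\mathcal{H}$ where $f\neq0$ (dense, by Lemma~\ref{density}), replacing the generator $\eta$ by $\widetilde\eta=f\eta$ is a boost $k\mapsto f k$, $\ell\mapsto f^{-1}\ell$ of the null normal frame of the cut. Hence the induced metric on $S$ is unchanged, $\widetilde\gamma_{AB}=\gamma_{AB}$, while by the boost rule \eqref{transformation} the torsion one-form changes only by an exact term, $\bm{\widetilde{\torsion}}=\bm{\torsion}-\mathrm{d}\log|f|$, which is exactly the relation recorded just above the statement.

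With the two metrics $\gamma$ identical and the two torsions differing by the gradient $\mathrm{d}\log|f|$, the data coincide up to the residual gauge freedom of the bifurcate-horizon problem, which by \eqref{transtorsion} is precisely the freedom to add gradients to $\bm{\torsion}$. Applying the uniqueness clause of Theorem~\ref{thm_Cauchy_problem} locally around each point, the two near-horizon spacetimes are locally isometric over the region where $f\neq0$.

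I expect the main obstacle to be the behaviour at the zeros of $f$, i.e.\ at the fixed points of $\widetilde\eta$, where $\mathrm{d}\log|f|$ is singular and the boost above degenerates; this is exactly the configuration that the characteristic-data definition is designed to accommodate, since the data $(\gamma,\bm{\torsion})$ remain smooth there whereas Gaussian null coordinates adapted to $\widetilde\eta$ cease to exist. To pass from local isometry on the dense set $\{f\neq0\}$ to local isometry everywhere, I would use that both near-horizon metrics are analytic $\Lambda$-vacuum metrics: two such metrics that are locally isometric on a dense open subset of a connected manifold have matching curvature and covariant-derivative jets at every point by continuity, and are therefore locally isometric across the fixed-point set as well.
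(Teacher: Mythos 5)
Your proof is essentially the paper's own: it invokes Lemma~\ref{lemmamps} (with both surface gravities zero) to write $\widetilde\eta \overset{\mathcal{H}}{=} f\eta$, notes that where $f\neq 0$ the induced metric on a cut is unchanged while the torsion one-form shifts by the exact form $-\mathrm{d}\log|f|$, and concludes that the characteristic data agree up to the gauge freedom of adding gradients, whence the uniqueness clause of Theorem~\ref{thm_Cauchy_problem} gives the local isometry. Your closing analyticity argument is superfluous (and unsupported, since smooth $\Lambda$-vacuum metrics need not be analytic): the near-horizon geometry with respect to $\widetilde\eta$ is itself only defined where $\widetilde\eta\neq 0$, because Gaussian null coordinates (and smooth torsion data adapted to $\widetilde\eta$) degenerate at its fixed points, so the corollary, like the paper's proof, concerns only the dense region $\{f\neq 0\}$.
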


\subsection{Generalized Near Horizon Geometry}

Consider a spacetime admitting a null hypersurface
${\mathcal N}$ and a (non-trivial) Killing vector $\eta$ which, at ${\mathcal N}$, is tangent and null. Since $\eta$ vanishes at most on codimension-two submanifolds,
there is a Kiling horizon $\mathcal{H}_1$ of $\eta$
satisfying $\overline{\mathcal{H}_1} = {\mathcal N}$.
Assume further that this horizon is degenerate.
%
To determine the NHG in the usual way we need to restrict to (connected components of) $\mathcal{H}_1$ since fixed points of $\eta$ have to be removed
in order to construct  Gaussian null coordinates. However,
our construction in terms of characteristic initial data  provides an immediate alternative definition of NHG for $\Lambda$-vacuum,
which does not care about fixed points of $\eta$:

\begin{definition}
\label{definition_NHG_alt}
Let $(M,g)$ be $\Lambda$-vacuum and
admit a connected null hypersurface ${\mathcal N}$ and a Killing vector $\eta$ as described above.
Consider any cut $S$ of ${\mathcal N}$ and let $\gamma$ be the induced metric of $S$
and $\bm{\torsion}$ the torsion one-form with respect to any choice
of null frame.
A {\bf generalized near horizon limit}
spacetime of $(M,g)$ is a spacetime
obtained by solving the characteristic initial value corresponding to a bifurcate
Killing horizon and data $(S,\gamma,\bm{\torsion})$.
\end{definition}

\begin{remark}
{\rm It may appear from the wording of the Definition that there are many possible generalized near horizon limits, but this
is merely a consequence of the fact that, to the best of our knowledge, there is no existence result of
a unique maximal Cauchy development in the characteristic case. All spacetimes generated by the data
are isometric in some neighbourhood of $S$.}
\end{remark}

\begin{remark}
{\rm   When $\eta$ has no fixed points, the generalized near horizon geometry is the same as the standard
  near horizon limit.}
\end{remark}

\begin{remark}
{\rm   By Corollary \ref{cor_multiple_horizons}, the null hypersurface ${\mathcal N}$ of the generalized near horizon limit is a non-fully degenerate multiple Killing horizon of order at least two.}
\end{remark}


\section{Fully degenerate MKHs}
\label{sec_fully_deg}
So far we have studied the emergence of $\Lambda$-vacuum spacetimes with MKHs  in terms of a characteristic
initial value problem  with data given on a bifurcate Killing horizon. However, since the bifurcate horizon is necessarily non-degenerate with respect to the bifurcate Killing vector this approach permits
merely the construction of non-degenerate MKHs.
In this section we consider the case of fully degenerate MKHs, and obtain partial results.

In order to construct spacetimes with fully degenerate MKHs  we need to modify the initial data in such a way that a
bifurcate Killing vector does not arise. For that purpose, one of the initial null hypersurfaces, $\mathcal{H}_2^+$ say, needs to have a non-vanishing null second fundamental form.
Since this null hypersurface will therefore not be a Killing horizon of the emerging spacetime anymore, we will  denote it by $N_2^+$.
It turns out, though,  that an analysis of the KID equations becomes rather intricate for such a general class of data: while the KID equations \eq{KID_gen1}, \eq{KID_gen2} and  \eq{KID_gen4} still provide ODEs for the Killing candidate on $\mathcal{H}^+_1$ and $N_2^+$, the main problem is to arrange the data in such a way that \eq{KID_gen3} holds. In the case of a bifurcate horizon we have seen, cf.\ the paragraph after Remark \ref{rem}, that its second-order $u$- (respectively, $v$-) derivative vanishes on $\mathcal{H}^+_1$ (resp.\ $\mathcal{H}^+_2$) whence \eq{KID_gen3}  becomes a constraint which only provides restrictions on $S$. This is no longer true for fully degenerate MKHs.

%
%
%

Fortunately, an analysis of the equations shows that their behavior is similar if additional conditions are imposed on the initial data. Hence, the strategy we are going to apply is as follows: first, in subsection \ref{subsec:Knonzero} we basically restrict attention to data where the second fundamental form is independent of the $x^A$-coordinates (in an appropriate gauge), i.e.\ it is constant along appropriately selected cuts. We will further impose that these cuts are intrinsically flat and set $\Lambda =0$ and a vanishing torsion one-form. In this way, a class of data generating fully degenrate MKHs is identified, and these data include solutions corresponding to gravitational plane waves. In subsection~\ref{subsec_vanishing_shear} we will restore a general $\Lambda$ and explore the Ansatz where the shear tensor vanishes on $N_2^+$, with no conclusive results concerning fully degenerate MKHs..

\subsection{KID equations for non-vanishing null second fundamental form}\label{subsec:Knonzero}

The following result is proved in Appendix~\ref{app_proof1}:

\begin{proposition}
\label{prop_KID_equations2}
Consider two smooth hypersurfaces, $\mathcal{H}^+_1$ and $N^+_2$ in an $(n+1)$-dimensional manifold with transverse intersection along a smooth $(n-1)$-dimensional submanifold $S$ in adapted null coordinates and where the gauge conditions \eq{cond_gauge}-\eq{cond_gauge2} are fulfilled.
Consider characteristic initial data which satisfy
\begin{equation*}
{}^{(1)}\pi_{AB} \overset{\mathcal{H}^+_1}{=} 0
\,, \quad
{}^{(1)}\theta \overset{S}{=}0
\,, \quad
\partial_C{}^{(2)}\pi_{AB}\overset{N_2^+}{=}0
\,, \quad
{}^{(2)}\theta \overset{S}{ =:}\theta_S= \mathrm{const.}
\,,
\end{equation*}
as well as
\begin{equation*}
\Lambda=0\,, \quad  (S,\gamma ) \cong \text{Euclidean space}\,, \quad {\bf \varsigma}=0
\,.
\end{equation*}
Then the emerging vacuum spacetime admits  $k$ Killing vectors if and only if
the following set of KID equations,
\begin{eqnarray*}
\partial_v\zeta_v  &\overset{N_2^+}{=}&0
\,,
\\
\Big(\partial_{v} - \frac{2}{n-1}{}^{(2)}\theta\Big)\zeta_A
-2{}^{(2)}\pi_{A}{}^B\zeta_B&\overset{N_2^+}{=}&  -\partial_{A} \zeta_{v}
\,,
\\
\partial_v\partial_v \big(\zeta^v -\frac{1}{2}g^{vv}\zeta_v\big) &\overset{N_2^+}{=}&0
\,,
\\
(\widehat\nabla_{(A}\zeta_{B)} )_{\mathrm{tf}}
&\overset{N_2^+}{=}&
-{}^{(2)}\pi_{AB}\big(\zeta^v -\frac{1}{2}g^{vv}\zeta_v\big)
\,,
\\
\partial_{(u}\zeta_{v)} &\overset{S}{=}& \Gamma^{u}_{uv}\zeta_{u}+ \Gamma^{v}_{uv}\zeta_{v}
\,,
\\
D^A\zeta_A &\overset{S}{=}& -\theta_S\zeta_{u}
\,,
\\
 \Delta_{\gamma}\zeta_{v} &\overset{S}{=}&
-\theta_S\partial_{[u}\zeta_{v]}
-\frac{\theta^2_S}{n-1}\zeta_{u}
-|{}^{(2)}\pi|^2 \zeta_{u}
\,,
\\
\partial_A\partial_{[u}\zeta_{v]}   &\overset{S}{=}& -\frac{\theta_S}{n-1} \partial_{A}\zeta_{u}
-{}^{(2)}\pi_{A}{}^B\partial_B\zeta_u
\\
D_{A}D_B\zeta_{u} &\overset{S}{=}& 0
\,,
\end{eqnarray*}
 admits $k$ independent solutions, determined by  data  $\zeta_{\mu}|_S$ and an additive integration constant
which arises from  the $ \partial_{[u}\zeta_{v]}|_S$-equation. The data on $\mathcal{H}^+_1$ are then uniquely determined by  \eq{KID_deg1}-\eq{KID_deg3} in Appendix~\ref{app_proof1} below.
\end{proposition}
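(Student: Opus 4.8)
The plan is to start from the general KID theorem and specialise its nine equations \eq{KID_gen1}--\eq{KID_gen9} to the present data, treating the two initial hypersurfaces separately. On $\mathcal{H}^+_1$ the assumptions ${}^{(1)}\pi_{AB}=0$ and ${}^{(1)}\theta|_S=0$ force, via the Raychaudhuri equation \eq{constr1}, the vanishing of the entire null second fundamental form ${}^{(1)}K_{AB}$. Hence $\mathcal{H}^+_1$ is exactly of the bifurcate-horizon type analysed in Section~\ref{sec_bifurcate}, and that reduction applies verbatim: \eq{KID_gen1}--\eq{KID_gen4} collapse to the transport system \eq{ODE_KID1}--\eq{ODE_KID3} (the equations recorded in the appendix). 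No new work is needed on this hypersurface, and in particular the curvature identity \eq{curv1} still guarantees $R_{Auu}{}^{\mu}=0$ there.

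The genuinely new computation is on $N_2^+$, where ${}^{(2)}K_{AB}$ does \emph{not} vanish. Here I would insert the gauge condition $\Gamma^v_{vv}\overset{N_2^+}{=}0$ together with the Christoffel and curvature expressions of \cite[Appendix~A]{CCM2} directly into \eq{KID_gen1}, \eq{KID_gen2} and \eq{KID_gen4} (with $u$ and $v$ interchanged), obtaining respectively $\partial_v\zeta_v=0$, the first-order transport equation for $\zeta_A$ involving ${}^{(2)}\pi_A{}^B$, and the second-order transport equation for the boost-invariant combination $\zeta^v-\tfrac12 g^{vv}\zeta_v$. The delicate point is \eq{KID_gen3}: unlike on the horizon, where its repeated normal derivatives vanish identically, here the trace-free equation $(\widehat\nabla_{(A}\zeta_{B)})_{\mathrm{tf}}=-{}^{(2)}\pi_{AB}(\zeta^v-\tfrac12 g^{vv}\zeta_v)$ becomes a constraint which must hold along the whole of $N_2^+$ rather than merely at $S$.

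The main obstacle is therefore the propagation of this trace-free constraint. The strategy is to impose it at $S$ and then show that its $v$-derivative is implied by the transport equations already established together with the field equations. This is precisely where the special assumptions enter: with $\Lambda=0$, $\bm{\torsion}=0$, intrinsically flat cuts, and $\partial_C{}^{(2)}\pi_{AB}=0$, the Raychaudhuri equation (the analogue of \eq{constr1} on $N_2^+$) together with \eq{constr4} control the evolution of the induced geometry in a drastically simplified way; the cuts stay intrinsically flat and the shear depends only on $v$, so that differentiating the constraint, substituting the transport equations for $\zeta_A$ and $\zeta^v-\tfrac12 g^{vv}\zeta_v$ and the prescribed $v$-evolution of ${}^{(2)}\pi_{AB}$, closes consistently and reproduces the constraint at all $v$. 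Verifying this closure is the technical heart of the appendix, and it is the step that would fail for generic (fully degenerate) data, which is exactly why the restrictive hypotheses are imposed.

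Finally, I would specialise the equations on $S$, namely \eq{KID_gen5}--\eq{KID_gen9}. Since $\nu\overset{S}{=}1$ is constant on $S$ and $\bm{\torsion}=0$, relation \eq{torsion_forms_S} gives $\one{s}_A=\two{s}_A=0$, i.e.\ $\Gamma^u_{uA}=\Gamma^v_{vA}=0$ on $S$ by \eq{ss}; combined with ${}^{\gamma}R_{AB}=0$ and $\Lambda=0$, the curvature term \eq{Riemann_component} vanishes and the Laplacian-type relations reduce \eq{KID_gen5}--\eq{KID_gen9} to the listed $S$-equations. The expansion $\theta_S$ and the shear-norm $|{}^{(2)}\pi|^2$ survive in these equations precisely because $N_2^+$ is not a horizon, and the full Hessian relation $D_AD_B\zeta_u=0$ emerges by combining the trace-free consequence of \eq{constr4} on $\mathcal{H}^+_1$ (as in \eq{tracefree2}) with the corresponding trace relation (as in \eq{trace2}). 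To conclude, the free data are $\zeta_\mu|_S$ together with the additive constant fixing $\partial_{[u}\zeta_{v]}|_S$; solving the transport ODEs on $\mathcal{H}^+_1$ and $N_2^+$ from these data, and imposing the constraints on $S$, puts the solutions of the reduced system in bijection with the Killing vectors of the emerging spacetime, so that $k$ independent solutions correspond to $k$ independent Killing vectors, as claimed.
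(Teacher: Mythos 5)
Your first and last paragraphs do track the paper's proof in Appendix~\ref{app_proof1}: with ${}^{(1)}K_{AB}=0$ the reduction of Section~\ref{sec_bifurcate} applies on $\mathcal{H}^+_1$ (up to the new coefficient values supplied by \eq{constr1}--\eq{constr4}, namely $\Gamma^u_{uA}\overset{\mathcal{H}^+_1}{=}0$ and $(\Gamma^u_{AB})_{\mathrm{tf}}\overset{\mathcal{H}^+_1}{=}-{}^{(2)}\pi_{AB}|_S$), the transport equations on $N_2^+$ follow from \eq{KID_gen1}, \eq{KID_gen2}, \eq{KID_gen4} with $u\leftrightarrow v$, and the $S$-equations come out as you describe. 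The genuine error is in your third paragraph, the one you call the technical heart: you assert that the trace-free constraint $(\widehat\nabla_{(A}\zeta_{B)})_{\mathrm{tf}}\overset{N_2^+}{=}-{}^{(2)}\pi_{AB}\big(\zeta^v-\tfrac{1}{2} g^{vv}\zeta_v\big)$ only needs to be imposed at $S$, because under the restrictive hypotheses its $v$-derivatives are implied by the transport equations and the field equations, so that it ``closes consistently and reproduces the constraint at all $v$''. No such propagation holds, and the paper claims none: this is exactly why that equation appears in the statement of Proposition~\ref{prop_KID_equations2} as a condition on all of $N_2^+$, while no equation on $\mathcal{H}^+_1$ is listed at all. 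On $\mathcal{H}^+_1$ the constraint does collapse to $S$-conditions, because with ${}^{(1)}K_{AB}=0$ every term in \eq{KID_deg4} is $u$-independent and its repeated $u$-derivatives vanish; this asymmetry between totally geodesic and non-totally geodesic initial hypersurfaces is stressed at the start of Section~\ref{sec_fully_deg}, and it fails on $N_2^+$ precisely because ${}^{(2)}K_{AB}\neq 0$ there.

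Had your closure argument been correct, every candidate field determined by $\zeta_\mu|_S$ and the additive constant, subject only to the $S$-equations, would automatically be a Killing vector. This contradicts the paper's own continuation: the proof of Proposition~\ref{prop_KID_equations4} in Appendix~\ref{app_number_killings} consists exactly of evaluating this trace-free constraint at successive orders in $v$, and it produces genuine obstructions --- the vanishing of the constants $c_A$, equation \eq{equation_kappa} fixing $\alpha_{\zeta}$, and condition \eq{ev_sym} on $\overset{(1)}{\zeta}_{AB}(v)$ --- whose fulfilment depends on the free data ${}^{(2)}\pi_{AB}(v)$; that is why that proposition can only guarantee $2n-1$ Killing vectors rather than the maximal number. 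Relatedly, you misattribute the role of the special hypotheses: they are not there to make any propagation work. In the paper, $\Lambda=0$ is what makes the curvature sources $R_{vuu}{}^u$ and $R_{uvv}{}^v$ (each equal to $\frac{2}{n-1}\Lambda g_{uv}$) vanish, so that the second-order transport equations take the listed homogeneous form and admit in particular solutions with $\zeta^v-\tfrac{1}{2} g^{vv}\zeta_v\equiv 0$, without which no Killing vector with $\zeta_v|_S\neq 0$ could arise; and flatness of $(S,\gamma)$ together with $\partial_C{}^{(2)}\pi_{AB}=0$ makes $g_{AB}|_{N_2^+}$ independent of $x^C$, hence $\widehat{R}_{AB}\overset{N_2^+}{=}0$, which is the Einstein condition \eq{restriction_data2} needed in the constraint and curvature computations once $\Lambda=0$.
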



Now, we use the previous proposition to try and derive initial data producing MKHs (not necessarily degenerate). To that end, we add initial data for $\zeta$:
 Recalling
(\ref{proportionality}) and (\ref{horizon_conds}) we keep $\zeta_v$ as the only non-zero component on $S$ (as this is necessary for the Killing vector field to have $\mathcal{H}_1$ as a Killing horizon) so that as a corollary of Proposition~\ref{prop_KID_equations2} we obtain:
\begin{corollary}
\label{cor_fully_deg}
Under the same hypotheses as in Proposition~\ref{prop_KID_equations2}, the emerging vacuum spacetime
has $\mathcal{H}^+_1$ as  a MKH of order $m$ if and only if the following set of equations admits $m$ independent solutions,
parameterized by $(\eta_u|_S=\eta_A|_S=0,\eta_v|_S ,\partial_{[u}\eta_{v]}|_S= \kappa_{\eta} =\mathrm{const.})$,
\begin{eqnarray}
\partial_v\eta_v  &\overset{N_2^+}{=}&0
\,,
\label{KID_towards_deg1}
\\
\Big(\partial_{v} - \frac{2}{n-1}{}^{(2)}\theta\Big)\eta_A
 &\overset{N_2^+}{=}&  2{}^{(2)}\pi_{A}{}^B\eta_B-\partial_{A} \eta_{v}
\,, \quad \eta_A\overset{S}{=}0
\,,
\label{KID_towards_deg2}
\\
\partial_v\partial_v \big(\eta^v -\frac{1}{2}g^{vv}\eta_v\big)  &\overset{N_2^+}{=}&0
\,,\quad
\big(\eta^v -\frac{1}{2}g^{vv}\eta_v\big)\overset{S}{=}0\,, \quad \partial_v \big(\eta^v -\frac{1}{2}g^{vv}\eta_v\big)\overset{S}{=} - \kappa_{\eta}
\,,
\label{KID_towards_deg3}
\\
(\widehat\nabla_{(A}\eta_{B)} )_{\mathrm{tf}}
 &\overset{N_2^+}{=}&
-{}^{(2)}\pi_{AB}\big(\eta^v -\frac{1}{2}g^{vv}\eta_v\big)
\,,
\label{KID_towards_deg4}
\\
 \Delta_{\gamma} \eta_{v}&\overset{S}{=}&
-\theta_S \kappa_{\eta}
\label{KID_towards_deg6}
\,.
\end{eqnarray}
%
%
%
\end{corollary}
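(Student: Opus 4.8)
The plan is to read Corollary~\ref{cor_fully_deg} as the specialization of Proposition~\ref{prop_KID_equations2} to those Killing candidates which make $\mathcal{H}^+_1$ a Killing horizon. By the proportionality conditions \eq{proportionality}--\eq{horizon_conds}, a Killing vector $\eta$ has $\mathcal{H}^+_1$ as Killing horizon exactly when $\eta_u=\eta_A=0$ and $\eta_v\neq0$ on $\mathcal{H}^+_1$. Since the hypotheses of Proposition~\ref{prop_KID_equations2} keep ${}^{(1)}\pi_{AB}=0$ and ${}^{(1)}\theta=0$ on $\mathcal{H}^+_1$, the transport equations along $\mathcal{H}^+_1$ are of the same homogeneous type as \eq{ODE_KID1}--\eq{ODE_KID2}; so imposing $\eta_u=\eta_A=0$ on all of $\mathcal{H}^+_1$ is equivalent to imposing $\eta_u|_S=\eta_A|_S=0$ on $S$ alone, exactly as in the passage from \eq{horizon_conds} to \eq{cond_double_horizon2}. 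I would therefore substitute $\eta_u|_S=\eta_A|_S=0$, keeping $\eta_v|_S$ free, into the KID system of Proposition~\ref{prop_KID_equations2}, and count the independent surviving solutions: since each such solution extends to a Killing vector with Killing horizon $\mathcal{H}^+_1$ and conversely, $\mathcal{H}^+_1$ is a MKH of order $m$ iff the reduced system admits $m$ independent (non-trivial) solutions.

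First I would carry out the substitution. The four transport equations on $N^+_2$ are essentially untouched and reproduce \eq{KID_towards_deg1}, \eq{KID_towards_deg2}, \eq{KID_towards_deg3} and \eq{KID_towards_deg4}, the only change being that the shear term in the $\eta_A$-equation is moved to the right-hand side and the initial value $\eta_A|_S=0$ is recorded. Among the equations on $S$, the relations $D^A\zeta_A\overset{S}{=}-\theta_S\zeta_u$ and $D_AD_B\zeta_u\overset{S}{=}0$ hold automatically, because $\eta_A|_S$ and $\eta_u|_S$ vanish identically on $S$ and hence so do their tangential derivatives; the equation for $\partial_A\partial_{[u}\zeta_{v]}$ then has vanishing right-hand side and forces $\partial_A(\partial_{[u}\eta_{v]}|_S)=0$, i.e. $\kappa_\eta:=\partial_{[u}\eta_{v]}|_S=\mathrm{const.}$, which is precisely the constant appearing in the claimed parameterization. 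The Laplacian equation for $\zeta_v$ collapses (its $\zeta_u$-terms dropping out) to \eq{KID_towards_deg6}, $\Delta_\gamma\eta_v\overset{S}{=}-\theta_S\kappa_\eta$. Thus the reduced constraints on $S$ are exactly \eq{KID_towards_deg6} together with the initial data to be fed into the transport equations.

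The one point needing genuine computation, which I expect to be the main (though modest) obstacle, is the initial data for the second-order transport equation \eq{KID_towards_deg3}. From the inverse-metric relation $g^{v\alpha}\overset{\mathcal{H}^+_1}{=}g_{uv}^{-1}\delta^{\alpha}{}_{u}$ established after \eq{curv1} one reads off, using the gauge $\nu\overset{S}{=}1$ of \eq{cond_gauge2}, that $g^{vu}|_S=1$ and $g^{vv}|_S=0$; hence $(\eta^v-\tfrac12 g^{vv}\eta_v)|_S=g^{vu}\eta_u|_S=0$, the first initial condition. Differentiating in $v$ gives $\partial_v(\eta^v-\tfrac12 g^{vv}\eta_v)|_S=\tfrac12(\partial_v g^{vv})\eta_v+\partial_v\eta_u$ on $S$, where I would then use the inverse-metric identity $\partial_v g^{vv}|_S=-\partial_v g_{uu}|_S=-2\Gamma^{v}_{uv}|_S$ together with the symmetric-part equation $\partial_{(u}\zeta_{v)}\overset{S}{=}\Gamma^{u}_{uv}\zeta_u+\Gamma^{v}_{uv}\zeta_v$, which (with $\eta_u|_S=0$ and $\kappa_\eta=\partial_{[u}\eta_{v]}|_S$) yields $\partial_v\eta_u|_S=\Gamma^{v}_{uv}\eta_v-\kappa_\eta$. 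Substituting both, the $\Gamma^{v}_{uv}\eta_v$ terms cancel and one is left with $\partial_v(\eta^v-\tfrac12 g^{vv}\eta_v)|_S=-\kappa_\eta$, the remaining initial condition in \eq{KID_towards_deg3}; the constant $\kappa_\eta$ is identified with the surface gravity of $\mathcal{H}^+_1$ relative to $\eta$ as in Theorem~\ref{theorem_fund_form}(i). This closes the equivalence: a non-trivial solution of \eq{KID_towards_deg1}--\eq{KID_towards_deg6} is precisely a Killing vector rendering $\mathcal{H}^+_1$ a Killing horizon, so counting independent solutions counts the order of the MKH.
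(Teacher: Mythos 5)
Your proposal is correct and follows essentially the same route as the paper: Corollary~\ref{cor_fully_deg} is obtained there exactly by specializing the KID system of Proposition~\ref{prop_KID_equations2} to the horizon conditions $\eta_u|_S=\eta_A|_S=0$ (via \eq{proportionality}--\eq{horizon_conds} and the homogeneous transport equations on $\mathcal{H}_1^+$), observing that the equations in $\zeta_u$, $\zeta_A$ trivialize, that the $\partial_A\partial_{[u}\zeta_{v]}$-equation forces $\kappa_\eta=\mathrm{const.}$, and that the $\partial_{(u}\eta_{v)}|_S$-condition is absorbed into the initial data of \eq{KID_towards_deg3}. Your explicit computation of those initial data, showing $\big(\eta^v-\tfrac12 g^{vv}\eta_v\big)|_S=0$ and $\partial_v\big(\eta^v-\tfrac12 g^{vv}\eta_v\big)|_S=-\kappa_\eta$ via $\partial_v g^{vv}|_S=-2\Gamma^v_{uv}|_S$, is accurate and simply fills in what the paper asserts in the remark following the corollary.
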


\begin{remark}
{\rm
The condition on $\partial_{(u}\eta_{v)} |_S$ in Proposition~\ref{prop_KID_equations2}  is  relevant to determine
the initial data on $S$ for the second-order ODEs for $\eta^u|_{N^+_1}$ and $\eta^v|_{N^+_2}$ and is thus
contained in the initial conditions in \eq{KID_towards_deg3}.
}
\end{remark}

As in previous cases, $\kappa_\eta$ represents the surface gravity relative to $\eta$ of the emerging horizon. Hence, in order to get a fully degenerate MKH we need to make sure that every solution of this  system satisfies $\kappa_{\eta}=0$. To that end, it is convenient to assume data which satisfy
${}^{(2)}\pi_{AB}|_S=0$.
Then, we determine the first- and second-order $v$-derivative of   \eq{KID_towards_deg4} which yields with \eq{KID_towards_deg2}-\eq{KID_towards_deg3} and evaluated on $S$
\begin{eqnarray*}
(\partial_{A}\partial_{B}\eta_{v})_{\mathrm{tf}}|_S
 &=&
0
\,,
\\
\frac{1}{n-1}\theta_S (\partial_{A}\partial_{B}\eta_{v})_{\mathrm{tf}}|_S
 &=&
-\kappa_{\eta}\partial_v{}^{(2)}\pi_{AB}
\,,
\end{eqnarray*}
which implies
\begin{equation*}
\kappa_{\eta}\partial_v{}^{(2)}\pi_{AB}|_S=0
\,.
\end{equation*}
Clearly, $\kappa_{\eta}=0$ must hold whenever $\partial_v{}^{(2)}\pi_{AB}|_S\ne 0$.
%
%
To enforce $\kappa_{\eta}=0$ let us therefore restrict attention to initial data of the form
\begin{equation}
{}^{(2)}\pi_{AB}|_S=0\,, \quad \partial_v{}^{(2)}\pi_{AB}|_S\ne 0
\,.
\label{restrictions_data}
\end{equation}
By the latter condition we mean that $\partial_v{}^{(2)}\pi_{AB}|_S$ is not identically zero, i.e. that at least one component (which is, by assumption, constant on $S$) is non-zero.

We conclude from Corollary~\ref{cor_fully_deg} that  the emerging vacuum spacetime
has $\mathcal{H}^+_1$ as  a fully degenerate  Killing horizon of order $m$ if
the following set of equations admits $m$ independent solutions
$\eta_v=\overset{(0)}{\eta}(x^A)$  (with $\eta_A|_S =0$)
\begin{eqnarray}
\Big(\partial_{v} - \frac{2}{n-1}{}^{(2)}\theta\Big)\eta_A
-2{}^{(2)}\pi_{A}{}^B\eta_B &\overset{N_2^+}{=}&  -\partial_{A}\overset{(0)}{\eta}
\,,
\label{KID_towards_deg2B}
\\
(\widehat\nabla_{(A}\eta_{B)} )_{\mathrm{tf}}
&\overset{N_2^+}{=}&
0
\,,
\label{KID_towards_deg4B}
\\
 \Delta_{\gamma}\overset{(0)}{\eta}&\overset{S}{=}&
0
\,.
\label{KID_towards_deg5B}
\end{eqnarray}
Using \eq{KID_towards_deg4B}, the anti-symmetrized derivative  of \eq{KID_towards_deg2B}  reads
\begin{equation*}
\Big(\partial_{v} - \frac{2}{n-1}{}^{(2)}\theta\Big)\partial_{[A}\eta_{B]}
+{}^{(2)}\pi_{A}{}^C\partial_{[B}\eta_{C]}
- {}^{(2)}\pi_{B}{}^C\partial_{[A}\eta_{C]} \overset{N_2^+}{=} 0
\,.
\end{equation*}
As the initial data on $S$ for this equation vanish (cf. (\ref{KID_towards_deg2})) we deduce
\begin{equation*}
\partial_{[A}\eta_{B]}  \overset{N_2^+}{=} 0
\,.
\end{equation*}
Using \eq{KID_towards_deg4B} we also  compute the symmetrized trace-free part and the divergence of \eq{KID_towards_deg2B},
\begin{align}
(\widehat\nabla_A\widehat\nabla_{B}\overset{(0)}{\eta})_{\mathrm{tf}}
  \overset{N_2^+}{=}   0
\,,
\label{itermed_deg}
\qquad
\Big(\partial_{v} - \frac{2}{n-1}{}^{(2)}\theta\Big)\widehat\nabla^A\eta_A
 \overset{N_2^+}{=}   -\widehat\Delta\overset{(0)}{\eta}
\,.
\end{align}
The first relation implies (note that the Christoffel symbols of $g_{AB}|_{N_2^+}$ vanish),
%
\begin{align*}
&\partial_v\widehat\Delta\overset{(0)}{\eta}  \overset{N_2^+}{=}  \partial_vg^{AB}\partial_A\partial_B\overset{(0)}{\eta}
 \overset{N_2^+}{=}  - \frac{2}{n-1}{}^{(2)}\theta\widehat\Delta \overset{(0)}{\eta}
\\
&\overset{ \eq{KID_towards_deg5B}}{\Longrightarrow} \quad\widehat\Delta\overset{(0)}{\eta}   \overset{N_2^+}{=} 0
\quad \Longrightarrow\quad
\widehat\nabla^A\eta_A  \overset{N_2^+}{=} 0
\,.
\end{align*}
%
Any solution of \eq{KID_towards_deg2B}-\eq{KID_towards_deg5B}  is
therefore necessarily of the form

\begin{equation*}
\partial_{A}\partial_{B}\overset{(0)}{\eta}
 \overset{N_2^+}{=}   0
\,,
\qquad
\partial_{A}\eta_{B}
 \overset{N_2^+}{=}
0
\,.
\end{equation*}

The ODE \eq{KID_towards_deg2B} can be integrated for data $\eta_A\overset{S}{=}0$. As the source and the coefficients do not depend on the $x^A$'s the same will be true for $\eta_A|_{N_2^+}$, so that there are no further obstructions by \eq{KID_towards_deg4B}.
We deduce that the above system has $m=n$ independent solutions which are given by
\begin{equation*}
\overset{(0)}{\eta} \overset{N_2^+}{=} 1 \quad \text{and} \quad \overset{(0)}{\eta} \overset{N_2^+}{=} x^A\,, \enspace A=3,\dots,n+1
\,,
\end{equation*}
together with the corresponding solution $\eta_A$ of
\eq{KID_towards_deg2B} 
with $\eta_A|_S=0$.

We thus have:
\begin{proposition}
\label{prop_KID_equations3}
Consider two smooth hypersurfaces $\mathcal{H}^+_1=\{v=0\}$ and $N_2^+=\{u=0\}$  in an $(n+1)$-dimensional manifold with transverse intersection along a smooth $(n-1)$-dimensional submanifold $S$ in adapted null coordinates
and where the gauge conditions \eq{cond_gauge}-\eq{cond_gauge2} are fulfilled.
Consider characteristic initial data which satisfy
\begin{align*}
{}^{(1)}\pi_{AB} \overset{\mathcal{H}^+_1}{=}0
\,, \quad
{}^{(1)}\theta \overset{S}{=}0
\,,
\quad
\partial_C{}^{(2)}\pi_{AB}\overset{N_2^+}{=}0
\,,
 \quad {}^{(2)}\pi_{AB}\overset{S}{=}0\,, \quad \partial_v{}^{(2)}\pi_{AB}|_S\ne 0
\,, \quad{}^{(2)}\theta\overset{S}{=}\mathrm{const.}
\end{align*}
and where
\begin{equation*}
\Lambda=0\,, \quad (S,\gamma) \cong \text{Euclidean space}\,, \quad  {\bf \varsigma}=0
\,.
\end{equation*}
Then the emerging vacuum spacetime
has $\mathcal{H}^+_1$ as  a fully degenerate  multiple Killing horizon of  maximal order $n$.
\end{proposition}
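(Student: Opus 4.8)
The plan is to specialize Corollary~\ref{cor_fully_deg} to the data of the proposition and to read off both the dimension and the degeneracy of the solution space from the resulting KID system. Since the hypotheses here contain those of Proposition~\ref{prop_KID_equations2} (they merely add ${}^{(2)}\pi_{AB}\overset{S}{=}0$ and $\partial_v{}^{(2)}\pi_{AB}|_S\neq 0$), Corollary~\ref{cor_fully_deg} applies verbatim: $\mathcal{H}^+_1$ is a MKH of order $m$ if and only if the system \eq{KID_towards_deg1}--\eq{KID_towards_deg6}, parameterized by $(\eta_u|_S=\eta_A|_S=0,\,\eta_v|_S,\,\kappa_\eta)$, admits $m$ independent solutions. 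The first task is to show that the extra datum $\partial_v{}^{(2)}\pi_{AB}|_S\neq 0$ forces $\kappa_\eta=0$ for every admissible solution, so that any such MKH is automatically fully degenerate. I would obtain this by differentiating the trace-free constraint \eq{KID_towards_deg4} once and twice in $v$, evaluating on $S$ with ${}^{(2)}\pi_{AB}|_S=0$ and the initial conditions \eq{KID_towards_deg2}--\eq{KID_towards_deg3}, and combining the two resulting identities into $\kappa_\eta\,\partial_v{}^{(2)}\pi_{AB}|_S=0$; the assumed non-vanishing then yields $\kappa_\eta=0$.

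With $\kappa_\eta=0$ the system collapses to the reduced set \eq{KID_towards_deg2B}--\eq{KID_towards_deg5B}, which I would analyze entirely on $N_2^+$, exploiting the flatness of the fibers so that $\widehat\nabla$ reduces to partial derivatives and the fiber Christoffel symbols vanish in the Euclidean coordinates. The antisymmetrized $x^A$-derivative of the transport equation \eq{KID_towards_deg2B}, together with the vanishing initial data $\eta_A|_S=0$, gives $\partial_{[A}\eta_{B]}=0$ on $N_2^+$. Feeding the trace-free constraint \eq{KID_towards_deg4B} into the symmetrized trace-free part and the divergence of \eq{KID_towards_deg2B} produces the pair \eq{itermed_deg}; the first of these, combined with the harmonicity \eq{KID_towards_deg5B} and a short $v$-transport argument, forces $\widehat\Delta\,\overset{(0)}{\eta}=0$ throughout $N_2^+$, and hence, via the second identity in \eq{itermed_deg} together with $\eta_A|_S=0$, $\widehat\nabla^A\eta_A=0$. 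Since the symmetric trace-free part \eq{KID_towards_deg4B}, the trace $\widehat\nabla^A\eta_A$, and the antisymmetric part $\partial_{[A}\eta_{B]}$ of $\partial_A\eta_B$ all vanish, we get $\partial_A\eta_B=0$; and $\widehat\Delta\,\overset{(0)}{\eta}=0$ with $(\widehat\nabla_A\widehat\nabla_B\overset{(0)}{\eta})_{\mathrm{tf}}=0$ gives $\partial_A\partial_B\,\overset{(0)}{\eta}=0$, i.e.\ $\overset{(0)}{\eta}$ is affine in the Euclidean coordinates.

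To finish I would count solutions. The affine condition on $\overset{(0)}{\eta}$ yields exactly the $n$-dimensional space spanned by $\overset{(0)}{\eta}=1$ and $\overset{(0)}{\eta}=x^A$, $A=3,\dots,n+1$. For each such choice one integrates the linear ODE \eq{KID_towards_deg2B} along the generators of $N_2^+$ with $\eta_A|_S=0$; since both the source $-\partial_A\overset{(0)}{\eta}$ and the transport coefficients are independent of $x^A$, the resulting $\eta_A$ is $x^A$-independent, so the constraint \eq{KID_towards_deg4B} is satisfied automatically and no further obstruction arises. Hence the reduced system has precisely $m=n$ independent solutions, all with $\kappa_\eta=0$, which establishes that $\mathcal{H}^+_1$ is a fully degenerate MKH of order $n$. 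Maximality follows from Theorem~\ref{thm_MKH2}(iii), which caps $\dim\Kill_\H^{deg}$ --- and therefore the order of a fully degenerate MKH --- at $n$.

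I expect the delicate point to be the closure argument in the last two steps rather than any single computation: one must check that integrating \eq{KID_towards_deg2B} off $S$ does not re-introduce a violation of the trace-free constraint \eq{KID_towards_deg4B}, and that the dimension count is sharp --- neither smaller, since the affine candidates genuinely extend to independent solutions, nor larger, since $\partial_A\partial_B\,\overset{(0)}{\eta}=0$ excludes higher polynomials. The degeneracy-forcing step is conceptually the key physical input, but it is the $x^A$-independence bookkeeping on $N_2^+$ that carries the dimension count and therefore needs the most care.
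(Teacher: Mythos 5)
Your proposal is correct and follows essentially the same route as the paper's own proof: specializing Corollary~\ref{cor_fully_deg}, forcing $\kappa_\eta=0$ from the first and second $v$-derivatives of \eq{KID_towards_deg4} using ${}^{(2)}\pi_{AB}|_S=0$ and $\partial_v{}^{(2)}\pi_{AB}|_S\neq 0$, reducing to \eq{KID_towards_deg2B}--\eq{KID_towards_deg5B}, deriving $\partial_A\eta_B\overset{N_2^+}{=}0$ and $\partial_A\partial_B\overset{(0)}{\eta}\overset{N_2^+}{=}0$, and counting the $n$ affine solutions whose $x^A$-independent transport along $N_2^+$ creates no obstruction in \eq{KID_towards_deg4B}. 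The paper's argument is the same in every essential step, including the final appeal to the bound $m\leq n$ for fully degenerate MKHs.
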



In Appendix~\ref{app_number_killings} we prove the following proposition which addresses the issue how  many independent Killing vectors the spacetimes generated in  Proposition~\ref{prop_KID_equations3} have.

\begin{proposition}
\label{prop_KID_equations4}
%
Under the same hypotheses as in Proposition~\ref{prop_KID_equations3} we have:
\begin{enumerate}
\item[(i)]
The emerging vacuum spacetime has
at least   $2n-1$ Killing vectors.
\item[(ii)]
For appropriate data there may be a maximum of $\frac{1}{2}(n-1)(n-2)+1$
 additional Killing vectors, depending on whether the candidate fields $\overset{(1)}\zeta_{AB}(v)$ determined by \eq{eqn_coeff} and \eq{ev_anti_sym}
fulfill \eq{ev_sym}, cf.\   Appendix~\ref{app_number_killings}.
\end{enumerate}
\end{proposition}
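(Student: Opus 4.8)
The plan is to solve the full KID system of Proposition~\ref{prop_KID_equations2} subject to the restrictions imposed in Proposition~\ref{prop_KID_equations3} ($\Lambda=0$, $(S,\gamma)\cong$ Euclidean space, $\torsion=0$, $\partial_C{}^{(2)}\pi_{AB}=0$ on $N_2^+$, ${}^{(2)}\pi_{AB}|_S=0$ with $\partial_v{}^{(2)}\pi_{AB}|_S\neq0$, and ${}^{(2)}\theta|_S=\mathrm{const.}$) and to count \emph{all} its independent solutions, not only those for which $\mathcal{H}_1^+$ is a horizon. The key simplification I would exploit first is that, since the fibres are flat and ${}^{(2)}\pi_{AB}$ is $x^A$-independent, the transverse metric is $x^A$-independent, $g_{AB}=g_{AB}(v)$, so its connection coefficients vanish and $\widehat\nabla_A\zeta_B=\partial_A\zeta_B$ on $N_2^+$. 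The emerging spacetimes are of plane-wave type, so the expectation is a $(2n-1)$-dimensional Heisenberg-type algebra supplemented by a rotational/scaling sector, and the whole argument consists in making this structure explicit at the level of the characteristic data.

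For part (i) I would first recall that Proposition~\ref{prop_KID_equations3} already produces $n$ independent Killing vectors: the fully degenerate generators with $\eta_u|_S=\eta_A|_S=0$, $\kappa_\eta=0$ and $\eta_v|_S\in\{1,x^3,\dots,x^{n+1}\}$. To obtain the remaining $n-1$ I would construct \emph{transverse-translation} solutions characterized by $\zeta_v|_S=\zeta_u|_S=0$, $\partial_{[u}\zeta_{v]}|_S=0$ and constant, $v$-independent initial data $\zeta_A|_S=\mathrm{const.}$ For such candidates $\zeta_v\equiv0$ and $\zeta^v-\tfrac12 g^{vv}\zeta_v$ vanishes identically (it is linear in $v$ because $\partial_v\partial_v(\zeta^v-\tfrac12 g^{vv}\zeta_v)=0$ with zero data on $S$), so the trace-free constraint of Proposition~\ref{prop_KID_equations2} reduces to $\partial_{(A}\zeta_{B)}=0$; and since the first-order $\zeta_A$-transport equation of Proposition~\ref{prop_KID_equations2} has $x^A$-independent coefficients and vanishing source, $\zeta_A(v)$ stays $x^A$-independent, so that constraint is automatically met. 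The $(n-1)$ independent choices of $\zeta_A|_S$ then integrate to $n-1$ genuine Killing vectors, yielding $2n-1$ in total and proving (i).

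For part (ii) I would search for the solutions whose transverse part is \emph{linear} in $x^A$, writing $\zeta_A=\overset{(1)}\zeta_{AB}(v)\,x^B+\dots$ so that $\partial_A\zeta_B=\overset{(1)}\zeta_{BA}(v)$ and the trace-free constraint couples the symmetric part of $\overset{(1)}\zeta_{AB}$ to the profile ${}^{(2)}\pi_{AB}$. Decomposing $\overset{(1)}\zeta_{AB}$ into its antisymmetric part (the $\tfrac12(n-1)(n-2)$ infinitesimal rotations of the flat transverse space), its trace (a single scaling mode) and its symmetric-trace-free part, I would show that the transport equations fix the rotational and scaling modes through \eq{eqn_coeff} and \eq{ev_anti_sym}, producing $\tfrac12(n-1)(n-2)+1$ candidate fields $\overset{(1)}\zeta_{AB}(v)$. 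Each candidate extends to an actual Killing vector precisely when the residual symmetric-trace-free evolution equation \eq{ev_sym} is satisfied; this accounts for the ``depending on whether'' clause and for the stated maximum, while the reconstruction of the data on $\mathcal{H}_1^+$ follows from \eq{KID_deg1}--\eq{KID_deg3}.

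I expect the compatibility condition \eq{ev_sym} to be the main obstacle. Whereas in part (i) the trace-free constraint degenerated and was automatically preserved along $N_2^+$, for the $x^A$-linear candidates the constraint and the transport equation interact nontrivially, and one must verify that propagating the constraint in $v$ is consistent with the profile ${}^{(2)}\pi_{AB}(v)$ fixed by the data. No universal identity forces \eq{ev_sym} to hold, so the rotational and scaling Killing vectors survive only for suitably symmetric profiles; this is exactly why part (ii) is an upper bound conditioned on \eq{ev_sym} rather than an equality. The detailed integration of the transport equations and the verification of these claims is what I would carry out in Appendix~\ref{app_number_killings}.
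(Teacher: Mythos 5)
Your part~(i) is correct and in essence matches the paper: the $2n-1$ Killing vectors you exhibit are exactly the ones the paper parameterizes by $d$, $d_A$ (the degenerate horizon generators of Proposition~\ref{prop_KID_equations3}) and $\overset{(0)}{\zeta}_A(0)$ (your transverse translations). Your verification mechanism is sound: with $\zeta_u|_S=0$ and the free additive constant set to zero, the function $\mathfrak{f}:=-\bigl(\zeta^v-\tfrac{1}{2}g^{vv}\zeta_v\bigr)$ is linear in $v$ with vanishing value and first derivative on $S$, hence vanishes identically, and the $x^A$-independence of the coefficients of the $\zeta_A$-transport equation then makes the trace-free constraint \eq{gen_KID4} hold trivially. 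The paper obtains (i) instead as a specialization ($c=\overset{(1)}{\zeta}_{[AB]}(0)=0$, hence $\alpha_\zeta=0$) of its general solution; your route is more direct but, as explained next, it is precisely the general analysis that you are missing.

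Part~(ii) has a genuine gap: the proposition asserts a \emph{maximum} of $\tfrac{1}{2}(n-1)(n-2)+1$ additional Killing vectors, and an ansatz can only ever produce a lower bound. By restricting from the outset to $\zeta_A=\overset{(0)}{\zeta}_A(v)+\overset{(1)}{\zeta}_{AB}(v)x^B$ you find the candidate fields and the conditional dependence on \eq{ev_sym}, but you cannot exclude Killing vectors whose data have higher-order $x^A$-dependence, so you cannot conclude that the count is an upper bound. The core of the paper's proof in Appendix~\ref{app_number_killings} is exactly this exhaustiveness argument: differentiating the constraint \eq{gen_KID4} in $v$ and in $x^C$ and invoking the hypothesis $\partial_v{}^{(2)}\pi_{AB}|_S\neq 0$ forces the linear-in-$x$ coefficients $c_A$ of $\mathfrak{f}$ to vanish; this yields $\partial_C(\partial_{(A}\zeta_{B)})_{\mathrm{tf}}\overset{N_2^+}{=}0$, then that $\zeta_v$ is at most quadratic in $x^A$, and then, via the integrated divergence of the transport equation, $\partial_A\partial_B\zeta_C\overset{N_2^+}{=}0$ — i.e.\ the linear form of $\zeta_A$ is \emph{derived}, not assumed — with the symmetric part $\overset{(1)}{\zeta}_{(AB)}(v)$ forced to equal \eq{eqn_coeff} and $\alpha_\zeta$ fixed by \eq{equation_kappa}. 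Only after all of this does \eq{ev_sym} emerge as the sole remaining obstruction, leaving exactly the free parameters $c$ and $\overset{(1)}{\zeta}_{[AB]}(0)$, i.e.\ at most $1+\tfrac{1}{2}(n-1)(n-2)$ extra solutions. A telling symptom of the gap is that the hypothesis $\partial_v{}^{(2)}\pi_{AB}|_S\neq 0$ plays no role in your argument for (ii), whereas it is indispensable for the maximality claim. Two minor inaccuracies: \eq{ev_sym} is the evolution equation for the full symmetric part $\overset{(1)}{\zeta}_{(AB)}$ (trace included), not its trace-free part, and \eq{eqn_coeff} arises from the constraint \eq{gen_KID4} together with the integrated divergence relation, not from the transport equation itself.
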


Given the properties of gravitational plane waves concerning MKHs (\cite[subsection~4.4]{mps}) we deduce that such wave solutions are generated by initial data of the type found in Proposition~\ref{prop_KID_equations3}.

\subsection{Vanishing shear}
\label{subsec_vanishing_shear}

 Again we aim  at the construction of MKHs
where the initial data do \emph{not} lead to a bifurcate horizon.
Before doing that, though, let us simplify the KID equations \eq{KID_gen1}-\eq{KID_gen9}
for  a Killing vector for which $\mathcal{H}^+_1$ is a Killing horizon
 in a more general setting.
The  Killing vector needs to satisfy $\zeta_u|_S=\zeta_A|_S=0$ and one checks that
 the system \eq{KID_gen1}-\eq{KID_gen9} becomes (we assume that the gauge conditions
\eq{cond_gauge}-\eq{cond_gauge2} are fulfilled),
%
\begin{eqnarray}
\zeta_u  &\overset{\mathcal{H}^+_1}{=}& 0
\,, \quad
\zeta_{A}\,\overset{\mathcal{H}^+_1}{=}\,0
\,,
\label{KID_deg_hor1}
\\
\text{2nd-order ODE for $\zeta^u|_{\mathcal{H}^+_1}$}\,,&&  \zeta^u|_S=\overset{(0)}\zeta(x^A)\ne 0
\,, \quad
\partial_u\zeta^u \overset{S}{=} \partial_{[u}\zeta_{v]}
\,,
\label{2nd-order1}
\\
\zeta_v  &\overset{N_2^+}{=}& \overset{(0)}\zeta
\,,
\\
\Big(\partial_{v}- \frac{2}{n-1}{}^{(2)}\theta\Big) \zeta_{A}&\overset{N_2^+}{=}&
  2\, {}^{(2)}\pi_{A}{}^B\zeta_{B}
- (\partial_{A}- 2\Gamma^{v}_{vA} ) \overset{(0)}\zeta
\,,
\\
(\widehat\nabla_{(A}\zeta_{B)})_{\mathrm{tf}} &\overset{N_2^+}{=}& \frac{1}{2}\Big(({}^{(2)}\Xi_{AB})_{\mathrm{tf}}+g^{vv} {}^{(2)}\pi_{AB}\Big)\overset{(0)}\zeta
-{}^{(2)}\pi_{AB}\zeta^v
\,,
\\
\text{2nd-order ODE for $\zeta^v|_{N_2^+}$}\,,&& \zeta^v|_S = 0
\,, \quad
 \partial_v\zeta^v \overset{S}{=} -\partial_{[u}\zeta_{v]}-  \partial_ug_{uv}\overset{(0)}\zeta
\, ,
\label{2nd-order2}
\\
\partial_{(u}\zeta_{v)} &\overset{S}{=}&\partial_ug_{uv}\overset{(0)}\zeta
\,,
\label{KID_deg_hor8}
\\
D^A\partial_v\zeta_A
&\overset{S}{=}& \theta_S\partial_{[u}\zeta_{v]}
+\frac{1}{2}\partial_v\mathrm{tr}({}^{(2)}\Xi)\overset{(0)}\zeta
\,,
\nonumber
\\
D_A\partial_{[u}\zeta_{v]}
                      &\overset{S}{=}& 0
\,.
\nonumber
\end{eqnarray}
Using \eq{constr3} we rewrite the last two equations as
\begin{eqnarray}
\Delta_{\gamma}\overset{(0)}\zeta+ \kappa\theta_S
&\overset{S}{=}&
 2\varsigma_AD^A\overset{(0)}\zeta+\overset{(0)}\zeta(D^A-\varsigma^A)\varsigma_A
+\frac{1}{2}\Big(}{^{\gamma}  R - 2\Lambda\Big)\overset{(0)}\zeta
\,,
\\
\partial_{[u}\zeta_{v]}
                      &\overset{S}{=}& \kappa=\mathrm{const.}
\label{KID_deg_hor10}
\end{eqnarray}

Here we choose to analyze the situation where $N_2^+$ has vanishing shear while its expansion
$\theta_S$ on the intersection manifold $S$ as well as $(S,\gamma)$ itself can be arbitrary, i.e. we consider data of the form
\begin{equation}
{}^{(1)}\pi_{AB} \overset{\mathcal{H}^+_1}{=}0
\,, \quad
{}^{(1)}\theta \overset{S}{=}0
\,, \quad
{}^{(2)}\pi_{AB} \overset{N^+_2}{=}0
\,,\quad
{}^{(2)}\theta \overset{S}{=}\theta_S\ne 0
\,.
\label{ass_vanishing_shear1}
\end{equation}
Observe that the 2nd-order ODEs \eq{2nd-order1} and \eq{2nd-order2}
 admit unique solutions  for $\zeta^u|_{\mathcal{H}^+_1}$ and  $\zeta^v|_{N^+_2}$ whose precise form is irrelevant whenever the shear tensors of both
null hypersurfaces vanish, as they only arise in the other equations accompanied by a factor of ${}^{(a)}\pi_{AB}$.
Thus, we do not need to consider the 2nd-order ODEs \eq{2nd-order1} and \eq{2nd-order2}  any further.
Let us further assume that the torsion one form is given by
\begin{equation}
{\bf \varsigma}=\mathrm{d}\log\theta_S
\,.
\label{ass_vanishing_shear2}
\end{equation}
This choice makes the solution
of \eq{constr2} as given in Appendix~\ref{app_kids_deg_van_shear} as simple as possible.
In that Appendix~\ref{app_kids_deg_van_shear} we prove:
\begin{proposition}
\label{prop_KID_equations5}
Consider two smooth hypersurfaces $\mathcal{H}^+_1=\{v=0\}$ and $N_2^+=\{u=0\}$  in an $(n+1)$-dimensional manifold with transverse intersection along a smooth $(n-1)$-dimensional submanifold $S$ in adapted null coordinates and where the gauge conditions \eq{cond_gauge}-\eq{cond_gauge2} are fulfilled.
Consider characteristic initial data
 which satisfy
\begin{equation}
{}^{(1)}\pi_{AB} \overset{\mathcal{H}^+_1}{=}0
\,, \quad
{}^{(1)}\theta \overset{S}{=}0
\,, \quad
{}^{(2)}\pi_{AB} \overset{N^+_2}{=}0
\,,\quad
{}^{(2)}\theta\overset{S}{=}\theta_S\ne 0
\,, \quad
{\bf \varsigma}=\mathrm{d}\log\theta_S
\,.
\end{equation}
Then $\mathcal{H}^+_1=\{v=0\}$ is a multiple Killing horizon of order $m$ if and only if
\begin{align}
({}^{\gamma} R_{AB})_{\mathrm{tf}} =0\label{einstein}
\,,
\end{align}
and the following system admits $m$ independent solutions $(f, \kappa)$  on $S$ with $\kappa=\mathrm{const.}$,
%
\begin{align}
(D_{A}D_{B}f)_{\mathrm{tf}}&=0
\,,
\\
(  D_{(A}\theta_SD_{B)}f )_{\mathrm{tf}} &=0
\\
\Delta_{\gamma} f
+ \kappa
&= \frac{1}{2 }\Big({}^{\gamma} R - 2\Lambda\Big)f
\,.
\end{align}
The  multiple Killing horizon is fully degenerate if and only if all solutions satisfy  $\kappa=0$.
\end{proposition}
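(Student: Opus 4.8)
The plan is to take the reduced system \eq{KID_deg_hor1}--\eq{KID_deg_hor10}, which already builds in that $\mathcal{H}^+_1$ is a Killing horizon of the candidate field $\zeta$, and specialise it to the vanishing-shear data \eq{ass_vanishing_shear1} and the torsion choice \eq{ass_vanishing_shear2}. The candidate is parametrised on $S$ by the single nonzero component $f:=\overset{(0)}\zeta=\zeta_v|_S$ and by the constant $\kappa=\partial_{[u}\zeta_{v]}|_S$ of \eq{KID_deg_hor10}. As observed just after \eq{ass_vanishing_shear2}, the two transverse second-order ODEs \eq{2nd-order1} and \eq{2nd-order2} for $\zeta^u|_{\mathcal{H}^+_1}$ and $\zeta^v|_{N_2^+}$ couple back into the remaining equations only through factors of ${}^{(a)}\pi_{AB}$; since both shears vanish they decouple, possess unique solutions, and may be set aside. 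Hence everything reduces to the transport of $\zeta_A$ along $N_2^+$, the trace-free symmetrised-gradient constraint on $N_2^+$, and the scalar equation on $S$.

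First I would integrate the constraint \eq{constr2} for $\Gamma^v_{vA}$ along $N_2^+$, whose datum at $S$ is fixed by \eq{torsion_forms_S} to $\varsigma_A=\partial_A\log\theta_S$; the choice \eq{ass_vanishing_shear2} is exactly what renders this solution elementary. The transverse equation $(\partial_v-\tfrac{2}{n-1}{}^{(2)}\theta)\zeta_A=-(\partial_A-2\Gamma^v_{vA})f$ is then integrated from $\zeta_A|_S=0$. Inserting $\varsigma=d\log\theta_S$ into the scalar ($\Delta_\gamma\overset{(0)}\zeta$) line of the reduced system collapses it to $\Delta_\gamma f+\kappa=\tfrac12({}^{\gamma}R-2\Lambda)f$, which is the third displayed equation of the Proposition.

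The main obstacle is the trace-free constraint $(\widehat\nabla_{(A}\zeta_{B)})_{\mathrm{tf}}\overset{N_2^+}{=}\tfrac12({}^{(2)}\Xi_{AB})_{\mathrm{tf}}f$ coming from \eq{KID_gen3}. As recalled in the paragraph after Remark~\ref{rem}, in the bifurcate case its $v$-derivatives vanished identically, so it was a condition at $S$ alone; here that is no longer automatic and one must force it to propagate along $N_2^+$. The plan is to impose it together with its successive $v$-derivatives at $S$: the value at $S$ is vacuous because $\zeta_A|_S=0$ and $({}^{(2)}\Xi_{AB})_{\mathrm{tf}}|_S=-\tfrac{2}{\nu}({}^{(1)}K_{AB})_{\mathrm{tf}}|_S=0$, while the first $v$-derivative, using the transport equation for $\zeta_A$ together with the shear-evolution equation \eq{constr4} for $({}^{(2)}\Xi_{AB})_{\mathrm{tf}}$, produces a relation of the schematic form $(D_AD_Bf)_{\mathrm{tf}}-2(\theta_S^{-1}D_{(A}\theta_S D_{B)}f)_{\mathrm{tf}}=f\,T_{AB}$, with $T_{AB}$ built solely from the data. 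Disentangling this relation and the next $v$-derivative, and requiring that it hold throughout the solution space, is what yields simultaneously the two trace-free conditions $(D_AD_Bf)_{\mathrm{tf}}=0$, $(D_{(A}\theta_SD_{B)}f)_{\mathrm{tf}}=0$ on $f$ and the Einstein condition $({}^{\gamma}R_{AB})_{\mathrm{tf}}=0$ on the data. I expect this order-by-order bookkeeping, and the verification that no further independent conditions appear at higher order, to be the genuinely laborious part carried out in Appendix~\ref{app_kids_deg_van_shear}.

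For the converse I would reverse the construction: assuming $({}^{\gamma}R_{AB})_{\mathrm{tf}}=0$ and a solution pair $(f,\kappa)$ of the three equations, I rebuild $\zeta$ on $\mathcal{H}^+_1\cup N_2^+$ by integrating the decoupled ODEs, check that every KID equation of \cite{KIDs} is satisfied, and conclude via that result that $\zeta$ extends to a spacetime Killing vector with $\mathcal{H}^+_1$ as a Killing horizon; the order $m$ is then the dimension of the space of admissible $(f,\kappa)$. Finally, because $\kappa=\partial_{[u}\zeta_{v]}|_S$ is precisely the surface gravity of $\mathcal{H}^+_1$ relative to the Killing vector it generates (cf.\ item (i) of Theorem~\ref{theorem_fund_form}), the MKH is fully degenerate exactly when every solution has $\kappa=0$, which is the last assertion.
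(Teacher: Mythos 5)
Your proposal follows the same route as the paper's Appendix~\ref{app_kids_deg_van_shear}: specialize the reduced system \eq{KID_deg_hor1}--\eq{KID_deg_hor10} to the data \eq{ass_vanishing_shear1}--\eq{ass_vanishing_shear2}, set aside the transverse ODEs \eq{2nd-order1}--\eq{2nd-order2} since they re-enter only through shear factors, integrate the $\zeta_A$-transport equation from $\zeta_A|_S=0$, isolate the trace-free constraint on $N_2^+$ (the paper's \eq{crucial_eqn}) as the sole obstruction, extract conditions from its successive $v$-derivatives at $S$, and then verify sufficiency. However, there is one concrete error that would prevent you from arriving at the Proposition as stated: you identify $f$ with $\zeta_v|_S=\overset{(0)}{\zeta}$, whereas the Proposition's $f$ is $\theta_S^{-1}\overset{(0)}{\zeta}$ (the paper makes this substitution only in the last line of the appendix). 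Indeed, with $\bm{\varsigma}=\mathrm{d}\log\theta_S$ the scalar equation of the reduced system reads
\begin{equation*}
\Delta_{\gamma}\overset{(0)}{\zeta}+\kappa\,\theta_S
\overset{S}{=}2\varsigma_AD^A\overset{(0)}{\zeta}
+\overset{(0)}{\zeta}\,(D^A-\varsigma^A)\varsigma_A
+\frac{1}{2}\big({}^{\gamma}R-2\Lambda\big)\overset{(0)}{\zeta}\,,
\end{equation*}
and since $\theta_S$ is here an arbitrary non-constant function, neither the factor $\theta_S$ multiplying $\kappa$ nor the first-order $\varsigma$-terms disappear if you keep $f=\overset{(0)}{\zeta}$. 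They cancel precisely after setting $\overset{(0)}{\zeta}=\theta_S f$, via the identity $\Delta_{\gamma}\psi-2\varsigma^AD_A\psi-\psi(D^A-\varsigma^A)\varsigma_A=\theta_S\Delta_{\gamma}(\theta_S^{-1}\psi)$ valid for $\bm{\varsigma}=\mathrm{d}\log\theta_S$; only then does the equation collapse to $\Delta_{\gamma}f+\kappa=\frac{1}{2}({}^{\gamma}R-2\Lambda)f$. The same rescaling governs the trace-free conditions: the first $v$-derivative of \eq{crucial_eqn} gives \eq{cond_KV}, i.e.\ $(D_AD_B(\theta_S^{-1}\overset{(0)}{\zeta}))_{\mathrm{tf}}=\frac{1}{2}(\theta_S^{-1}\overset{(0)}{\zeta})({}^{\gamma}R_{AB})_{\mathrm{tf}}$, which is not a condition on $\overset{(0)}{\zeta}$ alone. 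Since $f\mapsto\theta_S f$ is a bijection, the order $m$ you would count is unaffected, but the three displayed equations of the Proposition are simply not satisfied by your variable.

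A smaller inaccuracy: you expect the first and second $v$-derivatives of the constraint to deliver all three conditions. The paper needs three orders: the first gives \eq{cond_KV}; the second gives $\overset{(0)}{\zeta}\,({}^{\gamma}R_{AB})_{\mathrm{tf}}=0$, hence \eq{einstein} because a MKH requires a solution with $\overset{(0)}{\zeta}$ not identically zero, and this in turn reduces \eq{cond_KV} to $(D_AD_Bf)_{\mathrm{tf}}=0$; only the third derivative yields $(D_{(A}\theta_SD_{B)}f)_{\mathrm{tf}}=0$. Your hedge about order-by-order bookkeeping covers this, and your sufficiency step (re-inserting the explicit $({}^{(2)}\Xi_{AB})_{\mathrm{tf}}$ and $\zeta_A$ into the constraint and checking that it holds on all of $N_2^+$) is exactly the paper's, as is the identification of $\kappa$ with the surface gravity for the full-degeneracy statement. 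So the argument goes through once the identification of $f$ is corrected.
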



\begin{remark}
{\rm
At this stage, it is not clear to us whether or not there exist data for which Proposition  \ref{prop_KID_equations5} produces fully degenerate MKHs.
}
\end{remark}

\section*{Acknowledgments}
MM acknowledges financial support under projects
FIS2015-65140-P (Spanish MINECO/FEDER) and
SA083P17 (Junta de Castilla y Le\'on).
TTP acknowledges financial support by the Austrian Science Fund (FWF)
P~28495-N27.
JMMS is supported under Grants No. FIS2017-85076-P (Spanish MINECO/AEI/FEDER, UE) and No. IT956-16 (Basque Government).

\appendix

\section{Some calculations relevant for Section~\ref{sec_fully_deg}}

\subsection{Proof of Proposition~\ref{prop_KID_equations2}}
\label{app_proof1}

In order to establish Proposition~\ref{prop_KID_equations2}
let us start with data of the form
\begin{equation}
{}^{(1)}\pi_{AB} \overset{\mathcal{H}^+_1}{=}0
\,, \quad
{}^{(1)}\theta \overset{S}{=}0
\,, \quad
\partial_A|{}^{(2)}\pi|^2\overset{N^+_2}{=}0
\,, \quad
\widehat\nabla_B{}^{(2)}\pi_A{}^B \overset{N^+_2}{=}0
\,, \quad
{}^{(2)}\theta\overset{S}{=:}\theta_S= \mathrm{const.}
\,,
\label{restriction_data1}
\end{equation}
and let us further assume that
\begin{equation}
\widehat  R_{AB} =\frac{2}{n-1}\Lambda g_{AB} \,, \quad {\bf \varsigma}=0
\,.
\label{restriction_data2}
\end{equation}
As the data will be more restricted later on, we do not care so much how data which fulfill \eq{restriction_data1} are constructed
from a family of Riemannian metrics as in Theorem~\ref{thm_Cauchy_problem}.

We analyze the KID equations \eq{KID_gen1}-\eq{KID_gen9} for this class of data, where we again assume a gauge where
\eq{cond_gauge}-\eq{cond_gauge2} holds,
\begin{equation}
\Gamma^u_{uu} \overset{\mathcal{H}^+_1} {=} 0\,, \quad
\Gamma^v_{vv}\overset{N_2^+}{ =} 0\,, \quad
g_{12}\overset{S}{=}1
\,.
\label{cond_gauge_again}
\end{equation}
First of all we observe that the constraint equations \eq{constr1}-\eq{constr4} yield,
\begin{align}
{}^{(1)}\theta\overset{\mathcal{H}^+_1}{=}& 0
\,,
\quad
\Gamma^u_{uA}  \overset{\mathcal{H}^+_1}{=}0
\,,
\\
\mathrm{tr}({}^{(1)}\Xi)   \overset{\mathcal{H}^+_1}{=}& -2\theta_S
\,,
\quad
({}^{(1)}\Xi_{AB})_{\mathrm{tr}} \overset{\mathcal{H}^+_1}{=} -2{}^{(2)}\pi_{AB}|_S
\,,
\\
\Big(\partial_v+\frac{{}^{(2)}\theta}{n-1}\Big) {}^{(2)}\theta \overset{N_2^+}{=}& -|{}^{(2)}\pi|^2
\,, \quad {}^{(2)}\theta \overset{S}{=} \theta_S
\\
\Gamma^v_{vA} \overset{N_2^+}{=}&0
\,,
\quad
{}^{(2)}\Xi_{AB}
\overset{N_2^+}{=}0
\label{constr_tr_Xi}
\,.
\end{align}
Moreover, a  computation reveals that
\begin{align*}
R_{Auu}{}^u\overset{\mathcal{H}^+_1}{=}& 0
\,,
\\
R_{vuu}{}^u\overset{\mathcal{H}^+_1}{=}& R_{uv} -R_{v Au}{}^A  \overset{\mathcal{H}^+_1}{=}
-\frac{1}{2}g_{uv}\Big(\partial_u\mathrm{tr}({}^{(1)}\Xi)+{}^{(1)}K^{AB}{}^{(1)}\Xi_{AB}-\frac{4}{n-1}\Lambda \Big)
\overset{\mathcal{H}^+_1}{=}
\frac{2 \Lambda}{n-1} g_{uv}
\,,
\\
R_{Avv}{}^v\overset{N_2^+}{=}& 0
\,,
\\
R_{uvv}{}^v\overset{N_2^+}{=}& R_{uv} -R_{u Av}{}^A
\overset{N_2^+}{=}
\frac{2}{n-1}\Lambda g_{uv}
\,.
\end{align*}
%
%
%
%

The KID equations \eq{KID_gen1}-\eq{KID_gen4} on $\mathcal{H}^+_1$ become
\begin{align}
\partial_u\zeta_u \overset{\mathcal{H}^+_1}{=}&0
\,,
\label{KID_deg1}
\\
\partial_{u} \zeta_{A}\overset{\mathcal{H}^+_1}{=}& -\partial_{A} \zeta_{u}
\,,
\label{KID_deg2}
\\
\partial_u\partial_u \big(\zeta^u -\frac{1}{2}g^{uu}\zeta_u\big)\overset{\mathcal{H}^+_1}{=}& \frac{2}{n-1}\Lambda \zeta_u
\,,
\label{KID_deg3}
\\
(\widehat\nabla_{(A}\zeta_{B)} )_{\mathrm{tf}}\overset{\mathcal{H}^+_1}{=}& -{}^{(2)}\pi_{AB}|_S\zeta_{u}
\,.
\label{KID_deg4}
\end{align}
On $N_2^+$ they read
\begin{align}
\partial_v\zeta_v \overset{N_2^+}{=}&0
\,,
\label{KID_deg5}
\\
\Big(\partial_{v} - \frac{2}{n-1}{}^{(2)}\theta\Big)\zeta_A
-2{}^{(2)}\pi_{A}{}^B\zeta_B\overset{N_2^+}{=}&  -\partial_{A} \zeta_{v}
\,,
\label{KID_deg6}
\\
\partial_v\partial_v \big(\zeta^v -\frac{1}{2}g^{vv}\zeta_v\big) \overset{N_2^+}{=}&\frac{2}{n-1}\Lambda \zeta_v
\,,
\label{KID_deg7}
\\
(\widehat\nabla_{(A}\zeta_{B)} )_{\mathrm{tf}}
\overset{N_2^+}{=}&
-{}^{(2)}\pi_{AB}\big(\zeta^v -\frac{1}{2}g^{vv}\zeta_v\big)
\,.
\label{KID_deg8}
\end{align}
For the KID equations  \eq{KID_gen5}-\eq{KID_gen9} on $S$ a computation shows
(note that $ R_{vuA}{}^{\mu} \overset{S}{=}0$)
\begin{align}
\partial_{(u}\zeta_{v)}\overset{S}{=}& \Gamma^{u}_{uv}\zeta_{u}+ \Gamma^{v}_{uv}\zeta_{v}
\,,
\label{KID_deg9}
\\
D^A\zeta_A\overset{S}{=}& -\theta_S\zeta_{u}
\,,
\label{KID_deg10}
\\
\Delta_{\gamma}\zeta_{u}\overset{S}{=}& 0
\,,
\label{KID_deg11}
\\
 \Delta_{\gamma}\zeta_{v}\overset{S}{=}&
-\theta_S\partial_{[u}\zeta_{v]}
-\frac{\theta_S^2}{n-1}\zeta_{u}
-|{}^{(2)}\pi|^2 \zeta_{u}
\,,
\label{KID_deg12}
\\
\partial_A\partial_{[u}\zeta_{v]}  \overset{S}{=}& -\frac{\theta_S}{n-1} \partial_{A}\zeta_{u}
-{}^{(2)}\pi_{A}{}^B\partial_B\zeta_u
\,.
\label{KID_deg13}
\end{align}

Given data $\zeta|_S$, $\partial_{(u}\zeta_{v)}|_S$ and $\partial_{[u}\zeta_{v]}|_S$
the equations \eq{KID_deg1}-\eq{KID_deg3} and \eq{KID_deg5}-\eq{KID_deg7} can be integrated.
The data $\partial_{(u}\zeta_{v)}|_S$ and $\partial_{[u}\zeta_{v]}|_S$ are  determined by \eq{KID_deg9} and \eq{KID_deg13},
the latter one up to  an additive constant. 
We further observe that  the second-order $u$-derivative of \eq{KID_deg4} vanishes whence it simplifies to
\begin{equation}
(D_{(A}\zeta_{B)} )_{\mathrm{tf}}\overset{S}{=}-{}^{(2)}\pi_{AB}\zeta_{u}
\,,\quad
(D_{A}D_B\zeta_{u})_{\mathrm{tf}}\overset{S}{=} 0
\,,
\end{equation}
and the first equation is contained in \eq{KID_deg8}.

As  the main problematic equation remains  \eq{KID_deg8}.
We do not want to impose  a condition on ${}^{(2)}\pi_{AB}$ which restricts its $v$-dependence.
We rather  want to make sure that $\zeta^v -\frac{1}{2}g^{vv}\zeta_v\overset{N_2^+}{=}0$  is a   solution of \eq{KID_deg7}
to get rid of  ${}^{(2)}\pi_{AB}$ in \eq{KID_deg8}.
Since we want $\zeta_v|_S\ne 0$ in order to get a non-trivial Killing vector we are led to the condition
$\Lambda=0$.
By  \eq{restriction_data2} this requires   $\widehat R_{AB}\overset{N_2^+}{=}0$, and the latter one holds
if $\partial_C{}^{(2)}\pi_{AB}=0$ and $\partial_C g_{AB}\overset{S}{=}0$, as then $g_{AB}|_{N_2^+}$ is independent of $x^C$. This suggests to  assume that $(S, \gamma)$ is the  Euclidean space.
 Proposition~\ref{prop_KID_equations2}  now follows immediately
(note that the conditions  on the initial data there  imply \eq{restriction_data1}-\eq{restriction_data2})

%

\subsection{Proof of Proposition~\ref{prop_KID_equations4}}
\label{app_number_killings}

Let us determine how many independent Killing vectors the spacetimes generated in Proposition~\ref{prop_KID_equations2} have.
Set $\mathfrak{f} :=-  \big(\zeta^v -\frac{1}{2}g^{vv}\zeta_v\big)$
 and note that $\zeta_u|_S =-\mathfrak{f} $.
For data as considered in Proposition~\ref{prop_KID_equations3} the KID equations read for data $(\zeta|_S, \alpha_{\zeta}=\mathrm{const.})$ for the Killing vector,%
%
\begin{eqnarray}
\partial_v\zeta_v &\overset{N_2^+}{=}&0
\,,
\label{deg_KIDs1}
\\
\Big(\partial_{v} - \frac{2}{n-1}{}^{(2)}\theta\Big)\zeta_A
-2{}^{(2)}\pi_{A}{}^B\zeta_B  &\overset{N_2^+}{=}& -\partial_{A} \zeta_{v}
\,,
\\
\partial_v\partial_v \mathfrak{f}  &\overset{N_2^+}{=}&0
\,,
\\
(\partial_{(A}\zeta_{B)} )_{\mathrm{tf}}
 &\overset{N_2^+}{=}&
{}^{(2)}\pi_{AB}\mathfrak{f}
\,,
\\
\partial_{A}\partial_B\mathfrak{f} &\overset{S}{=}& 0
\,,
\\
D^A\zeta_A &\overset{S}{=}& \theta_S\mathfrak{f}
\,,
\\
 \Delta_{\gamma}\zeta_{v} &\overset{S}{=}&
-\theta_S\widehat\kappa_{\zeta}
\,,
\\
 \partial_v \mathfrak{f} &\overset{S}{=}& \frac{\theta_S}{n-1}\mathfrak{f}  + \alpha_{\zeta}
\,.
\label{deg_KIDs8}
\end{eqnarray}
These equations are supplemented by an algebraic equation for $\partial_u\zeta_v|_S$ and the
equations  \eq{KID_deg1}-\eq{KID_deg3}
for the evolution on $\mathcal{H}^+_1$, which, given the data on $S$, always admits unique solutions.

Equivalently,
\begin{eqnarray}
\zeta_v&\overset{N_2^+}{=}&\overset{(0)}{\zeta}_v(x^A)
\,,
\\
\Big(\partial_{v} - \frac{2}{n-1}{}^{(2)}\theta\Big)\zeta_A
-2{}^{(2)}\pi_{A}{}^B\zeta_B&\overset{N_2^+}{=}&  -\partial_{A} \overset{(0)}{\zeta}_v
\,,
\label{gen_KID2}
\\
\mathfrak{f} &\overset{N_2^+}{=}& c+c_A x^A +\Big(\frac{\theta_S}{n-1}\big(c+c_A x^A \big)+ \alpha_{\zeta}\Big)v
\,,
\\
(\partial_{(A}\zeta_{B)} )_{\mathrm{tf}}
&\overset{N_2^+}{=}&
{}^{(2)}\pi_{AB}\mathfrak{f}
\,,
\label{gen_KID4}
\\
D^A\zeta_A &\overset{S}{=}&\theta_S(c+c_A x^A)
\,,
\label{gen_KID4B}
\\
 \Delta_{\gamma}\overset{(0)}{\zeta}_v &\overset{S}{=}&
-\theta_S\alpha_{\zeta}
\,.
\label{gen_KID5}
\end{eqnarray}
where $c$ and the $c_A$'s are constants.
The $v$-derivative of \eq{gen_KID4} yields with \eq{gen_KID2} and \eq{gen_KID5}
\begin{equation}
\partial_{A}\partial_B\overset{(0)}{\zeta}_v
\overset{S}{=}
\frac{\theta_S}{n-1}\alpha_{\zeta}\gamma_{AB} -(c+c_C x^C)\partial_v{}^{(2)}\pi_{AB}
\,.
\end{equation}
Taking the $x^C$-derivative yields with   \eq{gen_KID5}
\begin{equation}
c^B\partial_v{}^{(2)}\pi_{AB}
\overset{S}{=}0
\,,\quad
c_{[A}\partial_v{}^{(2)}\pi_{B]C}
\overset{S}{=}0
\,,
\end{equation}
i.e
\begin{equation}
0\overset{S}{=}2\partial_v {}^{(2)}\pi^{AB}\partial_v {}^{(2)}\pi_{A[B}c_{C]}\overset{S}{=}
| \partial_v{}^{(2)}\pi|^2c_C
- \partial_v {}^{(2)}\pi_{AC} \underbrace{\partial_v{}^{(2)}\pi^{AB}c_B}_{=0}
\,.
\end{equation}
%
As, by assumption, $| \partial_v{}^{(2)}\pi|^2|_S\ne 0$ we necessarily have
\begin{equation}
 c_A=0
\,.
\end{equation}
By \eq{gen_KID4} that yields
\begin{equation}
\partial_C(\partial_{(A}\zeta_{B)} )_{\mathrm{tf}} \overset{N_2^+}{=}0
\,.
\label{deriv_tf_relation}
\end{equation}
Next, we evaluate the $x^B$-derivative of \eq{gen_KID2}   on $S$ and take its symmetric trace-free part.
Using \eq{gen_KID4}, we obtain
\begin{equation}
\partial_{v} (\partial_{(A}\zeta_{B)})_{\mathrm{tf}}
\overset{S}{=}  -(\partial_{A}\partial_B\overset{(0)}{\zeta}_v )_{\mathrm{tf}}
\,.
\end{equation}
Combining both equations and using \eq{gen_KID5} we obtain
\begin{equation}
\zeta_{v}\overset{N_2^+}{=} d+ d_Ax^A+ d_{AB} x^Ax^B
\,,
\quad \gamma^{AB}d_{AB}=-\theta_S\alpha_{\zeta}
\,, \quad d_{[AB]}=0
\,,
\end{equation}
in particular,
\begin{equation}
\widehat\Delta  \zeta_v\overset{N_2^+}{=} g^{AB}d_{AB}\overset{N_2^+}{=} -\theta_S\alpha_{\zeta} e^{-\frac{2}{n-1}\int{}^{(2)}\theta\mathrm{d}v}
\,.
\end{equation}
The divergence of \eq{gen_KID2} then gives
\begin{equation}
\partial_{v}(g^{AB} \partial_{A}\zeta_B)
\overset{N_2^+}{=}   -\widehat\Delta \overset{(0)}{\zeta}
\overset{N_2^+}{=}   \theta_S\alpha_{\zeta} e^{-\frac{2}{n-1}\int{}^{(2)}\theta\mathrm{d}v}
\,,
\label{v_deriv_div}
\end{equation}
which we integrate with initial data given by
 \eq{gen_KID4B},
\begin{equation}
g^{AB} \partial_{A}\zeta_B
\overset{N_2^+}{=}   \theta_S\Big( c +\alpha_{\zeta}\int_0^v e^{-\frac{2}{n-1}\int{}^{(2)}\theta\mathrm{d}v}\mathrm{d}v\Big)
\,.
\label{divergence}
\end{equation}
As the right-hand side does not depend on the $x^A$-coordinates it follows with  \eq{deriv_tf_relation}
that
\begin{equation}
\partial_C\partial_{(A}\zeta_{B)}\overset{N_2^+}{=}0
\,.
\end{equation}
By taking the anti-symmetric part w.r.t.\ $(AC)$ we  find that also the $x^C$-derivatives of $\partial_{[A}\zeta_{B]}$
need to vanish whence
\begin{equation}
\partial_A\partial_{B}\zeta_{C} \overset{N_2^+}{=}0
\quad \Longrightarrow
\quad
\zeta_{A} \overset{N_2^+}{=}\overset{(0)}\zeta_A(v)+ \overset{(1)}\zeta_{AB}(v)x^B
\,,
\end{equation}
where, by \eq{gen_KID4} and \eq{divergence},
\begin{eqnarray}
 \overset{(1)}\zeta_{(AB)}(v) \overset{N_2^+}{=}  \frac{\theta_S}{n-1}\Big( c +\alpha_{\zeta}\int_0^v e^{-\frac{2}{n-1}\int{}^{(2)}\theta\mathrm{d}v}\mathrm{d}v\Big)g_{AB}
+\Big[ c+\Big(\frac{\theta_S}{n-1}c+ \alpha_{\zeta}\Big)v\Big]{}^{(2)}\pi_{AB}
\,.
\label{eqn_coeff}
\end{eqnarray}
We need to make sure that \eq{gen_KID2}  is fulfilled.
 If we plug in the corresponding expressions we have found for $\zeta_A$ and $\zeta_v$ we find that \eq{gen_KID2}  is equivalent to the
following ODE system,
\begin{eqnarray}
\Big(\partial_{v} - \frac{2}{n-1}{}^{(2)}\theta\Big)\overset{(0)}\zeta_A(v)
-2{}^{(2)}\pi_{A}{}^B\overset{(0)}\zeta_B(v)& \overset{N_2^+}{=}&  -  d_A
\,,
\label{ODE_zetaA0}
\\
\Big(\partial_{v} - \frac{2}{n-1}{}^{(2)}\theta\Big) \overset{(1)}\zeta_{AB}(v)
-2{}^{(2)}\pi_{A}{}^C \overset{(1)}\zeta_{CB}(v)& \overset{N_2^+}{=}&  - d_{AB}
\,.
\end{eqnarray}
%
%
Because of \eq{eqn_coeff}  it is useful to split the latter equation into symmetric and anti-symmetric part,%
\begin{align}
\Big(\partial_{v} - \frac{2}{n-1}{}^{(2)}\theta\Big) \overset{(1)}\zeta_{[AB]}(v)
-{}^{(2)}\pi_{A}{}^C \overset{(1)}\zeta_{[CB]}(v)
+ {}^{(2)}\pi_{B}{}^C \overset{(1)}\zeta_{[CA]}(v)
\nonumber
&&
\\
-{}^{(2)}\pi_{A}{}^C \overset{(1)}\zeta_{(BC)}(v)
+ {}^{(2)}\pi_{B}{}^C \overset{(1)}\zeta_{(AC)}(v)
  \overset{N_2^+}{=}&0
\,,
\label{ev_anti_sym}
\\
\Big(\partial_{v} - \frac{2}{n-1}{}^{(2)}\theta\Big) \overset{(1)}\zeta_{(AB)}(v)
-{}^{(2)}\pi_{A}{}^C \overset{(1)}\zeta_{(CB)}(v)
-{}^{(2)}\pi_{B}{}^C \overset{(1)}\zeta_{(CA)}(v)
\nonumber
&
\\
-{}^{(2)}\pi_{A}{}^C \overset{(1)}\zeta_{[CB]}(v)
-{}^{(2)}\pi_{B}{}^C \overset{(1)}\zeta_{[CA]}(v)
 \overset{N_2^+}{=}&  - d_{AB}
\,.
\label{ev_sym}
\end{align}
We observe that the last equation requires
\begin{align}
d_{AB}=& -\partial_{v} \overset{(1)}\zeta_{(AB)}(0)+ \frac{2}{n-1}\theta_S \overset{(1)}\zeta_{(AB)}(0)
\,,
\\
0 =& \partial_v\partial_{v} \overset{(1)}\zeta_{(AB)}(0)+ \frac{2}{(n-1)^2}(\theta_S)^2 \overset{(1)}\zeta_{(AB)}(0)
- \frac{2}{n-1}\theta_S\partial_v \overset{(1)}\zeta_{(AB)}(0)
\nonumber
\\
&
-\partial_v{}^{(2)}\pi_{A}{}^C \overset{(1)}\zeta_{(CB)}(0)
-\partial_v{}^{(2)}\pi_{B}{}^C \overset{(1)}\zeta_{(CA)}(0)
\nonumber
\\
&
-\partial_v{}^{(2)}\pi_{A}{}^C \overset{(1)}\zeta_{[CB]}(0)
-\partial_v{}^{(2)}\pi_{B}{}^C \overset{(1)}\zeta_{[CA]}(0)
\,.
\end{align}
As we have
\begin{align}
 \overset{(1)}\zeta_{(AB)}(0)=&c  \frac{\theta_S}{n-1}\gamma_{AB}
\,,
\\
\partial_v \overset{(1)}\zeta_{(AB)}(0) =&
\alpha_{\zeta} \frac{\theta_S}{n-1} \gamma_{AB}
+2c\Big( \frac{\theta_S}{n-1}\Big)^2 \gamma_{AB}
+c \partial_v{}^{(2)}\pi_{AB}
\,,
\\
\partial_v \partial_v  \overset{(1)}\zeta_{(AB)}(0)=&
 2\Big(\frac{\theta_S}{n-1}\Big)^2\alpha_{\zeta}\gamma_{AB}
+2c\Big(\frac{\theta_S}{n-1}\Big)^3\gamma_{AB}
\nonumber
\\
&
+2\Big(2c\frac{\theta_S}{n-1}+ \alpha_{\zeta}\Big)\partial_v{}^{(2)}\pi_{AB}
+ c\partial_v\partial_v{}^{(2)}\pi_{AB}
\,,
\end{align}
that yields
\begin{align}
d_{AB} \overset{S}{=}& -\alpha_{\zeta} \frac{\theta_S}{n-1} \gamma_{AB}
-c \partial_v{}^{(2)}\pi_{AB}
\,,
\\
2 \alpha_{\zeta}\partial_v{}^{(2)}\pi_{AB}\overset{S}{=}&
\partial_v{}^{(2)}\pi_{A}{}^D \overset{(1)}\zeta_{[DB]}(0)
+\partial_v{}^{(2)}\pi_{B}{}^D \overset{(1)}\zeta_{[DA]}(0)
-c\partial_v\partial_v{}^{(2)}\pi_{AB}
\,.
\label{equation_kappa}
\end{align}
To conclude, the data
\begin{eqnarray}
\zeta_{v}&\overset{S}{=}& d+ d_Ax^A - \Big(\alpha_{\zeta} \frac{\theta_S}{n-1} \gamma_{AB}
+c \partial_v{}^{(2)}\pi_{AB}\Big) x^Ax^B
\,,
\\
\zeta_A&\overset{S}{=}& \overset{(0)}\zeta_A+\Big( \overset{(1)}\zeta_{[AB]}+c  \frac{\theta_S}{n-1}\gamma_{AB}\Big)x^B
\\
\mathfrak{f}&\overset{N_2^+}{=} &c+\Big(c\frac{\theta_S}{n-1}+  \alpha_{\zeta}\Big)v
\,,
\end{eqnarray}
where $d$, $d_A$,  $\overset{(0)}\zeta_A|_S$, $\overset{(1)}\zeta_{[AB]}|_S$ and $c$ are constant on $S$, and where
 $ \alpha_{\zeta} $  is determined  by \eq{equation_kappa},
 exhaust all candidates to generate  a unique solution to  \eq{deg_KIDs1}-\eq{deg_KIDs8}.
They do whenever  \eq{ev_sym} holds  with $\overset{(1)}\zeta_{[AB]}(v)$ determined by \eq{ev_anti_sym}
and $\overset{(1)}\zeta_{(AB)}(v)$ as given by \eq{eqn_coeff}.
In that case $\overset{(0)}\zeta_A(v)$ is determined by \eq{ODE_zetaA0}.

This will certainly be the case for $c=\overset{(1)}\zeta_{[AB]}(0)=0$ (which implies $\alpha_{\zeta}=0$),
whence
we have at least $2n-1$ independent  solutions which are parameterized by $d$, $d_A$ and $ \overset{(0)}\zeta_A(0)$,
while for specific choices of the data ${}^{(2)}\pi_{AB}$ and  $\theta_S$  there might be $\frac{1}{2}(n-1)(n-2)+1$
 additional Killing vectors. This completes the proof of Proposition~\ref{prop_KID_equations4}.

\subsection{Proof of Proposition~\ref{prop_KID_equations5}}
\label{app_kids_deg_van_shear}

First of all we observe that the constraint equations \eq{constr1}-\eq{constr4} yield
(recall our assumptions \eq{ass_vanishing_shear1}-\eq{ass_vanishing_shear2} on the initial data)
%
\begin{eqnarray*}
{}^{(1)}\theta&\overset{\mathcal{H}^+_1}{=}& 0
\,,
\\
\Gamma^u_{uA}  &\overset{\mathcal{H}^+_1}{=}&-\varsigma_A(x^B)
\,,
\\
{}^{(2)}\theta &\overset{N_2^+}{=}& \Big(\frac{v}{n-1} +\frac{1}{\theta_S}\Big)^{-1}
\,,
\\
g_{AB} &\overset{N_2^+}{=}&  
 \Big(1 + \frac{\theta_S}{n-1} v\Big)^2 \gamma_{AB}
\,,
\\
\Gamma^v_{vA} &\overset{N_2^+}{=}&
\frac{n-1 }{\theta_S(n-1+\theta_S  v)}\partial_A\theta_S
\,.
\end{eqnarray*}
Taking the behavior of the Ricci tensor under conformal transformations into account  that yields
%
\begin{eqnarray*}
(\widehat R_{AB})_{\mathrm{tf}}&\overset{N_2^+}{=}&  ({}^{\gamma} R_{AB})_{\mathrm{tf}} -\frac{(n-3)v }{n-1+ \theta_S v}(D_AD_B\theta_S)_{\mathrm{tf}}
\\
&&
+2 (n-3)\Big(\frac{v}{n-1 + \theta_S v}\Big)^2(D_A \theta_SD_B \theta_S)_{\mathrm{tf}}
\,,
\\
\widehat R &\overset{N_2^+}{=}& \Big(\frac{n-1}{n-1 + \theta_S v}\Big)^2 \Big({}^{\gamma}  R -\frac{2(n-2)v }{n-1+ \theta_S v}\Delta_{\gamma}\theta_S
\\
&&
-(n-2)(n-5)\Big(\frac{v}{n-1 + \theta_S v}\Big)^2D_A \theta_SD^A \theta_S\Big)
\,.
\end{eqnarray*}
We then deduce that $^{(2)}\Xi_{AB}$ is determined by the following equations, with trivial initial data on $S$,
\begin{eqnarray*}
\Big(\partial_v+  \Big(\frac{v}{n-1} +\frac{1}{\theta_S}\Big)^{-1}\Big)\mathrm{tr}({}^{(2)}\Xi)
 - 2(n-1)^2 \frac{n-1+ (n-2)v\theta_S}{(n-1 + \theta_S v)^3} \frac{\Delta_{\gamma}\theta_S}{\theta_S}
&&
\\
-(n-1)^2(n-5)v\frac{2(n-1) + (n-2)v\theta_S}{(n-1 + \theta_S v)^4} \frac{ |D \theta_S|^2}{\theta_S}
+\frac{(n-1)^2}{(n-1 + \theta_S v)^2} {}^{\gamma}  R
&\overset{N_2^+}{=}& 2\Lambda
\,,
\\
\Big(\partial_{v}+\frac{n-5}{2(n-1)}\Big(\frac{v}{n-1} +\frac{1}{\theta_S}\Big)^{-1} \Big)({}^{(2)}\Xi_{AB})_{\mathrm{tr}}
-\frac{2(n-1 )+(n-3)v\theta_S}{\theta_S(n-1+\theta_S  v)}( D_{A} D_{B}\theta_S)_{\mathrm{tf}}\hspace{-5em}
&&
\\
+\frac{8(n-1)   v+ 2 (n-3)v^2\theta_S}{\theta_S(n-1+\theta_S  v)^2}( D_{A}\theta_SD_{B}\theta_S)_{\mathrm{tf}}
+ ({}^{\gamma}  R_{AB})_{\mathrm{tf}}
&\overset{N_2^+}{=}&0
\,.
\end{eqnarray*}

%
%

Next, we evaluate the KID equations \eq{KID_deg_hor1}-\eq{KID_deg_hor10} in this setting,
omitting the ---for our purposes  irrelevant--- 2nd-order ODEs  for $\zeta^u|_{\mathcal{H}^+_1}$ and  $\zeta^v|_{N^+_2}$,
%
\begin{eqnarray}
\zeta_u  &\overset{\mathcal{H}^+_1}{=}& 0
\,,
\quad
 \zeta_{A}\,\overset{\mathcal{H}^+_1}{=}\,0
\,,
\label{KID_non_bifurcate1}
\\
\zeta_v  &\overset{N_2^+}{=}& \overset{(0)}{\zeta }(x^A)
\,,
\\
\zeta_A&\overset{N_2^+}{=}&
 vD_A \overset{(0)}{\zeta } -\theta_S v\frac{ 2(n-1)  + \theta_S v}{n-1}D_A(\theta_S ^{-1}\overset{(0)}{\zeta })
\,,
\\
0 &\overset{N_2^+}{=}& \Big(D_{(A}\zeta_{B)} -\frac{1}{2}{}^{(2)}\Xi_{AB}\overset{(0)}{\zeta }- \frac{2v}{n-1} \Big(1 + \frac{\theta_S}{n-1} v\Big)^{-1} \zeta_{(A}D_{B)}\theta_S \Big)_{\mathrm{tf}}
\,,
\\
\partial_{(u}\zeta_{v)} &\overset{S}{=}&\partial_u g_{uv}\overset{(0)}{\zeta}
\,,
\label{KID_non_bifurcate8}
\\
\Delta_{\gamma}(\theta_S^{-1} \overset{(0)}{\zeta })
+ \kappa
&\overset{S}{=}&  \frac{1}{2 }\Big({}^{\gamma} R - 2\Lambda\Big)(\theta_S^{-1}\overset{(0)}{\zeta })
\,,
\label{KID_non_bifurcate9}
\\
\partial_{[u}\zeta_{v]}   &\overset{S}{=}&\kappa
\,, \quad \kappa=\mathrm{const.}
\label{KID_non_bifurcate10}
\end{eqnarray}
%
%
In fact, we observe that the above system admits a solution if and only if there exists a function
$\overset{(0)}{\zeta }$ on $S$ and a constant $\kappa$ such that
\begin{eqnarray}
0 &\overset{N_2^+}{=}& \Big(D_{(A}\zeta_{B)} -\frac{1}{2}{}^{(2)}\Xi_{AB}\overset{(0)}{\zeta }- \frac{2v}{n-1} \Big(1 + \frac{\theta_S}{n-1} v\Big)^{-1} \zeta_{(A}D_{B)}\theta_S \Big)_{\mathrm{tf}}
\,,
\label{crucial_eqn}
\\
\Delta_{\gamma}(\theta_S^{-1} \overset{(0)}{\zeta })
+ \kappa
&\overset{S}{=}&  \frac{1}{2 }\Big({}^{\gamma} R - 2\Lambda\Big)(\theta_S^{-1}\overset{(0)}{\zeta })
\,,
\end{eqnarray}
where
\begin{eqnarray}
\zeta_A&\overset{N_2^+}{=}&
 vD_A \overset{(0)}{\zeta} -\theta_S v\frac{ 2(n-1)  + \theta_S v}{n-1}D_A(\theta_S ^{-1}\overset{(0)}{\zeta })
\,.
\label{expr_zeta_A}
\end{eqnarray}
The main obstrution comes from \eq{crucial_eqn}.
Note that it is automatically satisfied on $S$.
We want to compute its $v$-derivative on $S$. For this note that it follows from the constraint equations
and \eq{expr_zeta_A}
that
\begin{eqnarray*}
\partial_{v}  \zeta_{A} &\overset{S}{=}&
- D_{A}\overset{(0)}{\zeta} +  \frac{2 }{\theta_S}\overset{(0)}{\zeta}D_A\theta_S
\,,
\\
\partial_{v}({}^{(2)}\Xi_{AB})_{\mathrm{tr}}
&\overset{S}{=}&
\frac{2}{\theta_S}( D_{A} D_{B}\theta_S)_{\mathrm{tf}}
-({}^{\gamma} R_{AB})_{\mathrm{tf}}
\,,
\end{eqnarray*}
whence we find for the $v$-derivative of \eq{crucial_eqn} on $S$,
\begin{equation}
\Big(D_{A}D_{B}(\theta_S^{-1}\overset{(0)}{\zeta })
\Big)_{\mathrm{tf}}\overset{S}{=} \frac{1}{2} (\theta_S^{-1}\overset{(0)}{\zeta })({}^{\gamma} R_{AB})_{\mathrm{tf}}
\,.
\label{cond_KV}
\end{equation}
Similarly, from
\begin{eqnarray*}
\partial_{v}^2 \zeta_{A}    &\overset{S}{=}&
- \frac{2}{n-1}\theta_S ^2D_A(\theta_S ^{-1}\overset{(0)}{\zeta })
\\
\partial_{v}^2({}^{(2)}\Xi_{AB})_{\mathrm{tr}}
&\overset{S}{=}&
-\frac{8   }{\theta_S(n-1)}( D_{A}\theta_SD_{B}\theta_S)_{\mathrm{tf}}
+\frac{n-5}{2(n-1)} \theta_S({}^{\gamma} R_{AB})_{\mathrm{tf}}
\,,
\end{eqnarray*}
we find with   \eq{cond_KV}  the equation $\overset{(0)}{\zeta }({}^{\gamma} R_{AB})_{\mathrm{tf}} =0$, for the 2nd-order $v$-derivative  of \eq{crucial_eqn}. As we are interested in MKHs we require the existence of a solution
where $\overset{(0)}{\zeta}\ne 0$ whence this condition becomes
\begin{equation}
({}^{\gamma} R_{AB})_{\mathrm{tf}} =0
\,.
\label{cond_KV2}
\end{equation}
Using this, we find from
\begin{eqnarray*}
\partial_{v}^3 \zeta_{A}    &\overset{S}{=}&
0
\,,
\\
\partial_{v}^3({}^{(2)}\Xi_{AB})_{\mathrm{tr}}
&\overset{S}{=}&
\frac{24  }{(n-1)^2}( D_{A}\theta_SD_{B}\theta_S)_{\mathrm{tf}}
\,,
\end{eqnarray*}
that  the 3rd-order $v$-derivative of \eq{crucial_eqn} yields the following condition
\begin{equation}
\Big(  D_{(A}\theta_SD_{B)}(\theta_S ^{-1}\overset{(0)}{\zeta }) \Big)_{\mathrm{tf}}\overset{S}{=} 0
\,.
\label{cond_KV3}
\end{equation}
Let us analyze to what extent the conditions obtained so far are already sufficient, i.e.\ we consider again \eq{crucial_eqn}.
We observe that with \eq{cond_KV2} the solution to the $({}^{(2)}\Xi_{AB})_{\mathrm{tf}}$-constraint reads
\begin{equation*}
(^{(2)}\Xi_{AB})_{\mathrm{tf}} \overset{N_2^+}{=}\Big(\frac{2 v}{\theta_S} D_{A}D_{B}\theta_S
-\frac{4v^2}{\theta_S(n-1+\theta_S v)}D_{A}\theta_SD_{B}\theta_S\Big)_{\mathrm{tf}}
\,.
\end{equation*}
Inserting this and \eq{expr_zeta_A} into \eq{crucial_eqn} and using \eq{cond_KV}-\eq{cond_KV3} we find
that \eq{crucial_eqn} is fulfilled on  $N_2^+$.
This accomplishes the proof of Proposition~\ref{prop_KID_equations5} (where we have set  $f:=\theta_S^{-1}\overset{(0)}{\zeta }$).

\end{document}